
\synctex=1

\documentclass[reqno]{amsart}
\usepackage[margin=1.6in,bottom=1.5in]{geometry}		


\usepackage{amsmath}		
\usepackage{amssymb}		
\usepackage{amsfonts}		
\usepackage{amsthm}		
\usepackage[foot]{amsaddr}		

\usepackage[centercolon=true]{mathtools}		

\mathtoolsset{%
}

\usepackage[utf8]{inputenc}		
\usepackage[T1]{fontenc}		

\usepackage[
cal=cm,
]
{mathalfa}


\usepackage{dsfont}		



\usepackage[sf,mono=false]{libertine}



\usepackage{acronym}		
\newcommand{\acli}[1]{\emph{\acl{#1}}}		
\newcommand{\acdef}[1]{\emph{\acl{#1}} \textup{(\acs{#1})}\acused{#1}}		
\newcommand{\acdefp}[1]{\emph{\aclp{#1}} \textup{(\acsp{#1})}\acused{#1}}	

\usepackage[labelfont={bf,small},labelsep=colon,font=small]{caption}	
\captionsetup[algorithm]{labelfont=bf,labelsep=colon}		
\usepackage{subcaption}		
\captionsetup[sub]{labelfont=bf}
\captionsetup[subcaption]{labelfont=bf,justification=centering}		

\usepackage[dvipsnames,svgnames]{xcolor}		
\colorlet{MyRed}{FireBrick!50!Crimson}
\colorlet{MyBlue}{DodgerBlue!75!black}
\colorlet{MyGreen}{DarkGreen!85!black}
\colorlet{MyViolet}{DarkMagenta}

\colorlet{MyLightBlue}{DodgerBlue!20}
\colorlet{MyLightGreen}{MyGreen!20}

\colorlet{PrimalColor}{MyBlue}
\colorlet{PrimalFill}{MyLightBlue}
\colorlet{DualColor}{MyRed}

\colorlet{RevColor}{BlueViolet}
\colorlet{LinkColor}{MediumBlue}


\newcommand{\afterhead}{.\;}		
\newcommand{\para}[1]{\smallskip\paragraph{\textbf{#1\afterhead}}}

\usepackage{cancel}		
\usepackage{latexsym}		

\usepackage{pifont}		


\usepackage{tikz}		
\usepackage{tikz-cd}		
\usetikzlibrary{calc,patterns}		

\usepackage{array}		
\usepackage{booktabs}		
\usepackage[inline,shortlabels]{enumitem}		
\setenumerate{itemsep=\smallskipamount,topsep=\smallskipamount,left=\parindent}
\setitemize{itemsep=\smallskipamount,topsep=\smallskipamount,left=\parindent}

\usepackage[kerning=true]{microtype}		

\usepackage{tabto}		
\usepackage{xspace}		


\usepackage[sort&compress]{natbib}		

\bibpunct[, ]{[}{]}{,}{n}{,}{,}

\usepackage{hyperref}
\hypersetup{
final,
colorlinks=true,
linktocpage=true,
pdfstartview=FitH,
breaklinks=true,
pdfpagemode=UseNone,
pageanchor=true,
pdfpagemode=UseOutlines,
plainpages=false,
bookmarksnumbered,
bookmarksopen=false,
bookmarksopenlevel=1,
hypertexnames=false,
pdfhighlight=/O,
urlcolor=LinkColor,linkcolor=LinkColor,citecolor=LinkColor,	
pdftitle={},
pdfauthor={},
pdfsubject={},
pdfkeywords={},
pdfcreator={pdfLaTeX},
pdfproducer={LaTeX with hyperref}
}

\newcommand{\EMAIL}[1]{\email{\href{mailto:#1}{#1}}}

\usepackage[sort&compress,capitalize,nameinlink]{cleveref}		
\crefname{algo}{Algorithm}{Algorithms}
\crefname{assumption}{Assumption}{Assumptions}
\crefname{case}{Case}{Cases}



\usepackage{algpseudocode}		

\usepackage{thmtools}		
\usepackage{thm-restate}		

\theoremstyle{plain}
\newtheorem{corollary}{Corollary}		
\newtheorem{lemma}{Lemma}		
\newtheorem{proposition}{Proposition}		
\newtheorem{proofstep}{Step}		


\newtheorem*{theorem*}{Theorem}		
\newtheorem*{corollary*}{Corollary}		

\theoremstyle{definition}
\newtheorem{definition}{Definition}		
\newtheorem{algo}{Algorithm}		
\newtheorem{example}{Example}		

\newtheorem*{definition*}{Definition}		
\newtheorem*{assumption*}{Assumptions}		
\newtheorem*{example*}{Example}		


\theoremstyle{remark}

\newtheorem*{remark*}{Remark}		
\newtheorem*{notation*}{Notation}		

\def\endrem{\hfill{\small$\lozenge$}\smallskip}		
\def\endenv{\hfill{\small$\lozenge$}\smallskip}		


\newcounter{proofpart}

\numberwithin{example}{section}		

\usepackage[showdeletions]{color-edits}		
\setlength{\marginparwidth}{1.5cm}
\newcommand{\debug}[1]{#1}		



\newcommand{\explain}[1]{\tag*{{\sf\#}\:#1}}

\newcommand{\newmacro}[2]{\newcommand{#1}{\debug{#2}}}		
\newcommand{\newop}[2]{\DeclareMathOperator{#1}{\debug{#2}}}		


\DeclarePairedDelimiter{\braces}{\{}{\}}		
\DeclarePairedDelimiter{\bracks}{[}{]}		
\DeclarePairedDelimiter{\parens}{(}{)}		

\DeclarePairedDelimiter{\abs}{\lvert}{\rvert}		
\DeclarePairedDelimiter{\clip}{[}{]}		
\DeclarePairedDelimiter{\pospart}{[}{]_{+}}		

\DeclarePairedDelimiter{\setof}{\{}{\}}		
\DeclarePairedDelimiterX{\setdef}[2]{\{}{\}}{#1:#2}		
\DeclarePairedDelimiterXPP{\exclude}[1]{\mathopen{}\setminus}{\{}{\}}{}{#1}		

\newcommand{\Q}{\mathbb{Q}}		
\newcommand{\R}{\mathbb{R}}		

\DeclareMathOperator*{\argmax}{arg\,max}		
\DeclareMathOperator*{\argmin}{arg\,min}		
\DeclareMathOperator*{\intersect}{\bigcap}		
\DeclareMathOperator*{\union}{\bigcup}		

\DeclareMathOperator{\littleoh}{o}		
\DeclareMathOperator{\bigoh}{\mathcal{O}}		
\DeclareMathOperator{\cl}{cl}		
\DeclareMathOperator{\dist}{dist}		
\DeclareMathOperator{\dom}{dom}		
\DeclareMathOperator{\im}{im}		
\DeclareMathOperator{\one}{\mathds{1}}		
\DeclareMathOperator{\relint}{ri}		
\DeclareMathOperator{\supp}{supp}		
\DeclareMathOperator{\unif}{unif}		

\newcommand{\cf}{cf.\xspace}		
\newcommand{\eg}{e.g.,\xspace}		
\newcommand{\ie}{i.e.,\xspace}		
\newcommand{\viz}{viz.\xspace}		

\newcommand{\textpar}[1]{\textup(#1\textup)}		

\newcommand{\txs}{\textstyle}		

\newcommand{\alt}[1]{#1'}		
\newcommand{\altalt}[1]{#1''}		

\newmacro{\dd}{\:d}		
\newcommand{\eps}{\varepsilon}		

\newcommand{\insum}{\sum\nolimits}		

\newmacro{\const}{c}		
\newmacro{\Const}{C}		
\newmacro{\coefalt}{\mu}		
\usepackage{xparse}
\NewDocumentCommand{\coef}{O{\lambda}}{\debug{#1}}
\newmacro{\param}{\theta}		
\newmacro{\params}{\Theta}		

\newmacro{\pexp}{p}		
\newmacro{\qexp}{q}		
\newmacro{\rexp}{r}		

\newmacro{\stepexp}{\ell_{\step}}		
\newmacro{\mixexp}{\ell_{\mix}}		
\newmacro{\biasexp}{\ell_{\bias}}		
\newmacro{\noisexp}{\ell_{\sdev}}		

\newmacro{\radius}{r}


\newmacro{\beforestart}{0}		
\newmacro{\start}{1}		
\newmacro{\afterstart}{2}		
\newmacro{\running}{\start,\afterstart,\dotsc}		
\newmacro{\halfrunning}{1,3/2,2\dotsc}		

\newmacro{\run}{n}		
\newmacro{\runalt}{k}		
\newmacro{\runaltalt}{m}		
\newmacro{\nRuns}{T}		
\newmacro{\runs}{\mathcal{\nRuns}}		

\newmacro{\state}{X}		
\newmacro{\statealt}{y}		
\newmacro{\statealtalt}{z}		


\newcommand{\init}[1][\state]{\debug{#1}_{\start}}		

\newcommand{\iter}[1][\state]{\debug{#1}_{\runalt}}		

\newcommand{\prev}[1][\state]{\debug{#1}_{\run-1}}		
\newcommand{\curr}[1][\state]{\debug{#1}_{\run}}		
\newcommand{\curralt}[1][\state]{\debug{#1}_{\runalt}}         
\renewcommand{\next}[1][\state]{\debug{#1}_{\run+1}}		

\newcommand{\beforelead}[1][\state]{\debug{\tilde#1}_{\run-1}}		
\newcommand{\lead}[1][\state]{\debug{\tilde#1}_{\run}}		


\newmacro{\tstart}{0}		
\newmacro{\timealt}{s}		
\newmacro{\horizon}{T}		

\newmacro{\traj}{x}		
\newmacro{\trajalt}{y}		
\newmacro{\trajaltalt}{z}		

\newmacro{\flow}{\phi}		
\DeclarePairedDelimiterXPP{\flowof}[2]{\flow_{#1}}{(}{)}{}{#2}		

\newop{\Nash}{NE}		
\newop{\CE}{CE}		
\newop{\CCE}{CCE}		
\newop{\NI}{NI}		

\newop{\brep}{\mathtt{br}}		
\newop{\btr}{\mathtt{btr}}		
\newop{\reg}{Reg}		
\newop{\preg}{\overline{Reg}}		
\newop{\val}{val}		

\newmacro{\strat}{x}		
\newmacro{\stratalt}{\alt\strat}		
\newmacro{\strats}{\mathcal{X}}		
\newmacro{\intstrats}{\relint\strats}		
\newcommand{\eq}{\sol[\strat]}		
\newmacro{\eqs}{\sol[\strats]}		

\newmacro{\corr}{z}		
\newmacro{\corralt}{\alt\corr}		
\newmacro{\corrs}{\mathcal{Z}}		

\newmacro{\play}{i}		
\newmacro{\playalt}{j}		
\newmacro{\playaltlalt}{k}		
\newmacro{\nPlayers}{N}		
\newmacro{\players}{\mathcal{\nPlayers}}		

\newmacro{\pure}{\alpha}		
\newmacro{\purealt}{\alt\pure}		
\newmacro{\purealtalt}{\gamma}		
\newmacro{\pureq}{\sol[\pure]}		
\newmacro{\nPures}{A}		
\newmacro{\pures}{\mathcal{\nPures}}		

\newmacro{\loss}{\ell}		
\newmacro{\pay}{u}		
\newmacro{\payv}{v}		
\newmacro{\pot}{f}		

\newmacro{\game}{\mathcal{G}}		
\newmacro{\gamefull}{\game(\players,\points,\pay)}		

\newmacro{\fingame}{\Gamma}		
\newmacro{\fingamefull}{\Gamma(\players,\pures,\pay)}		
\newmacro{\mixgame}{\Delta(\fingame)}		

\newmacro{\gmat}{g}		
\newmacro{\gdist}{\dist_{\gmat}}
\newmacro{\mfld}{M}		
\newmacro{\form}{\omega}		

\newmacro{\tvec}{z}		
\newmacro{\tvecs}{\mathcal{Z}}		
\newmacro{\uvec}{u}		

\newmacro{\ball}{\basin}		
\newmacro{\sphere}{\mathbb{S}}		

\newmacro{\graph}{\mathcal{G}}
\newmacro{\vertices}{\mathcal{V}}
\newmacro{\edges}{\mathcal{E}}

\newmacro{\mat}{M}		
\newmacro{\jmat}{J}		
\newmacro{\hmat}{H}		

\newop{\row}{row}		
\newop{\col}{col}		

\newmacro{\ones}{\mathbf{1}}		
\newmacro{\eye}{I}		
\newmacro{\zer}{\mathbf{0}}		


\DeclarePairedDelimiter{\norm}{\lVert}{\rVert}		
\DeclarePairedDelimiterXPP{\dnorm}[1]{}{\lVert}{\rVert}{_{\infty}}{#1}		

\DeclarePairedDelimiterXPP{\onenorm}[1]{}{\lVert}{\rVert}{_{1}}{#1}		
\DeclarePairedDelimiterXPP{\twonorm}[1]{}{\lVert}{\rVert}{_{2}}{#1}		
\DeclarePairedDelimiterXPP{\supnorm}[1]{}{\lVert}{\rVert}{_{\infty}}{#1}		
\DeclarePairedDelimiterXPP{\pnorm}[1]{}{\lVert}{\rVert}{_{\pexp}}{#1}
\DeclarePairedDelimiterXPP{\qnorm}[1]{}{\lVert}{\rVert}{_{\qexp}}{#1}

\DeclarePairedDelimiterX{\braket}[2]{\langle}{\rangle}{#1,#2}		

\DeclarePairedDelimiterX{\inner}[2]{\langle}{\rangle}{#1,#2}		

\newmacro{\vecspace}{\mathcal{V}}		
\newmacro{\subspace}{\mathcal{W}}		

\newmacro{\coord}{i}		
\newmacro{\coordalt}{j}		
\newmacro{\coordaltalt}{k}		
\newmacro{\nCoords}{n}		
\newmacro{\dims}{\nCoords}		
\newmacro{\vdim}{\nCoords}		

\newmacro{\pvec}{z}		
\newmacro{\pvecalt}{r}		
\newmacro{\bvec}{e}		
\newmacro{\bvecs}{\mathcal{E}}		
\newmacro{\cvec}{b}     
\newmacro{\cvecalt}{d}     

\newmacro{\pspace}{\vecspace}		
\newmacro{\dspace}{\vecspace^{\ast}}		

\newmacro{\dvec}{\dpoint}		
\newmacro{\dbvec}{\eps}		

\newmacro{\dpoint}{y}		
\newmacro{\dpointalt}{\alt\dpoint}		
\newmacro{\dpointaltalt}{\altalt\dpoint}		
\newmacro{\dpoints}{\mathcal{Y}}		

\newmacro{\dstate}{Y}		
\newmacro{\dbase}{v}		

\newcommand{\defeq}{\coloneqq}		

\newcommand{\from}{\colon}		
\newcommand{\too}{\rightrightarrows}		

\newop{\Opt}{Opt}		
\newop{\Sol}{Sol}		
\newop{\gap}{Gap}		

\newmacro{\orcl}{V}		
\newmacro{\err}{Z}		
\newop{\IWE}{IWE}		

\newmacro{\tfun}{f}		
\newmacro{\obj}{f}		
\newmacro{\objalt}{g}		
\newmacro{\sobj}{F}		

\newmacro{\gvec}{g}		
\newmacro{\oper}{A}		
\newmacro{\vecfield}{v}		

\newcommand{\sol}[1][\point]{#1^{\ast}}		
\newmacro{\solvec}{\vecfield(\sol)}		
\newmacro{\solpay}{\eq[\payv]}		


\newmacro{\signal}{\est\payv}		
\newmacro{\step}{\gamma}		
\newmacro{\learn}{\eta}		

\newmacro{\vbound}{G}		
\newmacro{\lips}{L}		
\newmacro{\strong}{\mu}		
\newmacro{\smooth}{\beta}		

\newop{\tcone}{TC}		
\newop{\dcone}{\tcone^{\ast}}		
\newop{\ncone}{NC}		
\newop{\pcone}{PC}		
\newop{\hull}{\Delta}		

\newmacro{\cvx}{\mathcal{C}}		
\newmacro{\subd}{\partial}		

\newmacro{\minmax}{\mathcal{L}}		

\newmacro{\minvar}{\point_{1}}		
\newmacro{\minvaralt}{\alt\minvar}		
\newmacro{\minvars}{\points_{1}}		
\newmacro{\minsol}{\sol[\minvar]}		

\newmacro{\maxvar}{\point_{2}}		
\newmacro{\maxvaaltr}{\alt\maxvar}		
\newmacro{\maxvars}{\points_{2}}		
\newmacro{\maxsol}{\sol[\maxvar]}		

\newop{\Eucl}{\Pi}		
\newop{\logit}{\Lambda}		
\newop{\dkl}{KL}		

\newmacro{\hreg}{h}		
\newmacro{\hconj}{\hreg^{\ast}}		
\newmacro{\breg}{D}		
\newmacro{\mprox}{P}		
\newmacro{\mirror}{Q}		
\newmacro{\fench}{F}		
\newmacro{\depth}{H}		
\newmacro{\hstr}{K}		
\newmacro{\hker}{\theta}		

\newmacro{\proxdom}{\strats_{\hreg}}		
\newmacro{\proxdomi}{\strats_{\hreg_{\play}}}		
\newmacro{\zone}{\mathbb{D}}		

\DeclarePairedDelimiterXPP{\proxof}[2]{\mprox_{#1}}{(}{)}{}{#2}		


\newmacro{\point}{p}		
\newmacro{\pointalt}{q}		
\newmacro{\pointaltalt}{\altalt\point}		
\newmacro{\points}{\mathcal{X}}		
\newmacro{\intpoints}{\relint\points}		

\newmacro{\base}{p}		
\newmacro{\basealt}{q}		
\newmacro{\basealtalt}{u}		

\newmacro{\set}{\mathcal{S}}		
\newmacro{\open}{\mathcal{U}}		
\newmacro{\closed}{\mathcal{C}}		
\newmacro{\cpt}{\mathcal{K}}		
\newmacro{\nhd}{\mathcal{U}}		
\newmacro{\nhdalt}{\alt\nhd}		

\newop{\ex}{\mathbb{E}}		
\newop{\prob}{\mathbb{P}}		
\newop{\Var}{Var}		
\newop{\simplex}{\hull}		
\renewcommand{\P}{{\mathbb{P}}} 

\providecommand\given{}		

\DeclarePairedDelimiterXPP{\exwrt}[2]{\ex_{#1}}{[}{]}{}{
\renewcommand\given{\nonscript\:\delimsize\vert\nonscript\:\mathopen{}} #2}

\DeclarePairedDelimiterXPP{\exof}[1]{\ex}{[}{]}{}{
\renewcommand\given{\nonscript\,\delimsize\vert\nonscript\,\mathopen{}} #1}

\DeclarePairedDelimiterXPP{\probof}[1]{\prob}{(}{)}{}{
\renewcommand\given{\nonscript\:\delimsize\vert\nonscript\:\mathopen{}} #1}

\DeclarePairedDelimiterXPP{\oneof}[1]{\one}{\{}{\}}{}{
\renewcommand\given{\nonscript\,\delimsize\vert\nonscript\,\mathopen{}} #1}

\newmacro{\sample}{\omega}		
\newmacro{\samples}{\Omega}		

\newmacro{\seed}{\theta}		
\newmacro{\seeds}{\Theta}		

\newmacro{\filter}{\mathcal{F}}		
\newmacro{\probspace}{(\samples,\filter,\prob)}		

\newmacro{\history}{\mathcal{H}}		

\newcommand{\as}{\debug{\textpar{a.s.}}\xspace}		
\newmacro{\event}{E}       
\newmacro{\eventalt}{H}       

\newmacro{\mean}{\mu}		
\newmacro{\sdev}{\sigma}		
\newmacro{\variance}{\sdev^{2}}		
\newmacro{\aggnoise}{\mathrm{\uppercase\expandafter{\romannumeral1}}}		
\newmacro{\supnoise}{\aggnoise_{\infty}}		
\newmacro{\maxnoise}{\aggnoise^{\ast}}		

\newmacro{\aggbias}{\mathrm{\uppercase\expandafter{\romannumeral2}}}		
\newmacro{\supbias}{\aggbias_{\infty}}		
\newmacro{\maxbias}{\aggbias^{\ast}}		

\newmacro{\second}{\psi}		
\newmacro{\aggsecond}{\mathrm{\uppercase\expandafter{\romannumeral3}}}		
\newmacro{\supsecond}{\aggsecond_{\infty}}		
\newmacro{\maxsecond}{\aggsecond^{\ast}}		

\newmacro{\submart}{S}      
\newmacro{\toterr}{R}      

\newmacro{\thres}{\eps}		
\newmacro{\conf}{\eta}		
\newmacro{\stoptime}{N}		

\newmacro{\mart}{M}  
\newmacro{\martbd}{\sigma}     

\newcommand{\est}[1]{\hat #1}		

\newmacro{\efftime}{\tau}		
\newmacro{\seq}{Z}		
\newcommand{\apt}{\seq}		
\DeclarePairedDelimiterXPP{\aptof}[1]{\apt}{(}{)}{}{#1}		

\newmacro{\error}{Z}		
\newmacro{\noise}{U}		
\newmacro{\bias}{b}		
\newmacro{\brown}{W}		

\newmacro{\serror}{\theta}		
\newmacro{\snoise}{\xi}		
\newmacro{\sbias}{\psi}		

\newmacro{\sbound}{M}		
\newmacro{\bbound}{B}		
\newmacro{\noisepar}{\sdev}		
\newmacro{\noisevar}{\variance}		

\newmacro{\mix}{\delta}		
\newmacro{\unitvec}{w}		
\newmacro{\unitvar}{W}		
\newmacro{\perturb}{z}		

\newmacro{\purequery}{\est\pure}		
\newmacro{\query}{\est\state}		
\newmacro{\pivot}{\point}		
\newmacro{\querypoint}{\est\point}		


\addauthor[Victor]{VB}{red}

\addauthor[Pan]{PM}{MediumBlue}

\newmacro{\pdist}{P}		

\newmacro{\basin}{\mathcal{B}}		
\newmacro{\dbasin}{\mathcal{D}}		
\newmacro{\limset}{\mathcal{L}}		

\newmacro{\pureset}{\mathcal{C}}		
\newmacro{\prodset}{\mathcal{P}}		
\newmacro{\faces}{\mathcal{P}(\strats)}		

\newmacro{\score}{y}		
\newmacro{\scorealt}{\alt\score}		
\newmacro{\scores}{\mathcal{Y}}		

\newmacro{\pflow}{\chi}		
\DeclarePairedDelimiterXPP{\pflowof}[2]{\pflow_{#1}}{(}{)}{}{#2}		

\newmacro{\dflow}{\psi}		
\DeclarePairedDelimiterXPP{\dflowof}[2]{\dflow_{#1}}{(}{)}{}{#2}		

\newmacro{\rate}{\varphi}	
\newmacro{\hinv}{w}	
\newmacro{\totinv}{W}	

\newmacro{\limpoint}{\hat\strat}		

\newmacro{\playI}{R}		
\newmacro{\playII}{C}		

\newop{\probalt}{\mathbb{Q}}		
\newmacro{\good}{\event}
\newmacro{\bad}{\mathcal{N}}
\newmacro{\lyap}{\Phi}
\newmacro{\imlyap}{\Psi}
\newmacro{\gauge}{\kappa}

\newmacro{\energy}{E}
\newmacro{\ediff}{\Delta\energy}
\newmacro{\elvl}{a}
\newmacro{\ebound}{2}
\newmacro{\esmooth}{\beta}
\newmacro{\emax}{\energy_{\ast}}
\newmacro{\hien}{\energy_{+}}
\newmacro{\loen}{\energy_{-}}

\newmacro{\texp}{\alpha}		

\begin{document}


\title
[Strategic stability under regularized learning]
{The equivalence of dynamic and strategic stability\\under regularized learning in games}

\author
[V.~Boone]
{Victor Boone$^{\ast}$}
\address{$^{\ast}$\,%
Univ. Grenoble Alpes, CNRS, Inria, Grenoble INP, LIG, 38000 Grenoble, France.}
\EMAIL{victor.boone@univ-grenoble-alpes.fr}
\author
[P.~Mertikopoulos]
{Panayotis Mertikopoulos$^{\ast}$}
\EMAIL{panayotis.mertikopoulos@imag.fr}

\subjclass[2020]{%
Primary 91A10, 91A26;
secondary 68Q32, 62L20.}

\keywords{%
Regularized learning;
strategic stability;
asymptotic stability;
closedness under better replies;
resilience.}

\newacro{LHS}{left-hand side}
\newacro{RHS}{right-hand side}

\newacro{BDG}{Burkholder\textendash Davis\textendash Gundy}
\newacro{iid}[i.i.d.]{independent and identically distributed}
\newacro{lsc}[l.s.c.]{lower-semicontinuous}
\newacro{whp}[w.h.p]{with high probability}
\newacro{wp1}[w.p.$1$]{with probability $1$}

\newacro{APT}{asymptotic pseudotrajectory}
\newacroplural{APT}{asymptotic pseudotrajectories}
\newacro{ICT}{internally chain transitive}
\newacroplural{ICT}{internally chain transitive sets}
\newacro{SA}{stochastic approximation}
\newacro{ODE}{ordinary differential equation}

\newacro{CCE}{coarse correlated equilibrium}
\newacroplural{CCE}[CCE]{coarse correlated equilibria}
\newacro{NE}{Nash equilibrium}
\newacroplural{NE}[NE]{Nash equilibria}
\newacro{VI}{variational inequality}
\newacroplural{VI}[VIs]{variational inequalities}

\newacro{FTRL}{``follow-the-regularized-leader''}
\newacro{FTGL}{``follow-the-generalized-leader''}
\newacro{OFTRL}[Opt-FTRL]{optimistic \acs{FTRL}}
\newacro{BFTRL}[B-FTRL]{bandit \acs{FTRL}}
\newacro{EW}[\textsc{Hedge}]{exponential\,/\,multiplicative weights}
\newacro{MWU}{multiplicative weights update}
\newacro{CMW}{clairvoyant multiplicative weights}
\newacro{OMW}{optimistic multiplicative weights}
\newacro{IWE}{importance-weighted estimator}
\newacro{EXP3}{exponential weights algorithm for exploration and exploitation}
\newacro{INF}{implicitly normalized forecaster}
\newacro{Tsallis}[\textsc{Tsallis-INF}]{Tsallis \acl{INF}}
\newacro{MD}{mirror descent}
\newacro{MP}{mirror-prox}
\newacro{OMD}{optimistic mirror descent}

\newacro{curb}{closed under rational behavior}
\newacro{closed}[club]{closed under better replies}
\newacro{minimal}[m-club]{minimally \acs{closed}}
\newacro{KKT}{Karush\textendash Kuhn\textendash Tucker}
\newacro{GFO}{generalized first-order oracle}
\newacro{SFO}{stochastic first-order oracle}
\newacro{RM}{Robbins\textendash Monro}
\newacro{RLD}{regularized learning dynamics}
\newacro{RSA}{regularized stochastic approximation}
\newacro{RRM}{regularized Robbins\textendash Monro}

\newacro{method}[RL]{regularized learning}

\begin{abstract}
%
%
In this paper, we examine the long-run behavior of regularized, no-regret learning in finite games.
A well-known result in the field states that the empirical frequencies of no-regret play converge to the game's set of coarse correlated equilibria;
however, our understanding of how the players' \emph{actual} strategies evolve over time is much more limited \textendash\ and, in many cases, non-existent.
This issue is exacerbated by a series of recent results showing that \emph{only} strict Nash equilibria are stable and attracting under regularized learning, thus making the relation between learning and pointwise solution concepts particularly elusive.
In lieu of this, we take a more general approach and instead seek to characterize the \emph{setwise} rationality properties of the players' day-to-day play.
To that end, we focus on one of the most stringent criteria of setwise strategic stability, namely that any unilateral deviation from the set in question incurs a cost to the deviator \textendash\ a property known as \emph{closedness under better replies} (club).
In so doing, we obtain a far-reaching equivalence between strategic and dynamic stability:
\emph{a product of pure strategies is closed under better replies if and only if its span is stable and attracting under regularized learning.}
In addition, we estimate the rate of convergence to such sets, and we show that methods based on entropic regularization (like the exponential weights algorithm) converge at a geometric rate, while projection-based methods converge within a \emph{finite} number of iterations, even with bandit, payoff-based feedback.
\end{abstract}

\maketitle

\allowdisplaybreaks		
\acresetall		

\section{Introduction}
\label{sec:introduction}


The question of whether players can learn to emulate rational behavior through repeated interactions has been one of the mainstays of non-cooperative game theory, and it has recently gained increased momentum owing to a surge of breakthrough applications to machine learning and data science, from online ad auctions to multi-agent reinforcement learning.
Informally, this question can be stated as follows:
\begin{center}
\itshape
If every player follows an iterative procedure aiming to increase their individual payoff,\\
does the players' long-run behavior converge to a rationally admissible state?
\end{center}

A natural setting for studying this question is to assume that each player is following a no-regret algorithm, \ie a policy which is asymptotically as good against a given sequence of payoff functions as the best fixed strategy in hindsight.
In this framework, the link between learning and rationality is provided by a folk result which states that, under no-regret learning, the empirical frequency of play converges to the game's set of \acdefp{CCE} \textendash\ also known as the game's \emph{Hannan set} \cite{Han57}.
This result has been of seminal importance to the field because no-regret play can be achieved via a wide class of ``regularized learning'' policies, as exemplified by the \acdef{FTRL} family of algorithms \cite{SSS06,SS11} and its variants \textendash\ optimistic methods \cite{RS13-NIPS,RS13-COLT,SALS15,DP19,HIMM19}, \acs{EW}\,/\,\acs{EXP3} \cite{ACBFS02,ACBFS95,CBL06,BCB12}, implicitly normalized forecasters \cite{AB10,ALT15}, etc.

All these policies have (at least) one thing in common:
they seek to provide the tightest possible guarantees for each player's individual regret, thus accelerating convergence to the game's Hannan set.
As such, in games where the marginalization of \aclp{CCE} coincides with the game's \aclp{NE} (like two-player zero-sum games), we obtain a positive equilibrium convergence guarantee:
the long-run empirical frequency of play evolves ``as if'' the players were rational to begin with \textendash\ \ie as if they had full knowledge of the game,  common knowledge of rationality, the ability to communicate this knowledge, etc.
On the other hand,
the marginals of Hannan-consistent correlated strategies
may fail even the weakest axioms of rationalizability (such as the elimination of strictly dominated strategies).
In particular, a well-known example of \citet{VZ13} (which we discuss in detail in \cref{sec:resilience}) shows that it is possible to have \emph{negative regret} for all time, but still employ \emph{only strictly dominated strategies} throughout the entire horizon of play.

The reason for this disconnect is that no-regret play has significant predictive power for the empirical frequency of play \textendash\ that is, the long-run empirical distribution of pure strategy \emph{profiles} \textendash\ but much less so for the players' day-to-day sequence of play \textendash\ \ie the evolution of the players' \emph{actual} mixed strategies over time.
In particular, even when the marginalization of the Hannan set is Nash, the actual trajectory of play may \textendash\ and, in fact, often \emph{does} \textendash\ diverge away from the game's set of equilibria \cite{DISZ18,GBVV+19,MLZF+19,MPP18,MZ19} or exhibits chaotic, unpredictable oscillations \cite{PPP17,CFMP20}.
Thus, especially in the context of regularized learning \textendash\ where players learn \emph{independently} from one another, with no common correlating device \textendash\ the blanket guarantee of no-regret play may quickly become irrelevant,
providing the veneer of rational behavior but not the substance.

Motivated by the above, our paper seeks to understand the rationality properties of the players' \emph{actual} sequence of play under regularized learning, as encoded by the following question:
\begin{center}
\centering
\itshape
Which sets of mixed strategies are stable and attracting under regularized learning?\\
Are these sets robust to strategic deviations?
And, if so, is the converse also true?
\end{center}

\para{Our contributions in the context of related work}

This question has attracted significant interest in the literature, especially in its pointwise version, namely:
Which mixed strategy profiles are stable and attracting under regularized learning?
Are the dynamics' stable states robust to unilateral deviations?
And, if so, are these the only stable states of regularized learning?

In the related setting of population games, the answer to this question is sometimes referred to as the ``folk theorem of evolutionary game theory'' \cite{Cre03,Wei95,HS03}.
Somewhat informally, this theorem states that, under the replicator dynamics (the continuous-time analogue of the \acl{EW} algorithm, itself an archetypal regularized learning method), the following is true for \emph{all} games:
only \aclp{NE} are (Lyapunov) stable,
and
a state is stable and attracting under the replicator dynamics if and only it is a strict \acl{NE} of the underlying game \cite{Wei95,HS03}.%

In the context of regularized learning, \cite{CGM15,MS16,FVGL+20} showed that a similar equivalence holds for the dynamics of \ac{FTRL} in \emph{continuous} time:
a state is stable and attracting under the \ac{FTRL} dynamics if and only if it is a strict \acl{NE}.
Subsequently, \citet{GVM21,GVM21b} extended this equivalence to an entire class of regularized learning schemes, with different types of feedback and/or update structures \textendash\ from optimistic methods to algorithms run with bandit, payoff-based information.
In all these cases, the same principle emerges:
under regularized learning,
\emph{a state is asymptotically stable and attracting if and only if it is a strict \acl{NE}}.

This is an important pointwise prediction but it does not cover cases where regularized learning algorithms do not converge to a point, but to a \emph{set} (such as a limit cycle or other non-trivial attractor).
In this case, the very definition of strategic stability is an intricate affair, and there are several definitions that come into play \cite{BW91,RW95,DR03,FT91}.
The first such notion that we consider is that of ``resilience to strategic deviations'', namely that every unilateral deviation from the set under study is deterred by some other element thereof.
Our first contribution in this direction is a universal guarantee to the effect that, \acl{wp1}, in any game, and from any initial condition,
\emph{the long-run limit of any regularized learning algorithm is a resilient set.}

This result is significant in its universality, but the notion of resilience is not sufficiently strong to disallow irrational behavior \textendash\ and, in fact, it is subject to similar shortcomings as Hannan consistency.
To account for this deficiency, we turn to a much more stringent criterion of setwise strategic stability, that of \emph{closedness under better replies} (\acs{closed}).
This notion, originally due to \citet{RW95}, states that any deviation from a product of pure strategies is costly, and it is one of the strictest setwise refinements in game theory.
In particular, it refines the notion of closedness under rational behavior (\acs{curb}) \cite{BW91}, and it satisfies all the seminal strategic stability requirements of \citet{KM86}, including robustness to strategic payoff perturbations.%
\footnote{Roughly speaking, robustness to strategic payoff perturbations means that the set under study remains stable even if the payoffs of the game are subject to small \textendash\ but possibly adversarial \textendash\ perturbations.}

In this general context, we show that regularized learning enjoys a striking relation with \acs{closed} sets:
\emph{A product of pure strategies is closed under better replies if and only if its span is stable and attracting under regularized learning.}
In fact, we show that this equivalence can be refined to sets that are \emph{minimally} \acl{closed} (in the sense that they do not contain a strictly smaller \ac{closed} set):
a product of puer strategies is \ac{minimal} if and only if its span is irreducibly stable and attracting (in that it does not contain a smaller asymptotically stable span of strategies).
Finally, we also estimate the rate of convergence to \acs{closed} sets, and we establish convergence at a geometric rate for entropically regularized methods \textendash\ like \acs{EW} and \acs{EXP3} \textendash\ and in a \emph{finite number} of iterations under projection-based methods.

In light of the above, our results can be seen both as a far-reaching setwise generalization of the folk theorem of evolutionary game theory, as well as a bona fide algorithmic analogue of a precursor result for the replicator dynamics, originally due to \citet{RW95}.
Importantly, our analysis covers several different update structures \textendash\ ``vanilla'' regularized methods, but also their optimistic variants \textendash\ as well as a wide range of information models \textendash\ from full payoff information to bandit, payoff-based feedback.

\section{Preliminaries}
\label{sec:prelims}

We start by recalling some basic facts and definitions from game theory, roughly following the classical textbook of \citet{FT91}.
First, a \emph{finite game in normal form} consists of
\begin{enumerate*}
[\itshape a\upshape)]
\item
a finite set of \emph{players} $\play\in\players \equiv \{1,\dotsc,\nPlayers\}$;
\item
a finite set of \emph{actions} \textendash\ or \emph{pure strategies} \textendash\ $\pures_{\play}$ per player $\play\in\players$;
and
\item
an ensemble of \emph{payoff functions} $\pay_{\play}\from\prod_{\playalt}\pures_{\playalt}\to\R$, each determining the reward $\pay_{\play}(\pure)$ of player $\play\in\players$ in a given \emph{action profile} $\pure = (\pure_{1},\dotsc,\pure_{\nPlayers})$.
\end{enumerate*}
Collectively, we will write $\pures = \prod_{\playalt} \pures_{\playalt}$ for the game's \emph{action space} and $\fingame \equiv \fingamefull$ for the game with primitives as above.


During play, each player $\play\in\players$ may randomize their choice of action by playing a \emph{mixed strategy}, \ie a probability distribution $\strat_{\play} \in \strats_{\play} \defeq \simplex(\pures_{\play})$ over $\pures_{\play}$
that selects $\pure_{\play}\in\pures_{\play}$ with probability $\strat_{\play\pure_{\play}}$.
To lighten notation, we identify $\pure_{\play} \in \pures_{\play}$ with the mixed strategy that assigns all weight to $\pure_{\play}$ (thus justifying the terminology ``pure strategies'').
Then, writing
$\strat = (\strat_{\play})_{\play\in\players}$ for the players' \emph{strategy profile}
and
$\strats = \prod_{\play}\strats_{\play}$ for the game's \emph{strategy space},
the players' payoff functions may be extended to all of $\strats$ by setting
\begin{equation}
\label{eq:pay-mix}
\pay_{\play}(\strat)
	\defeq \exwrt{\pure\sim\strat}{\pay_{\play}(\pure)}
	= \insum_{\pure\in\pures}  \pay_{\play}(\pure) \, \strat_{\pure}
\end{equation}
where, in a slight abuse of notation, we write $\strat_{\pure}$ for the joint probability of playing $\pure\in\pures$ under $\strat$, \ie $\strat_{\pure} = \prod_{\play} \strat_{\play\pure_{\play}}$.
This randomized framework will be referred to as the \emph{mixed extension} of $\fingame$ and we will denote it by $\mixgame$.

For concision, we will also write $(\strat_{\play};\strat_{-\play}) = (\strat_{1},\dotsc,\strat_{\play},\dotsc,\strat_{\nPlayers})$ for the strategy profile where player $\play$ plays $\strat_{\play}\in\strats_{\play}$ against the strategy profile $\strat_{-\play} \in \prod_{\playalt\neq\play} \strats_{\playalt}$ of all other players (and likewise for pure strategies).
In this notation, we also define each player's \emph{mixed payoff vector} as
\begin{equation}
\label{eq:payv-mixed}
\payv_{\play}(\strat)
	= (\pay_{\play}(\pure_{\play};\strat_{-\play}))_{\pure_{\play}\in\pures_{\play}}
\end{equation}
so the payoff to player $\play\in\players$ under $\strat\in\strats$ becomes
\begin{equation}
\label{eq:pay-lin}
\pay_{\play}(\strat)
	= \insum_{\pure_{\play}\in\pures_{\play}}
		\pay_{\play}(\pure_{\play};\strat_{-\play}) \, \strat_{\play\pure_{\play}}
	= \braket{\payv_{\play}(\strat)}{\strat_{\play}}.
\end{equation}
Moving forward, the \emph{best-response correspondence} of player $\play\in\players$ is defined as the set-valued mapping $\brep_{\play}\from\strats\too\strats_{\play}$ given by
\begin{equation}
\label{eq:best}
\brep_{\play}(\strat)
	= \argmax\nolimits_{\stratalt_{\play}\in\strats_{\play}} \pay_{\play}(\stratalt_{\play};\strat_{-\play})
	\quad
	\text{for all $\strat\in\strats$}.
\end{equation}
Extending this over all players, we will write $\brep = \prod_{\play} \brep_{\play}$ for the product correspondence $\brep(\strat) = \brep_{1}(\strat) \times \dotsm \times \brep_{\nPlayers}(\strat)$, and we will say that $\eq\in\strats$ is a \acdef{NE} if $\eq \in \brep(\eq)$.
Equivalently, given that $\pay_{\play}(\stratalt_{\play};\strat_{-\play})$ is linear in $\stratalt_{\play}$, we conclude that $\eq$ is a \acl{NE} if and only if
\begin{equation}
\label{eq:Nash}
\tag{NE}
\pay_{\play}(\eq)
	\geq \pay_{\play}(\pure_{\play};\eq_{-\play})
	\quad
	\text{for all $\pure_{\play}\in\pures_{\play}$ and all $\play\in\players$}.
\end{equation}

As a final point of note, if $\eq$ is a \acl{NE} where each player has a unique best response \textendash\ that is, $\brep_{\play}(\eq) = \{\eq_{\play}\}$ for all $\play\in\players$ \textendash\  we will say that $\eq$ is \emph{strict} because, in this case, $\pay_{\play}(\eq) > \pay_{\play}(\strat_{\play};\eq_{-\play})$ for all $\strat_{\play} \neq \eq_{\play}$, $\play\in\players$.
An immediate consequence of this is that strict equilibria are \emph{pure}, \ie each $\eq_{\play}$ is a pure strategy.
Among \aclp{NE}, strict equilibria are the only ones that are ``structurally robust'' (in the sense that they remain invariant to small perturbations of the underlying game), so they play a particularly important role in game theory.

\section{Regularized learning in games}
\label{sec:learning}

Throughout our paper, we will consider iterative decision processes that unfold as follows:
\begin{enumerate}
\item
At each stage $\run=\running$, every participating agent selects an action.
\item
Agents receive a reward determined by their chosen actions and their individual payoff functions.
\item
Based on this reward (or other feedback), the agents update their strategies and the process repeats. 
\end{enumerate}
In this online setting, a crucial requirement is the minimization of the players' \emph{regret}, \ie the difference between a player's cumulative payoff over time and the player's best possible strategy in hindsight.
Formally, if the players' actions at each epoch $\run = \running$ are collectively drawn by the probability distribution $\curr[\corr]\in\simplex(\pures)$, the \emph{regret} of each player $\play\in\players$ is defined as
\begin{equation}
\reg_{\play}(\nRuns)
	= \max\nolimits_{\pure_{\play}\in\pures_{\play}}
		\insum_{\run=\start}^{\nRuns}
		\bracks{\pay_{\play}(\pure_{\play};\corr_{-\play,\run}) - \pay_{\play}(\curr[\corr])},
\end{equation}
and we will say that player $\play$ has \emph{no regret} if $\reg_{\play}(\nRuns) = o(\nRuns)$.

One of the most widely used policies to achieve no-regret play is the so-called \acdef{FTRL} family of algorithms and its variants \cite{SSS06,SS11}.
To motivate the analysis to come, we begin with an archetypal \ac{FTRL} method, the \acli{EW} algorithm, also known as \acs{EW} \citep{ACBF02,CBL06,ACBFS95}.

\subsection{A gentle start}

We begin our discussion with a ``stimulus\textendash response'' approach in the spirit of \citet{ER98}:
First, at each stage $\run=\running$, every player $\play\in\players$ employs a mixed strategy $\strat_{\play,\run} \in \strats_{\play}$ to select an action $\pure_{\play,\run} \in \pures_{\play}$.
Subsequently, to measure the performance of their pure strategies over time, each player further maintains a score variable which is updated recursively as
\begin{equation}
\label{eq:score}
\score_{\play\pure_{\play},\run+1}
	= \score_{\play\pure_{\play},\run}
		+ \pay_{\play}(\pure_{\play};\pure_{-\play,\run})
	\quad
	\text{for all $\pure_{\play}\in\pures_{\play}$}.
\end{equation}
In words, $\score_{\play\pure_{\play},\run}$ simply tracks the cumulative payoff of the pure strategy $\pure_{\play}\in\pures_{\play}$ up to time $\run$ (inclusive).%
\footnote{Of course, updating these scores requires the knowledge of the ``what if'' pure payoffs $\pay_{\play}(\pure_{\play};\pure_{-\play,\run})$ at each stage $\run$, but we assume for the moment that this information is available (we will relax this assumption later on).}
As such, this score can be treated as a \emph{propensity} to play a given pure strategy at any given stage:
the strategies $\pure_{\play}\in\pures_{\play}$ with the highest propensity scores $\score_{\play\pure_{\play},\run+1}$ should be played with higher probability at stage $\run+1$.

The most widely used instantiation of this stimulus-response mechanism is the \emph{logit choice} rule
\begin{equation}
\label{eq:logit}
\logit_{\play}(\score_{\play})
	\equiv \frac
		{(\exp(\score_{\play\pure_{\play}}))_{\pure_{\play}\in\pures_{\play}}}
		{\sum_{\pure_{\play}\in\pures_{\play}} \exp(\score_{\play\pure_{\play}})}
\end{equation}
which means that each player selects an action with probability that is exponentially proportional to its score.
In this way, we obtain the \acli{EW} \textendash\ or \acs{EW} \textendash\ algorithm
\acused{EW}
\begin{equation}
\label{eq:EW}
\tag{\acs{EW}}
\score_{\play,\run+1}
	= \score_{\play,\run}
		+ \curr[\step] \payv_{\play}(\curr[\pure])
	\qquad
\strat_{\play,\run+1}
	= \logit_{\play}(\score_{\play,\run+1})
	\qquad
\pure_{\play,\run+1}
	\sim \strat_{\play,\run+1}
\end{equation}
where $\curr[\step]$ is the algorithm's ``learning rate''.
For an introduction to the literature on \eqref{eq:EW}, see \cite{CBL06,AHK12,SS11,BCB12,LS20} and references therein.

The rest of the methods we discuss below will vary some \textendash\ or even all \textendash\ of the components of \eqref{eq:EW}:
the information used to update the players' propensity scores,
the way that propensity scores are mapped to mixed strategies,
and/or
even the way that pure actions are selected.
However, all of the methods under study will be characterized by the same ``stimulus-response'' reinforcement mechanism:
actions that seem to be performing better over time are employed with higher probability, up to some ``regularization'' that incentivizes exploration of underperforming actions.

\subsection{The \acl{method} template}

In the rest of our paper, we will work with an abstract \acdef{method} template which builds on the same stimulus-response principle as \eqref{eq:EW}, while allowing us to simultaneously consider different types of feedback, strategy sampling policies, update structures, etc.
To lighten notation below, we will drop the player index $\play\in\players$ when the meaning can be inferred from the context;
also, to stress the distinction between ``strategy-like'' and ``payoff-like'' variables,
we will write throughout $\scores_{\play} \defeq \R^{\pures_{\play}}$ and $\scores \defeq \prod_{\play} \scores_{\play}$ for the game's ``\emph{payoff space}'', in direct analogy to $\strats_{\play}$ and $\strats = \prod_{\play}\strats_{\play}$ for the game's \emph{strategy space}.

With all this in hand, consider the following general class of \acl{method} methods:
\begin{flalign}
\label{eq:method}
\tag{RL}
&\quad
\begin{aligned}
&\text{Aggregate payoff information (stimulus):}
	&\quad
\dstate_{\play,\run+1}
	&= \dstate_{\play,\run} + \curr[\step] \signal_{\play,\run}
	\\
&\text{Update choice probabilities (response):}
	&\quad
\state_{\play,\run+1}
	&= \mirror_{\play}(\dstate_{\play,\run+1})
\end{aligned}
&
\end{flalign}
In tune with \eqref{eq:EW}, the various elements of \eqref{eq:method} are defined as follows:
\begin{enumerate}
\item
$\state_{\play,\run} \in \strats_{\play}$ denotes the mixed strategy of player $\play$ at time $\run = \running$
\item
$\dstate_{\play,\run} \in \scores_{\play}$ is a ``score vector'' that measures the performance of the player's actions over time.
\item
$\mirror_{\play} \from \scores_{\play} \to \strats_{\play}$ is a ``regularized choice map'' that maps score vectors to choice probabilities.
\item
$\signal_{\play,\run}$ is a surrogate\,/\,approximation of the mixed payoff vector $\payv_{\play}(\curr)$ of player $\play$ at time $\run$.
\item
$\curr[\step] > 0$ is a step-size\,/\,sensitivity parameter of the form $\curr[\step] \propto 1/\run^{\stepexp}$ for some $\stepexp\in[0,1]$.
\end{enumerate}
In words, at each stage of the process, every player $\play\in\players$ observes \textendash\ or otherwise estimates \textendash\ a proxy $\signal_{\play,\run}$ of their individual payoff vector;
subsequently,
players augment their actions' scores based on this information,
they select a mixed strategy via the regularized choice map $\mirror_{\play}$,
and the process repeats.
To streamline our presentation, we discuss in detail the precise definition of $\signal$ and $\mirror$ in \cref{sec:signal,sec:choice} below, and we present a series of examples of \eqref{eq:method} in \cref{sec:examples} right after.
\smallskip

\subsection{Aggregating payoff information}
\label{sec:signal}

As noted above, the main idea of regularized learning is to track the players' payoff vector $\payv(\curr)$.
Importantly, there are several different modeling choices that can be made here:
players may have direct access to their payoff vectors (in the full information setting),
or some noisy approximation obtained by an inner randomization of the algorithm (\eg when they receive information on their pure actions);
they may have to recreate their payoff vectors altogether (as in the bandit setting),
or
their estimates may be based on a strategy other than the one they actually played (as in the case of optimistic algorithms).

In all cases, we will represent the surrogate payoff vector $\curr[\signal]$ as
\begin{equation}
\label{eq:signal}
\curr[\signal]
	= \payv(\curr) + \curr[\noise] + \curr[\bias]
\end{equation}
where
\begin{equation}
\label{eq:bias-noise}
\curr[\bias]
	= \exof{\curr[\signal] \given \curr[\filter]} - \payv(\curr)
	\quad
	\text{and}
	\quad
\curr[\noise]
	= \curr[\signal] - \exof{\curr[\signal] \given \curr[\filter]}
\end{equation}
respectively denote the offset and the random error of $\curr[\signal]$ relative to $\payv(\curr)$.
To streamline our presentation, we will also assume that $\norm{\curr[\bias]} = \bigoh(1/\run^{\biasexp})$ and $\norm{\curr[\noise]} = \bigoh(\run^{\noisexp})$ for some $\biasexp,\noisexp \geq 0$;
we discuss the specifics of these bounds later in the paper.

\subsection{From scores to strategies}
\label{sec:choice}

Regarding the ``scores-to-strategies'' step of \eqref{eq:method}, we will follow the classical approach of \citet{SS11} and assume that each player is employing a \emph{regularized choice map} of the general form
\begin{equation}
\label{eq:mirror}
\mirror_{\play}(\score_{\play})
	= \argmax\nolimits_{\strat_{\play}\in\strats_{\play}}
		\setof{\braket{\score_{\play}}{\strat_{\play}} - \hreg_{\play}(\strat_{\play})}
	\qquad
	\text{for all $\score_{\play} \in \scores_{\play}$}.
\end{equation}
In the above, the \emph{regularizer} $\hreg_{\play}\from\strats_{\play}\to\R$ acts as a penalty that smooths out the ``hard'' argmax correspondence $\score_{\play} \mapsto \argmax_{\strat_{\play}\in\strats_{\play}} \braket{\score_{\play}}{\strat_{\play}}$.
Accordingly, instead of following the ``leader'' (\ie playing the strategy with the highest propensity score), players follow the ``regularized leader'' \textendash\ that is, they allow for a certain degree of uncertainty in their choice of strategy \cite{SSS06,SS11,BCB12,MS16}.

To ease notation, we will work with kernelized regularizers of the form
\begin{equation}
\label{eq:decomposable}
\txs
\hreg_{\play}(\strat_{\play})
	= \sum_{\pure_{\play}\in\pures_{\play}} \hker(\strat_{\play\pure_{\play}})
\end{equation}
for some continuous function $\hker\from[0,1]\to\R$ with $\inf_{z\in(0,1]} \hker''(z) > 0$.
We will also say that the players' regularizers are \emph{steep} if $\lim_{z\to0^{+}} \hker'(z) = -\infty$, and \emph{non-steep} otherwise.

\begin{example}
\label{ex:mirror}
A standard family of kernelized regularizers is given by
\begin{equation}
\label{eq:kernel}
\hker(z)
	= z^{\rho} / [\rho(\rho-1)]
	\quad
	\text{for $\rho\in(0,2]$},
\end{equation}
for $\rho\in(0,1)\cup(1,2]$ and $\hker(z) = z \log z$ for $\rho = 1$ \cite{BCB12,MS16,LS20,ZS21}.
This family includes:
\begin{itemize}
\item
For $\rho = 2$, \cref{eq:kernel} boils down to the quadratic regularizer $\hker(z) = z^{2}/2$, which in turn yields the Euclidean projection map
\begin{equation}
\label{eq:Eucl}
\mirror_{\play}(\score_{\play})
	= \Eucl_{\strats_{\play}}(\score_{\play})
	\equiv \argmin\nolimits_{\strat_{\play}\in\strats_{\play}} \twonorm{\score_{\play} - \strat_{\play}}.
\end{equation}
\item
For $\rho = 1$, \cref{eq:kernel} yields the \emph{entropic regularizer} $\hker(z) = z\log z$, which in turn leads the logit choice map \eqref{eq:logit}.
\item
For $\rho = 1/2$, we obtain the \emph{fractional power regularizer} $\hker(z) = - 4\sqrt{z}$ that underlies the \acs{Tsallis} algorithm of \cite{ALT15,ZS21} (see also \cref{sec:examples} below).
\endenv
\end{itemize}
\end{example}

\subsection{Specific algorithms}
\label{sec:examples}

We now proceed to discuss some archetypal examples of \eqref{eq:method}.
\smallskip

\begin{algo}[Follow the regularized leader]
\label{alg:FTRL}
The standard \acdef{FTRL} method of \citet{SSS06} is obtained when players observe their full payoff vectors, that is, $\signal_{\play,\run} = \payv_{\play}(\curr)$.
In this case, \eqref{eq:method} boils down to the deterministic update rule
\begin{align}
\dstate_{\play,\run+1}
	&= \dstate_{\play,\run} + \curr[\step] \vecfield_{\play}(\curr)
	\qquad
\state_{\play,\run+1}
	= \mirror_{\play}(\dstate_{\play,\run+1})
	\notag
\shortintertext{or, more explicitly}
\label{eq:FTRL}
\tag{\acs{FTRL}}
\state_{\play,\run+1}
	&= \argmax\nolimits_{\strat_{\play}\in\strats_{\play}}
		\braces*{\sum\nolimits_{\runalt=\start}^{\run} \iter[\step] \pay_{\play}(\strat_{\play};\state_{-\play,\runalt}) - \hreg_{\play}(\strat_{\play})}
\end{align}
For a detailed discussion of \eqref{eq:FTRL}, see \cite{SS11,BCB12,LS20}.
We only note here that, as a special case, when \eqref{eq:FTRL} is run with the logit choice setup of  \cref{eq:logit}, a standard calculation yields the \acli{EW} \eqref{eq:EW} \citep{Vov90,LW94,ACBFS95,SS11}.
\endenv
\end{algo}

\smallskip
\begin{algo}[\Acl{OFTRL}]
\label{alg:OFTRL}
A notable variant of \ac{FTRL} \textendash\ originally due to \citet{Pop80} and subsequently popularized by \citet{RS13-COLT,RS13-NIPS} \textendash\ is the so-called \acli{OFTRL} method.
This scheme employs an ``optimistic'' correction intended to anticipate future steps, and it updates as
\begin{equation}
\label{eq:OFTRL}
\tag{\acs{OFTRL}}
\dstate_{\play,\run+1}
	= \dstate_{\play,\run}
		+ \curr[\step] \bracks{2\payv_{\play}(\curr) - \payv_{\play}(\prev)}
\end{equation}
with $\state_{\play,\run} = \mirror_{\play}(\dstate_{\play,\run})$.
As a special case, if \eqref{eq:OFTRL} is run with the logit choice map \eqref{eq:logit}, we obtain the familiar update rule known as \acdef{OMW} \citep{RS13-COLT,RS13-NIPS,SALS15,DP19}.

Compared to \eqref{eq:FTRL}, the gain vector $\curr[\signal] = 2\payv(\curr) - \payv(\prev)$ of \eqref{eq:OFTRL} has offset $\curr[\bias] = \payv(\curr) - \payv(\prev)$ relative to $\payv(\curr)$.
Thus, even though \eqref{eq:OFTRL} assumes full access to the players' mixed payoff vectors, it uses this information differently than \eqref{eq:FTRL}:
in particular, the offset of \eqref{eq:OFTRL} is non-zero \emph{by design}, not because of some systematic error in the payoff measurement process.
\endenv
\end{algo}

Now, up to this point, we have not detailed how players might observe their full, mixed payoff vectors.
This assumption simplifies the analysis immensely, but it is not realistic in applications to \eg online advertising and network science, where players may only be able to observe their realized payoffs, and have no information about the strategies of other players or actions they did not play.
On that account, we describe below a range of \emph{payoff-based} policies where players estimate their counterfactual, ``what-if'' payoffs \emph{indirectly}.

The most common way to achieve this is via the \acli{IWE}
\begin{equation}
\label{eq:IWE}
\tag{IWE}
\IWE_{\play\pure_{\play}}(\strat)
	= \frac
		{\oneof{\purequery_{\play} = \pure_{\play}}}
		{\strat_{\play\pure_{\play}}}
			\pay_{\play}(\purequery)
	\quad
	\text{for all $\pure_{\play}\in\pures_{\play}$, $\play\in\players$},
\end{equation}
where $\strat\in\strats$ is the players' strategy profile, and $\purequery\in\pures$ is drawn according to $\strat$.
This estimator is at the heart of the online learning literature \cite{CBL06,SS11,BCB12,LS20} and it leads to the following methods:

\smallskip
\begin{algo}[\Acl{BFTRL}]
\label{alg:BFTRL}
Plugging \eqref{eq:IWE} directly into \eqref{eq:method} yields the \acli{BFTRL} policy
\begin{equation}
\label{eq:BFTRL}
\tag{\acs{BFTRL}}
\dstate_{\play,\run+1}
	= \dstate_{\play,\run} + \curr[\step] \IWE_{\play}(\curr[\query])
	\qquad
\state_{\play,\run+1}
	= \mirror_{\play}(\dstate_{\play,\run+1})	
\end{equation}
where \eqref{eq:IWE} is sampled at the mixed strategy profile
\begin{equation}
\label{eq:explore}
\query_{\play,\run}
	= (1-\curr[\mix]) \state_{\play,\run} + \curr[\mix] \unif_{\pures_{\play}}
\end{equation}
for some ``explicit exploration'' parameter $\curr[\mix] \propto 1/\run^{\mixexp}$, $\mixexp>0$, which specifies the mix between $\state_{\play,\run}$ and the uniform distribution $\unif_{\pures_{\play}}$ on $\pures_{\play}$.
As we discuss in the sequel, this combination of \eqref{eq:IWE} with the explicit exploration mechanism \eqref{eq:explore} means that the surrogate payoff vector $\curr[\signal] = \IWE(\curr[\query])$ used to update \eqref{eq:BFTRL} has offset and noise bounded respectively as $\curr[\bias] = \bigoh(\curr[\mix])$ and $\curr[\noise] = \bigoh(1/\curr[\mix])$.

Two special cases of \eqref{eq:BFTRL} that have attracted significant attention in the literature are:
\begin{enumerate}
\item
The \acdef{EXP3} \cite{ACBFS02,CBL06,LS20}, obtained by running \eqref{eq:BFTRL} with the logit choice map \eqref{eq:logit}.
\item
The \acdef{Tsallis} \cite{ALT15,AB10,ZS19,ZS21} that was proposed as a more efficient alternative to \ac{EXP3}, and which updates as
\begin{equation}
\label{eq:Tsallis}
\tag{\acs{Tsallis}}
\hspace{-.4em}
\state_{\play,\run}
	= \argmax\nolimits_{\strat_{\play}\in\strats_{\play}}
		\braces*{
			\braket{\dstate_{\play,\run}}{\strat_{\play}} 
			+ 4 \sum\nolimits_{\pure_{\play}\in\pures_{\play}} \sqrt{\strat_{\play\pure_{\play}}}
			} 
\end{equation}
\ie as \eqref{eq:BFTRL} with the fractional power regularizer $\hker(z) = -4\sqrt{z}$ of \cref{ex:mirror}.
\endenv
\end{enumerate}
\end{algo}

For illustration purposes, we provide some more examples of \eqref{eq:method} in \cref{app:algorithms}.

\section{First results: resilience to strategic deviations}
\label{sec:resilience}

We are now in a position to begin our analysis of the rationality properties of the players' long-run behavior under \eqref{eq:method}.
To that end, we should first note that no-regret play may \emph{still} lead to counterintuitive and highly non-rationalizable outcomes, \eg with all players selecting dominated strategies for all time.
The example below is adapted from \citet{VZ13}.

\begin{example}
\label{ex:VZ}
Consider the $4\times4$ symmetric $2$-player game with payoff bimatrix
\begin{equation*}
\begin{array}{l|cccc}
	&A	&B	&C	&D	\\
\hline
A	&(1,1)		&(1,2/3)		&(0,0)		&(0,-1/3)	\\
B	&(2/3,1)	&(2/3,2/3)		&(-1/3,0)	&(-1/3,-1/3)	\\
C	&(0,0)		&(0,-1/3)		&(1,1)		&(1,2/3)	\\
D	&(-1/3,0)	&(-1/3,-1/3)	&(2/3,1)	&(2/3,2/3)
\end{array}
\end{equation*}
In this game, $B$ and $D$ are strictly dominated for both players by their stronger ``twins'' ($A$ and $C$ respectively).
However, it is easy to check that if both players choose between $(B,B)$ and $(D,D)$ with probability $1/2$ each, the resulting distribution of play $\corr \in \simplex(\pures)$ satisfies $\pay_{\play}(\pure_{\play};\corr_{-\play}) - \pay_{\play}(\corr) \leq -1/6$ for all $\pure_{\play} \in \setof{A,B,C,D}$, $\play = 1,2$.
As a result, the players' regret under $\curr[\corr] \equiv \corr$ is \emph{negative}, even though both players play strictly dominated strategies at all times.
\endenv
\end{example}

The example above shows unequivocally that
\begin{quote}
\centering
\itshape
No-regret play does not suffice to exclude non-rationalizable outcomes.
\end{quote}
In addition, \cref{ex:VZ} also shows that predictions based on correlated play are not always appropriate for describing the players' behavior under \eqref{eq:method}:
the end-state of any regularized learning algorithm will be a closed connected set of mixed strategies, so it is not possible to play \emph{only} $(B,B)$ or $(D,D)$ in the long run.
We are thus led to the following natural questions:
\begin{quote}
\centering
\itshape
What are the rationality properties of long-run play under \eqref{eq:method}?\\
Is the players' behavior robust to strategic deviations?
\end{quote}

To study these questions formally, we will focus on the \emph{limit set} $\limset(\state)$ of $\curr$ under \eqref{eq:method}, \viz 
\begin{equation}
\label{eq:limset}
\limset(\state)
	\defeq \intersect\nolimits_{\run} \cl\setdef{\iter}{\runalt \geq \run}
	\equiv \setdef{\limpoint\in\strats}{\state_{\run_{k}}\to\limpoint\ \text{for some subsequence $\state_{\run_{k}}$ of $\curr$}}.
\end{equation}
In words, $\limset(\state)$ is the set of limit points of $\curr$ or, equivalently, the \emph{smallest} subset of $\strats$ to which $\curr$ converges.
Clearly, the simplest instance of a limit set is when $\limset(\state)$ is a singleton, \ie when $\curr$ converges to a point.
This case has attracted significant interest in the literature:
for example, if $\limset(\state) = \{\eq\}$ then, for certain special cases of \eqref{eq:method}, it is known that $\eq$ is a \acl{NE} of $\fingame$ \citep{MZ19}.
However, beyond this relatively simple regime, the structure of the limit sets of \eqref{eq:method} could be arbitrarily complicated and their rationality properties are not well-understood.

With this in mind, as a first attempt to study whether the long-run behavior of \eqref{eq:method} is ``robust to strategic deviations'', we will consider the following notion of \emph{resilience to strategic deviations}:

\begin{definition}
\label{def:resilience}
A closed subset $\set$ of $\strats$ is said to be \emph{resilient to strategic deviations} \textendash\ or simply \emph{resilient} \textendash\ if,
for every deviation $\strat_{\play}\in\strats_{\play}$ of every player $\play\in\players$, we have
\begin{equation}
\label{eq:resilience}
\pay_{\play}(\eq)
	\geq \pay_{\play}(\strat_{\play},\eq_{-\play})
	\quad
	\text{for some $\eq\in\set$}.
\end{equation}
\end{definition}

Informally, $\set$ is resilient if every unilateral deviation from $\set$ is deterred by some (possibly different) element thereof.
In particular, if $\set$ is a singleton, we immediately recover the definition of a \acl{NE};
beyond this case however, other examples include
the set of undominated strategies of a game,
the support face of the equilibria of two-player zero-sum games,
etc.
Importantly, as we show below, the limit sets of \eqref{eq:method} are resilient \emph{in all games:}

\begin{restatable}{theorem}{RESILIENCE}
\label{thm:resilience}
Let $\curr$, $\run=\running$, be the sequence of play generated by \eqref{eq:method} with step-size\,/\,gain parameters
$\stepexp > 2\noisexp$ and $\biasexp > 0$.
Then, \acl{wp1}, the limit set $\limset(\state)$ of $\curr$ is resilient.
\end{restatable}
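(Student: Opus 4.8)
The plan is to argue by contradiction, via a discrete stochastic analogue of the Lyapunov argument used for the replicator dynamics. Suppose $\limset(\state)$ is not resilient. Then some unilateral deviation is \emph{uniformly} profitable against the whole limit set, hence — by continuity and compactness — along the entire tail of the trajectory $\curr$; I will show that this forces the Fenchel coupling between that deviation and the deviating player's aggregate score vector to tend to $-\infty$, which is impossible since the coupling is nonnegative.

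Concretely, I would first fix a sample path on which $\limset(\state)$ violates \eqref{eq:resilience} in \cref{def:resilience}: there are a player $\play$ and a deviation $\strat_\play\in\strats_\play$ with $\pay_\play(\eq)<\pay_\play(\strat_\play,\eq_{-\play})$ for every $\eq\in\limset(\state)$. Since $\limset(\state)$ is compact and payoffs continuous, there are $c>0$ and a neighborhood $\nhd\supseteq\limset(\state)$ with $\pay_\play(\strat')-\pay_\play(\strat_\play,\strat'_{-\play})\le-c$ on $\nhd$; as $\curr$ converges to its limit set in the compact space $\strats$, there is $\runalt_0$ with $\curr\in\nhd$, so by the bilinearity in \eqref{eq:pay-lin},
\[
\braket{\payv_\play(\curr)}{\state_{\play,\run}-\strat_\play}=\pay_\play(\curr)-\pay_\play(\strat_\play,\state_{-\play,\run})\le -c\qquad\text{for all }\run\ge\runalt_0 .
\]
Next, since $\inf\hker''>0$ each $\hreg_\play$ is $\hstr$-strongly convex for some $\hstr>0$, so the regularized choice map \eqref{eq:mirror} satisfies the standard ``\acs{FTRL} lemma'': with $\fench_\play(\strat_\play,\dstate_\play)=\hreg_\play(\strat_\play)+\hconj_\play(\dstate_\play)-\braket{\dstate_\play}{\strat_\play}\ge0$,
\[
\fench_\play(\strat_\play,\dstate_{\play,\run+1})\le\fench_\play(\strat_\play,\dstate_{\play,\run})+\curr[\step]\braket{\signal_{\play,\run}}{\state_{\play,\run}-\strat_\play}+\tfrac{1}{2\hstr}\curr[\step]^{2}\norm{\signal_{\play,\run}}^{2} .
\]
Decomposing $\signal_{\play,\run}=\payv_\play(\curr)+\curr[\noise]+\curr[\bias]$ as in \eqref{eq:signal}--\eqref{eq:bias-noise}, inserting the gap bound above, and telescoping from $\runalt_0$ to $\run$ gives
\[
0\le\fench_\play(\strat_\play,\dstate_{\play,\runalt_0})-c\sum_{\runalt=\runalt_0}^{\run}\iter[\step]+M_\run+\diam(\strats_\play)\sum_{\runalt=\runalt_0}^{\run}\iter[\step]\norm{\bias_\runalt}+\tfrac{1}{2\hstr}\sum_{\runalt=\runalt_0}^{\run}\iter[\step]^{2}\norm{\signal_{\play,\runalt}}^{2},
\]
where $M_\run=\sum_{\runalt=\runalt_0}^{\run}\iter[\step]\braket{\noise_\runalt}{\state_{\play,\runalt}-\strat_\play}$ is a martingale (as $\curr$ is $\curr[\filter]$-measurable and $\exof{\noise_\runalt\given\curr[\filter]}=0$).

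It then remains to show the last three sums are $o\bigl(\sum_{\runalt=\runalt_0}^{\run}\iter[\step]\bigr)$ almost surely; since $\stepexp\le1$ forces $\sum_\runalt\iter[\step]=\infty$, the right-hand side above would then diverge to $-\infty$, the desired contradiction. With $\iter[\step]\asymp\runalt^{-\stepexp}$: the bias sum is $\bigoh\bigl(\sum_\runalt\runalt^{-\stepexp-\biasexp}\bigr)$, of strictly lower order because $\biasexp>0$; using that payoff vectors are bounded and $\biasexp>0$, $\norm{\signal_{\play,\runalt}}^{2}=\bigoh(1+\norm{\noise_\runalt}^{2})=\bigoh(\runalt^{2\noisexp})$, so the quadratic sum is $\bigoh\bigl(\sum_\runalt\runalt^{2\noisexp-2\stepexp}\bigr)$, which is $o\bigl(\sum_\runalt\runalt^{-\stepexp}\bigr)$ exactly when $\stepexp>2\noisexp$; finally, $\braket{\noise_\runalt}{\state_{\play,\runalt}-\strat_\play}$ is a $\strat_\play$-weighted combination of the finitely many scalar martingale increments $\braket{\noise_\runalt}{\state_{\play,\runalt}}$ and $\noise_{\play\pure_\play,\runalt}$, $\pure_\play\in\pures_\play$, each with conditional second moment $\bigoh(\runalt^{2\noisexp-2\stepexp})$; since $\noisexp<\stepexp/2\le\tfrac12$, a maximal/\acs{BDG} inequality plus a Borel--Cantelli argument over dyadic blocks makes each of these $o\bigl(\sum_\runalt\iter[\step]\bigr)$ almost surely — on a single almost-sure event that works \emph{simultaneously for all deviations $\strat_\play$}, since only finitely many base martingales occur. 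On that event the displayed inequality fails for every $\play$ and $\strat_\play$, so $\limset(\state)$ is resilient there, and hence with probability $1$.

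The routine parts are the energy inequality and the deterministic bias/quadratic estimates. The main obstacle is the last one: upgrading the in-expectation control of the stochastic error to an \emph{almost-sure} bound at the sharp rate $o\bigl(\sum_\runalt\iter[\step]\bigr)$, and making it uniform over the (sample-dependent) profitable deviation $\strat_\play$; this is precisely where $\stepexp>2\noisexp$ (equivalently $\noisexp<\stepexp/2\le 1/2$) is used, and it is the technical heart of the argument.
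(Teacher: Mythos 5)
Your proposal is correct and its core is the same as the paper's: argue by contradiction, extract a uniform payoff gap $c>0$ along the tail of the trajectory from compactness of $\limset(\state)$ and continuity of the payoffs, feed this into the Fenchel-coupling recursion, and show that the drift $-c\sum_{\runalt}\iter[\step]$ dominates the bias, second-order, and martingale error terms (using $\biasexp>0$ for the first, $\stepexp>2\noisexp$ for the second, and $\noisexp<1/2$ for the third), contradicting nonnegativity of the coupling. Where you genuinely diverge is in handling the fact that the profitable deviation $\strat_{\play}$ is \emph{random}: the paper fixes a deterministic deviation, runs the estimate (via Azuma\textendash Hoeffding, at confidence level rather than almost surely), and then upgrades to the random case by observing that resilience need only be checked against the countably many deviations with rational coordinates, so a union bound over an enumeration of $\strats_{\play}\cap\Q^{\pures_{\play}}$ suffices. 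You instead exploit the linearity of the noise term in $\strat_{\play}$: since $\braket{\noise_{\runalt}}{\state_{\play,\runalt}-\strat_{\play}}$ is a convex combination of the finitely many base increments $\braket{\noise_{\runalt}}{\state_{\play,\runalt}}$ and $-\noise_{\play\pure_{\play},\runalt}$, a single almost-sure event on which each of these finitely many martingales is $\littleoh\bigl(\sum_{\runalt}\iter[\step]\bigr)$ (by the martingale strong law, which the paper itself invokes elsewhere) gives control \emph{uniformly} over all deviations, after which the contradiction is derived pathwise. Both resolutions are valid; yours is arguably cleaner for this particular error structure (it avoids the density/enumeration step entirely and yields a genuinely uniform bound), while the paper's rational-approximation device is more robust in that it would survive error terms that do not decompose linearly in the deviation. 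One small bookkeeping point to make explicit in a full write-up: your starting index $\runalt_{0}$ and gap $c$ are sample-dependent, so the final contradiction must be phrased as holding on the intersection of the almost-sure noise event with the (assumed positive-probability) non-resilience event, which is exactly what your last paragraph does.
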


\begin{corollary}
\label{cor:Nash}
With assumptions as above, if $\limset(\state) = \{\eq\}$, $\eq$ is a \acl{NE} \acs{wp1}.
\end{corollary}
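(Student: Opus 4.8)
The plan is to argue by contradiction, converting a failure of resilience into a \emph{uniformly} profitable deviation that the energy estimate underlying \eqref{eq:method} cannot sustain. Suppose that the event $\event \defeq \{\limset(\state)\text{ is not resilient}\}$ has positive probability and fix a sample path in $\event$. By \cref{def:resilience} there are a player $\play\in\players$ and a deviation $\strat_{\play}\in\strats_{\play}$ with $\pay_{\play}(\strat_{\play};\statealt_{-\play}) > \pay_{\play}(\statealt)$ for every $\statealt\in\limset(\state)$. Since $\limset(\state)$ is closed, hence a compact subset of $\strats$, and all maps in sight are continuous, this strict inequality is \emph{uniform}: there are a constant $c>0$ and an open set $\nhd\supseteq\limset(\state)$ with $\pay_{\play}(\strat_{\play};\statealt_{-\play})-\pay_{\play}(\statealt)\geq c$ for all $\statealt\in\nhd$. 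Finally, since $\curr$ evolves in the compact set $\strats$, it is attracted by its own limit set, \ie $\dist(\curr,\limset(\state))\to0$; hence there is a (path-dependent) index $\run_{0}$ with $\curr\in\nhd$ for all $\run\geq\run_{0}$, and so, by the linearity recorded in \eqref{eq:pay-lin},
\[
\braket{\payv_{\play}(\curr)}{\strat_{\play}-\state_{\play,\run}}
	= \pay_{\play}(\strat_{\play};\state_{-\play,\run}) - \pay_{\play}(\curr)
	\geq c
	\qquad\text{for all }\run\geq\run_{0}.
\]

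The next step is to contradict this bound via the standard Fenchel-coupling (energy) estimate for \acs{FTRL}-type updates. Let $\hstr\defeq\inf_{z\in(0,1]}\hker''(z)>0$, so that each regularizer $\hreg_{\play}$ is $\hstr$-strongly convex; its conjugate $\hconj_{\play}$ is then finite and $(1/\hstr)$-smooth, with $\nabla\hconj_{\play}=\mirror_{\play}$. Applying the descent lemma to $\hconj_{\play}$ along the update $\dstate_{\play,\run+1}=\dstate_{\play,\run}+\step_{\run}\signal_{\play,\run}$ inside the Fenchel coupling $\fench_{\play}(\strat_{\play},\dstate)\defeq\hreg_{\play}(\strat_{\play})+\hconj_{\play}(\dstate)-\braket{\dstate}{\strat_{\play}}\geq0$, and telescoping over $\run=\start,\dotsc,\nRuns$, yields
\[
\sum_{\run=\start}^{\nRuns}\step_{\run}\braket{\signal_{\play,\run}}{\strat_{\play}-\state_{\play,\run}}
	\;\leq\; \fench_{\play}(\strat_{\play},\dstate_{\play,\start})
		+ \frac{1}{2\hstr}\sum_{\run=\start}^{\nRuns}\step_{\run}^{2}\norm{\signal_{\play,\run}}_{\ast}^{2}.
\]
Plugging in the decomposition $\signal_{\play,\run}=\payv_{\play}(\curr)+\noise_{\play,\run}+\bias_{\play,\run}$ of \eqref{eq:signal} and writing $b_{\nRuns}\defeq\sum_{\run=\start}^{\nRuns}\step_{\run}$ \textpar{which diverges since $\stepexp\leq1$}, it then suffices to show that, \acs{wp1}, the quadratic error sum, the bias sum $\sum_{\run}\step_{\run}\braket{\bias_{\play,\run}}{\strat_{\play}-\state_{\play,\run}}$, and the noise sum $\sum_{\run}\step_{\run}\braket{\noise_{\play,\run}}{\strat_{\play}-\state_{\play,\run}}$ are each $o(b_{\nRuns})$; together with the lower bound of the previous paragraph this forces $c\,b_{\nRuns}\leq o(b_{\nRuns})$, and dividing by $b_{\nRuns}\to\infty$ gives the contradiction, whence $\prob(\event)=0$.

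These three estimates are exactly where the hypotheses $\stepexp>2\noisexp$ and $\biasexp>0$ are consumed, and carrying them out at the stated thresholds is the technical core of the argument. Using $\norm{\signal_{\play,\run}}_{\ast}=\bigoh(1)+\bigoh(\run^{\noisexp})+\bigoh(\run^{-\biasexp})$ together with $b_{\nRuns}\asymp\nRuns^{1-\stepexp}$ \textpar{or $\asymp\log\nRuns$ when $\stepexp=1$}, the quadratic sum is $\bigoh(\nRuns^{1+2\noisexp-2\stepexp})$ up to an additive constant, hence $o(b_{\nRuns})$ precisely because $2\noisexp<\stepexp$. The bias sum is $\bigoh(\sum_{\run}\step_{\run}\run^{-\biasexp})=\bigoh(\nRuns^{1-\stepexp-\biasexp})$ up to a constant, hence $o(b_{\nRuns})$ because $\biasexp>0$. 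Finally, the noise sum is a zero-mean martingale whose increments have size $\bigoh(\step_{\run}\run^{\noisexp})=\bigoh(\run^{\noisexp-\stepexp})$ \textpar{using the bound on $\noise_{\play,\run}$ and the boundedness of $\strats$}; since $2\noisexp<\stepexp\leq1$ forces $\noisexp<1/2$, the series $\sum_{\run}b_{\run}^{-2}\,\step_{\run}^{2}\,\exof{\norm{\noise_{\play,\run}}^{2}\given\filter_{\run}}=\bigoh(\sum_{\run}\run^{2\noisexp-2})$ converges, so the strong law of large numbers for martingales \textpar{via Kronecker's lemma} makes this term $o(b_{\nRuns})$ \acs{wp1}. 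I expect the only delicate point to be exactly this bookkeeping \textendash\ matching the exponents at the threshold, handling the $\stepexp=1$ borderline, and checking that the bound on $\noise_{\play,\run}$ also controls the predictable quadratic variation of the noise martingale \textendash\ rather than any conceptual obstacle; the contrapositive reduction and the energy inequality are routine. \Cref{cor:Nash} is then immediate, since a resilient singleton $\{\eq\}$ is, by \cref{def:resilience}, precisely a profile $\eq$ satisfying \eqref{eq:Nash}, \ie a \acl{NE}.
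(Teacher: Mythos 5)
Your closing observation is exactly the paper's proof of the corollary: a resilient singleton satisfies \eqref{eq:Nash} by \cref{def:resilience}, so \cref{cor:Nash} is immediate from \cref{thm:resilience}. What you have actually written out is a re-derivation of \cref{thm:resilience} itself, and that re-derivation follows the paper's route closely \textendash\ a uniform gap $c>0$ extracted by compactness of the limit set, the telescoped Fenchel-coupling inequality, and the three error estimates with the exponent bookkeeping you describe (the paper uses Azuma\textendash Hoeffding where you invoke the martingale strong law, but the thresholds $\stepexp>2\noisexp$ and $\biasexp>0$ come out the same way).

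There is, however, one genuine gap. You fix a sample path in the bad event and then choose the player $\play$ and the deviation $\strat_{\play}$ \emph{as a function of that path}: they are determined by the random set $\limset(\state)$, hence by the entire trajectory. With this choice, $\sum_{\run}\step_{\run}\braket{\noise_{\play,\run}}{\strat_{\play}-\state_{\play,\run}}$ is \emph{not} a martingale \textendash\ $\strat_{\play}$ is not $\filter_{\run}$-measurable \textendash\ so neither the martingale strong law nor any concentration inequality applies to it as written. The paper's proof is careful on exactly this point: its first two steps are carried out for a \emph{fixed, non-random} pair $(\play,z_{\play})$, showing that the event ``$\limset(\state)$ is not resilient to $(\play,z_{\play})$'' is null, and its third step upgrades this to all deviations simultaneously via a union bound over the countably many deviations with rational coordinates (non-resilience implies non-resilience to some rational deviation). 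For the corollary the repair is even simpler: by the linearity \eqref{eq:pay-lin}, $\eq$ fails \eqref{eq:Nash} only if some \emph{pure} deviation $\pure_{\play}\in\pures_{\play}$ is profitable, so a finite union bound over the pairs $(\play,\pure_{\play})$ suffices. Without some such step, your contradiction does not go through. (A minor cousin of the same issue is that your $c$ and $\run_{0}$ are also path-dependent; the paper extracts deterministic $c,\eps,\run_{0}$ with the good event retaining positive probability before running the energy argument.)
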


\begin{proof}[Proof sketch]
The proof of \cref{thm:resilience} boils down to two interleaved arguments that we detail in \cref{app:resilience}.
The first hinges on showing that, if $\probof{\limset(\state) = \set} > 0$ for some \emph{non-random} $\set\subseteq\strats$, $\set$ must be resilient.
This is argued by contradiction:
if $\base_{\play}\in\strats_{\play}$ is a unilateral deviation violating \cref{def:resilience}, we must also have $\liminf_{\run\to\infty} \bracks{\pay_{\play}(\base_{\play};\state_{-\play,\run}) - \pay_{\play}(\curr)} > 0$ with positive probability.
However, the existence of a strategy that consistently outperforms $\curr$ runs contrary to
the fact that
strategies that \eqref{eq:method} selects against underperforming strategies.
We make this intuition precise via an energy argument that leverages a series of results from martingale limit theory (which is where the requirements for $\curr[\step]$, $\curr[\bias]$ and $\curr[\noise]$ come in).
Then, to get the stronger statement that the \emph{random} set $\limset(\state)$ is resilient \acs{wp1}, we show that the above remains true if $\base_{\play}$ is replaced by a deviation $\basealt_{\play}$ which is close enough to $\base_{\play}$ and has \emph{rational} entries.
Since there is a countable number of such profiles, we can use a union bound on an enumeration of the rationals to isolate a deviation witnessing the negation of \cref{def:resilience};
our claim then follows by applying our argument for non-random sets.
\end{proof}

\Cref{thm:resilience} is our first universal guarantee for \eqref{eq:method}, so some remarks are in order.
First, we should point out that the requirements
$\biasexp > 0$ and $2\noisexp < \stepexp$
are a priori \emph{implicit} because they depend on the offset and magnitude statistics of the feedback sequence $\curr[\signal]$.
However, in most learning algorithms, these quantities are under the \emph{explicit} control of the players:
for example, as we show in \cref{app:algorithms}, \cref{alg:OFTRL} has
$\biasexp = \stepexp$
while, for \cref{alg:BFTRL}, we have
$\biasexp = \noisexp = \mixexp$.
In this way, when instantiated to \crefrange{alg:FTRL}{alg:BFTRL} (and special cases thereof),
\cref{thm:resilience} yields the following corollary:

\begin{corollary}
\label{cor:resilience}
Suppose that \crefrange{alg:FTRL}{alg:BFTRL} are run with
$\stepexp\in(0,1]$ and, for \cref{alg:BFTRL}, $\mixexp\in(0,\stepexp/2)$.
Then, \acl{wp1}, the limit set $\limset(\state)$ of $\curr$ is resilient.
\end{corollary}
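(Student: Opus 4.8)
The plan is to derive Corollary~\ref{cor:resilience} as a direct instantiation of Theorem~\ref{thm:resilience}: for each of \crefrange{alg:FTRL}{alg:BFTRL}, I would exhibit admissible values of the bias exponent $\biasexp$ (governing $\norm{\curr[\bias]} = \bigoh(1/\run^{\biasexp})$) and the noise exponent $\noisexp$ (governing $\norm{\curr[\noise]} = \bigoh(\run^{\noisexp})$), and then check that the stated parameter windows force $\biasexp > 0$ and $\stepexp > 2\noisexp$. Since the game is finite, all payoff vectors $\payv_{\play}(\curr)$ live in a fixed compact subset of $\scores_{\play}$, which will be used freely below. The argument splits into three cases according to the feedback model.

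\emph{Case 1: \cref{alg:FTRL}.} Here the feedback is exact, $\curr[\signal] = \payv(\curr)$, so $\curr[\bias] = 0$ and $\curr[\noise] = 0$ identically. Consequently any $\biasexp > 0$ and $\noisexp = 0$ are admissible, and since $\stepexp \in (0,1]$ we have $\biasexp > 0$ and $\stepexp > 0 = 2\noisexp$; Theorem~\ref{thm:resilience} applies verbatim. The \acs{EW} special case is covered at once, being \eqref{eq:FTRL} run with the logit map \eqref{eq:logit}.

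\emph{Case 2: \cref{alg:OFTRL}.} The feedback is still deterministic, so $\curr[\noise] = 0$ and $\noisexp = 0$ is admissible; it remains to bound the (nonzero) offset $\curr[\bias] = \payv(\curr) - \payv(\prev)$. I would propagate the one-step variation of the iterates: since $\inf_{z\in(0,1]}\hker''(z) > 0$ makes each regularizer $\hreg_{\play}$ strongly convex, the choice map $\mirror_{\play}$ is Lipschitz, so from $\dstate_{\play,\run+1} - \dstate_{\play,\run} = \curr[\step]\bracks{2\payv_{\play}(\curr) - \payv_{\play}(\prev)}$ and boundedness of the payoff vectors we get $\norm{\state_{\run} - \state_{\run-1}} = \bigoh(\step_{\run-1}) = \bigoh(1/\run^{\stepexp})$. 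Multilinearity of $\payv$ on the compact strategy space then yields $\norm{\curr[\bias]} = \bigoh(1/\run^{\stepexp})$, i.e.\ $\biasexp = \stepexp > 0$, so Theorem~\ref{thm:resilience} again applies; this is the computation carried out in \cref{app:algorithms}, and it also covers \acs{OMW}.

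\emph{Case 3: \cref{alg:BFTRL}.} Now $\curr[\signal] = \IWE(\curr[\query])$ is sampled at the explored profile \eqref{eq:explore}. A standard moment computation for the \acl{IWE} shows that the explicit-exploration weight $\curr[\mix] \propto 1/\run^{\mixexp}$ controls both quantities: the conditional mean of $\IWE(\curr[\query])$ differs from $\payv(\curr)$ by $\bigoh(\curr[\mix])$, so $\norm{\curr[\bias]} = \bigoh(1/\run^{\mixexp})$, while the inverse sampling weights blow up the second moment of $\curr[\noise]$ at rate $1/\curr[\mix]$, giving $\norm{\curr[\noise]} = \bigoh(\run^{\mixexp})$; hence $\biasexp = \noisexp = \mixexp$. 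The assumption $\mixexp \in (0,\stepexp/2)$ is then exactly $\biasexp = \mixexp > 0$ together with $\stepexp > 2\mixexp = 2\noisexp$, so Theorem~\ref{thm:resilience} applies, with \acs{EXP3} and \acs{Tsallis} appearing as the logit and fractional-power instances. Collecting the three cases proves the corollary. The only real obstacle is the quantitative bookkeeping in Cases~2 and~3 — the Lipschitz estimate on $\mirror_{\play}$ and the one-step iterate bound for \acs{OFTRL}, and the control of the first and second moments of the importance-weighted estimator (where the $1/\curr[\mix]$ variance blow-up is the delicate point) for \acs{BFTRL} — and this is precisely where the stated parameter windows are consumed; it is done in full in \cref{app:algorithms}.
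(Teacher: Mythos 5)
Your proposal is correct and follows exactly the paper's route: the corollary is obtained by instantiating \cref{thm:resilience} with the bias/noise exponents $\biasexp=\infty,\noisexp=0$ for \cref{alg:FTRL}, $\biasexp=\stepexp,\noisexp=0$ for \cref{alg:OFTRL}, and $\biasexp=\noisexp=\mixexp$ for \cref{alg:BFTRL}, which is precisely the bookkeeping carried out in \cref{app:algorithms}. The only cosmetic difference is that for \cref{alg:BFTRL} the paper bounds $\dnorm{\curr[\noise]}$ almost surely via $\query_{\play\pure_{\play},\run}\geq\curr[\mix]/\nPures_{\play}$ rather than through a second-moment computation, but your conclusion $\noisexp=\mixexp$ is the same.
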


Now, since \cref{thm:resilience} applies to all games, it would seem to provide a universally positive answer to whether \eqref{eq:method} is robsut to strategic deviations.
However, this is not so:
a direct calculation shows that the face of $\strats$ that is spanned by the dominated strategies $(B,B)$ and $(D,D)$ of \cref{ex:VZ} \emph{is} resilient, so \cref{thm:resilience} cannot exclude convergence to a set where dominated strategies survive.
Thus, just like no-regret play, the notion of resilience does not suffice by itself to capture the idea of rational behavior.
This is because, albeit natural, resilience is too lax to provide a meaningful link between robustness to unilateral deviations \textendash\ a \emph{game-theoretic} requirement \textendash\ and stability under regularized learning \textendash\ a \emph{dynamic} requirement.
We address this question in detail in the next section.

\section{A characterization of strategic stability under regularized learning}
\label{sec:closed}

Similar to the set of pure strategies that arise from no-regret play, the main limitation of resilience is that a payoff-improving deviation may be countered by an action profile where the deviator also switched to a \emph{different} strategy;
in other words, resilience is not a \emph{self-enforcing} barrier to deviations.
In view of this, we will focus below on a much more stringent criterion of strategic stability, namely that \emph{any} deviation from the set in question incurs a cost to the deviating agent.

\subsection*{Club sets}

To make all this precise, define the \emph{better-reply correspondence} of player $\play\in\players$ as
\begin{equation}
\label{eq:better}
\btr_{\play}(\strat)
	= \setdef
		{\stratalt_{\play}\in\strats_{\play}}
		{\pay_{\play}(\stratalt_{\play};\strat_{-\play}) \geq \pay_{\play}(\strat)}
\end{equation}
and write $\btr = \prod_{\play} \btr_{\play}$ for the product correspondence $\btr(\strat) = \btr_{1}(\strat) \times \dotsm \times \btr_{\nPlayers}(\strat)$.
[In words, $\btr_{\play}$ assigns to each $\strat\in\strats$ those strategies of player $\play$ that are (weakly) better against $\strat$ than $\strat_{\play}$.]
In addition, given a product of pure strategies $\pureset = \prod_{\play\in\players} \pureset_{\play}$ with $\pureset_{\play} \subseteq \pures_{\play}$ for all $\play\in\players$,
let $\set = \simplex(\pureset)$ denote the span of $\pureset$,
and
let $\faces$ denote the collection of all such sets.
We then say that $\set\in\faces$ is \acli{closed}\acused{closed} \textendash\ a \emph{\acs{closed} set} for short \textendash\ if it is closed under $\btr$, \ie $\btr(\set) \subseteq \set$;
finally, $\set$ is said to be \acdef{minimal} if it does not admit a proper \ac{closed} subset.
\footnote{Analogously to \ac{closed} sets, $\set\in\faces$ is said to be \acdef{curb} if it is closed under \emph{best replies}, \ie $\brep(\set) \subseteq \set$ \cite{BW91}.
Clearly, \ac{closed} sets are also \ac{curb}, but the converse does not hold, \cf \cite{RW95}.
}

Of course, the entire strategy space $\strats$ is \acl{closed} so, a priori, \ac{closed} sets could also contain dominated strategies and\,/\,or other non-rationalizable outcomes.
By contrast, \emph{minimal} \ac{closed} sets are much more rigid in their relation to rational behavior because any unilateral deviation from an \ac{minimal} set is \emph{costly}, and \ac{minimal} sets are \emph{minimal} in this regard.
On that account, \ac{minimal} sets can be seen as \emph{the closest setwise analogue to strict \aclp{NE}.}

This analogy is accentuated further by the following properties of \ac{minimal} sets, all due to \citet{RW95}, who introduced the concept:
\begin{enumerate}
\item
Every game admits an \ac{minimal} set;
and if this set is a singleton, then it is a \emph{strict} \acl{NE}.
\item
Any \ac{minimal} set $\set$ is \emph{fixed} under better replies, that is, $\btr(\set) = \set$ (implying in turn that $\set$ cannot contain any dominated strategies, including iteratively dominated ones).
\item
Any \ac{minimal} set $\set$ contains an \emph{essential equilibrium component}, \ie a component of \aclp{NE} such that every small perturbation of the game admits a nearby equilibrium;
in addition, this component has \emph{full support} on $\set$, \ie it employs all pure strategy profiles that lie in $\set$.%
\footnote{Formally,
a component $\eqs$ of \aclp{NE} of $\fingame$ is \emph{essential} if, for all $\eps>0$, there exists $\delta>0$ such that any perturbation of the payoffs of $\fingame$ by at most $\delta$ produces a \acl{NE} that is $\eps$-close to $\eqs$ \citep{vD87}.
This property \textendash\ known as ``\emph{essentiality}'' \textendash\ has a long history as one of the strictest setwise solution refinements in game theory;
in particular, it satisfies all the seminal \emph{strategic stability} requirements of \citet{KM86}, including robustness to strategic payoff perturbations.
For an in-depth discussion, see \citet{vD87}.}
\end{enumerate}
\smallskip

Going back to our online learning setting, the above leads to the following natural set of questions:
\vspace{-\topsep}
\begin{center}
\emph{Are \ac{closed} sets \textpar{minimal or not} stable under the dynamics of regularized learning?\\
Are they attracting?
And, if so, are they the only such sets?}
\end{center}
\vspace{-\parsep}
Any answer to these questions \textendash\ positive or negative \textendash\ would be an important step in delineating the relation between \emph{strategic stability} (in the above sense) and \emph{dynamic stability} under \eqref{eq:method}.
To that end, we start by formalizing some notions of dynamic stability that will be central in the sequel:

\begin{definition}
\label{def:stable}
Fix some subset $\set$ of $\strats$ and a tolerance level $\thres > 0$.
We then say that $\set$ is:
\begin{enumerate}
\item
\emph{Stochastically stable}
if, for every neighborhood $\nhd$ of $\set$ in $\strats$, there exists a neighborhood $\init[\nhd]$ of $\set$ such that
\begin{equation}
\probof{\curr \in \nhd \; \text{for all $\run=\running$}}
	\geq 1 - \thres
	\quad
	\text{whenever $\init \in \init[\nhd]$}.
\end{equation}
\item
\emph{Stochastically attracting}
if there exists a neighborhood $\init[\nhd]$ of $\set$ such that
\begin{equation}
\probof{\lim\nolimits_{\run\to\infty} \dist(\curr,\set) = 0}
	\geq 1 - \thres
	\quad
	\text{whenever $\init \in \init[\nhd]$}.
\end{equation}
\item
\emph{Stochastically asymptotically stable}
if it is stochastically stable and attracting.
\item
\emph{Irreducibly stable}
if $\set$ is stochastically asymptotically stable and it does not admit a strictly smaller stochastically asymptotically subset $\alt\set$ with $\supp(\alt\set) \subsetneq \supp(\set)$.
\end{enumerate}
\end{definition}

With all this in hand, our main result below provides a sharp characterization of strategic stability in the context of regularized learning:

\begin{restatable}{theorem}{CLOSED}
\label{thm:closed}
Fix some set $\set\in\faces$ and suppose that \eqref{eq:method} is run with
a steep regularizer
and step-size\,/\,gain parameters $\stepexp\in[0,1]$, $\biasexp>0$, and $\noisexp < 1/2$.
Then:
\begin{enumerate}
\item
$\set$ is stochastically asymptotically stable under \eqref{eq:method} if and only if it is a \ac{closed} set.
\item
$\set$ is irreducibly stable under \eqref{eq:method} if and only if it is an \ac{minimal} set.
\end{enumerate}
\end{restatable}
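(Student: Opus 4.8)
\emph{Overall strategy.} Both equivalences rest on a single mechanism --- call it \emph{score separation}. For a steep kernelized regularizer, the weight $\state_{\play\pure_{\play},\run}$ that player $\play$ places on a pure action $\pure_{\play}\notin\pureset_{\play}$ is controlled by the \emph{score gap} $\gauge_{\play,\run}\defeq\max_{\pure_{\play}\in\pureset_{\play}}\dstate_{\play\pure_{\play},\run}-\max_{\pure_{\play}\notin\pureset_{\play}}\dstate_{\play\pure_{\play},\run}$: it is bounded by a strictly decreasing function of $\gauge_{\play,\run}$ that vanishes as $\gauge_{\play,\run}\to+\infty$. Being \acs{closed} is precisely the property that makes the payoff increments fed into \eqref{eq:method} push every $\gauge_{\play,\run}\to+\infty$ uniformly on a neighborhood of $\set$. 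So the plan is: (i) the forward implication of part (1) by a local energy argument built on this mechanism; (ii) the converse by contraposition, a failure of closedness producing a pure deviation whose score runs away and drags weight off $\set$; (iii) part (2) from part (1) by a minimality bookkeeping, after reducing arbitrary stable sets to members of $\faces$.

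\emph{Part (1), ``if''.} Suppose $\set=\simplex(\pureset)$ is \acs{closed}. Closedness gives, for every $\strat\in\set$, player $\play$, and $\pure_{\play}\notin\pureset_{\play}$, that $\pay_{\play}(\pure_{\play};\strat_{-\play})<\pay_{\play}(\strat)\le\max_{\pure'_{\play}\in\pureset_{\play}}\pay_{\play}(\pure'_{\play};\strat_{-\play})$, so by continuity and compactness there are $c>0$ and a neighborhood $\nhd$ of $\set$ on which $\pay_{\play}(\pure_{\play};\strat_{-\play})\le\max_{\pure'_{\play}\in\pureset_{\play}}\pay_{\play}(\pure'_{\play};\strat_{-\play})-c$ for all $\play$ and all $\pure_{\play}\notin\pureset_{\play}$. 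Consequently, while $\curr\in\nhd$, the conditional mean increment of $\gauge_{\play,\run}$ is at least $\curr[\step]\,(c-O(\norm{\curr[\bias]}))\ge\curr[\step]c/2$ for $\run$ large, since $\biasexp>0$ forces $\norm{\curr[\bias]}\to0$. Summing, $\gauge_{\play,\run}\ge\gauge_{\play,\run_{0}}+\tfrac{c}{2}\sum_{\runalt=\run_{0}}^{\run}\iter[\step]+\mart_{\play,\run}$, with $\mart_{\play,\run}$ a martingale of quadratic variation $O\big(\sum_{\runalt\le\run}\iter[\step]^{2}\runalt^{2\noisexp}\big)$; here $\noisexp<1/2$ is exactly what makes $\mart_{\play,\run}=o\big(\sum_{\runalt\le\run}\iter[\step]\big)$ --- i.e.\ the noise scale stays below the drift scale, in the quickly- and slowly-decaying step-size regimes ($\stepexp>\noisexp+\tfrac12$ and $\stepexp\le\noisexp+\tfrac12$) alike, including constant steps. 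Thus, on the event that $\curr$ never exits $\nhd$, all $\gauge_{\play,\run}\to+\infty$ and score separation gives $\dist(\curr,\set)\to0$. Finally, with the energy $\energy(\strat)\defeq\sum_{\play}\sum_{\pure_{\play}\notin\pureset_{\play}}\strat_{\play\pure_{\play}}$, which satisfies $\energy\asymp\dist(\cdot,\set)$ and is an almost-supermartingale while $\curr\in\nhd$, a Doob maximal estimate (again using $\noisexp<1/2$ and $\biasexp>0$ to bound the cumulative error against the drift) shows that from initial conditions close enough to $\set$ the process stays in $\nhd$ with probability at least $1-\thres$; as $\nhd$ may be taken inside any prescribed neighborhood, this yields both attraction and stability.

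\emph{Part (1), ``only if''.} By contraposition, suppose $\set\in\faces$ is not \acs{closed}. Then some player $\play$ has, at some $\base\in\set$, a weakly better reply that escapes $\simplex(\pureset_{\play})$; extracting a pure deviation $\pure_{\play}\notin\pureset_{\play}$, passing to $\relint\set$, and using continuity --- or, in the borderline case with no strict witness, enlarging $\pureset_{\play}$ by one action and re-running the argument on the larger face --- we obtain $c'>0$ and a relatively open $U\subseteq\set$ on which $\pay_{\play}(\pure_{\play};\strat_{-\play})>\max_{\pure'_{\play}\in\pureset_{\play}}\pay_{\play}(\pure'_{\play};\strat_{-\play})+c'$. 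Run \eqref{eq:method} from $\relint\strats$ near $U$. If $\set$ were stochastically asymptotically stable, its stability would confine $\curr$ to a small neighborhood of $\set$; comparing with the associated continuous-time \acs{FTRL} dynamics --- whose invariance structure keeps the trajectory from drifting away from $U$ within $\set$ --- the drift of $\dstate_{\play\pure_{\play},\run}-\max_{\pure'_{\play}\in\pureset_{\play}}\dstate_{\play\pure'_{\play},\run}$ stays at least $\curr[\step]c'/2$, and the same $\noisexp<1/2$ noise-domination forces it to $+\infty$. Hence $\state_{\play\pure_{\play},\run}\to1$ and $\dist(\curr,\set)\not\to0$ with probability exceeding $\thres$, contradicting attraction; so $\set$ must be \acs{closed}.

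\emph{Part (2) and the main obstacle.} Part (2) follows from part (1) together with the lemma that every stochastically asymptotically stable $\alt\set\subseteq\strats$ has $\simplex(\supp\alt\set)\in\faces$ stochastically asymptotically stable too --- once confined near $\supp\alt\set$, steepness plus the score recursion spreads $\state_{\play,\run}$ across the whole face $\simplex((\supp\alt\set)_{\play})$, so $\dist(\curr,\simplex(\supp\alt\set))\to0$. Then: if $\set$ is \acs{minimal} it is \acs{closed}, hence stable by part (1), and no $\alt\set$ with $\supp\alt\set\subsetneq\supp\set$ can be stable (else the lemma and part (1) would furnish a \acs{closed} proper subset of $\set$), so $\set$ is irreducibly stable; conversely, if $\set$ is irreducibly stable it is \acs{closed} by part (1) and has no proper \acs{closed} subset (such a subset would, by part (1), be a strictly smaller stable set), so $\set$ is \acs{minimal}. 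The main obstacle is the ``only if'' direction of part (1): converting a \emph{static} failure of closedness into a \emph{dynamic} instability requires keeping the trajectory within reach of the witness region $U$ although the stability hypothesis only confines it to a neighborhood of the whole set $\set$, and then running the score-separation estimate in reverse; the stochastic-approximation accounting in the ``if'' direction --- extracting, uniformly over both step-size regimes, the dominance of the $O(1)$ payoff drift over the $O(\run^{\noisexp})$ noise and the $O(\run^{-\biasexp})$ bias --- is the other technically involved but routine ingredient.
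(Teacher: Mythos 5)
Your ``if'' direction of part (1) and your treatment of part (2) follow essentially the paper's route. The paper likewise works with the linear energies $\energy_{\tvec}(\score)=\braket{\score}{\tvec}$ indexed by the pure deviations $\tvec=\bvec_{\play\purealt_{\play}}-\bvec_{\play\pure_{\play}}$ from $\pureset$ (your score gap $\gauge_{\play,\run}$ is just the maximum over these), extracts a uniform negative drift on a neighborhood of $\set$ from closedness, and dominates the aggregate noise and bias using $\noisexp<1/2$ and $\biasexp>0$ (Kolmogorov/Burkholder\textendash Davis\textendash Gundy plus a H\"older interpolation to cover both step-size regimes); forward invariance is established by induction on dual-space cones $\dbasin(\thres)$ rather than by a Doob maximal estimate on a primal energy, but that is a cosmetic difference. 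Part (2) is, in both cases, pure bookkeeping on top of part (1).

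The genuine gap is in your ``only if'' direction, and it is precisely the obstacle you flag. You insist on manufacturing a \emph{strict} witness $c'>0$ on a relatively open $U\subseteq\set$, which then obliges you to keep the trajectory within reach of $U$; but asymptotic stability only confines $\curr$ to a neighborhood of all of $\set$, and the appeal to ``the invariance structure of the continuous-time FTRL dynamics'' does not close this: the discrete stochastic process is not a trajectory of that flow, does not live on $\set$, and nothing prevents it from wandering (inside the tube around $\set$) away from $U$ into the region where the deviation \emph{is} costly, killing your drift estimate. The paper sidesteps the issue by never asking for strictness: a failure of closedness yields a pure deviation $\purealt\notin\pureset_{\play}$ that is \emph{weakly} better than some $\pure\in\pureset_{\play}$, so along the restriction to the face spanned by $\pure$ and $\purealt$ the drift of $\score_{\purealt,\run}-\score_{\pure,\run}$ is nonnegative, whence this score difference is bounded below by the accumulated bias plus a convergent martingale, i.e., by an a.s.\ finite random variable; steepness of $\hker$ then keeps the weight on $\purealt$ bounded away from zero, so $\curr\not\to\set$. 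No localization near a strict-witness region is needed, and the borderline ``no strict witness'' case you propose to patch by enlarging $\pureset_{\play}$ does not arise. You should replace your contrapositive with this weak-inequality argument (or else supply an actual confinement argument near $U$, which I do not see how to obtain from stability of $\set$ alone).
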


In addition, we also get the following convergence rate estimates for \ac{closed} sets:

\begin{restatable}{theorem}{RATE}
\label{thm:rate}
Let $\set \in \faces$ be a \ac{closed} set, and let $\curr$, $\run=\running$, be the sequence of play generated by \eqref{eq:method} with parameters $\stepexp\in[0,1]$, $\biasexp>0$, and $\noisexp < 1/2$.
Then, for all $\thres>0$, there exists an \textpar{open, unbounded} initialization domain $\dbasin \subseteq \scores$ such that, with probability at least $1-\thres$, we have
\begin{equation}
\label{eq:rate}
\txs
\dist(\curr,\set)
	\leq \Const \rate\parens*{\const_{1} - \const_{2} \sum_{\runalt=\start}^{\run} \iter[\step]}
	\quad
	\text{whenever $\init[\dstate]\in\dbasin$}
\end{equation}
where
$\Const,\const_{1},\const_{2}$ are constants ($\Const,\const_{2} > 0$),
and
the rate function $\rate$ is given by
$\rate(z) = (\hker')^{-1}(z)$ if $z > \lim_{z\to0^{+}} \hker'(z)$,
and
$\rate(z) = 0$ otherwise.
\end{restatable}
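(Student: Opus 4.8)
The plan is to control $\dist(\curr,\set)$ through the score differences between the pure strategies that span $\set$ and those outside it: the defining property of a \acs{closed} set will force these differences to grow linearly in $\insum_{\runalt}\iter[\step]$, and the first-order optimality conditions for $\mirror_\play$ will then convert that growth directly into the bound \eqref{eq:rate}. The geometric ingredient is the following. Write $\set=\simplex(\pureset)$ with $\pureset=\prod_\play\pureset_\play$. Applying $\btr(\set)\subseteq\set$ at the profiles $(\pure_\play;\strat_{-\play})$ with $\pure_\play\in\pureset_\play$ and $\strat_{-\play}\in\prod_{\playalt\neq\play}\simplex(\pureset_\playalt)$ shows that $\payv_{\play\pure_\play}(\strat)>\payv_{\play\purealt_\play}(\strat)$ for every player $\play$, every $\pure_\play\in\pureset_\play$, every $\purealt_\play\in\pures_\play\setminus\pureset_\play$, and every $\strat\in\set$ (recall that $\payv_{\play\pure_\play}(\strat)$ depends only on $\strat_{-\play}$); hence, by compactness of $\set$ and continuity of the payoff vectors, there are a margin $m>0$ and a neighborhood $\nhd$ of $\set$ on which $\payv_{\play\pure_\play}(\strat)-\payv_{\play\purealt_\play}(\strat)\geq m$ for all such $\pure_\play,\purealt_\play$, and on which the total mass on $\pures_\play\setminus\pureset_\play$ stays below $1/2$.

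Fix $\thres>0$. Invoking the martingale estimates that underlie \cref{thm:resilience,thm:closed} — this is where $\noisexp<1/2$ and $\stepexp\in[0,1]$ are used — pick $\delta\in(0,m/4)$ and a constant $\Const_{0}$ so that, with probability at least $1-\thres$, the aggregate noise satisfies $\supnorm{\insum_{\runalt=\start}^{\run}\iter[\step]\noise_{\play,\runalt}}\leq\delta\insum_{\runalt=\start}^{\run}\iter[\step]+\Const_{0}$ for all $\play$ and all $\run$; the bias term, being $\littleoh\big(\insum_{\runalt=\start}^{\run}\iter[\step]\big)$ by $\biasexp>0$, is absorbed in the same way. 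Let $\dbasin=\setdef{\dstate\in\scores}{\dstate_{\play\pure_\play}-\dstate_{\play\purealt_\play}\geq R\ \text{for all }\play,\ \pure_\play\in\pureset_\play,\ \purealt_\play\in\pures_\play\setminus\pureset_\play}$, which is open and unbounded, with $R$ to be chosen large. On the above event, and assuming $\init[\dstate]\in\dbasin$, I would prove by induction on $\run$ that $\curr\in\nhd$ and, for all $\play$, $\pure_\play\in\pureset_\play$, $\purealt_\play\in\pures_\play\setminus\pureset_\play$,
\[
\dstate_{\play\pure_\play,\run}-\dstate_{\play\purealt_\play,\run}\;\geq\;R-\Const_{1}+(m-4\delta)\insum_{\runalt=\start}^{\run-1}\iter[\step],
\]
for a constant $\Const_{1}$ built from $\Const_{0}$ and the bias bound. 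Indeed, telescoping the update \eqref{eq:method} and writing $\signal=\payv+\noise+\bias$, the payoff part accumulates at least $m\insum_{\runalt<\run}\iter[\step]$ because, by the inductive hypothesis $\state_{\runalt}\in\nhd$, the margin holds at every past time; the noise part is bounded by $2\supnorm{\insum_{\runalt<\run}\iter[\step]\noise_{\play,\runalt}}$ and the bias part by the $\biasexp>0$ estimate; and the distance bound below, applied at time $\run$, gives $\curr\in\nhd$ once $R$ is large enough that the displayed right-hand side exceeds the threshold defining $\nhd$, which closes the induction. To turn this into \eqref{eq:rate}: the first-order conditions for \eqref{eq:mirror} give $\state_{\play\purealt_\play,\run}\leq\rate\big(\dstate_{\play\purealt_\play,\run}-\nu_{\play,\run}\big)$, where $\nu_{\play,\run}$ is the simplex multiplier and $\rate=(\hker')^{-1}$ on $\big(\lim_{z\to0^{+}}\hker'(z),\infty\big)$ and $0$ below; since $\curr\in\nhd$ forces some $\pure_\play^{\dagger}\in\pureset_\play$ to carry mass at least $1/(2\card(\pureset_\play))$, evaluating those conditions at $\pure_\play^{\dagger}$ yields $\nu_{\play,\run}\geq\dstate_{\play\pure_\play^{\dagger},\run}-\hker'(1)$, and therefore $\state_{\play\purealt_\play,\run}\leq\rate\big(\hker'(1)-(\dstate_{\play\pure_\play^{\dagger},\run}-\dstate_{\play\purealt_\play,\run})\big)$. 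Combining this with $\dist(\curr,\set)\leq\sqrt{2}\insum_\play\insum_{\purealt_\play\in\pures_\play\setminus\pureset_\play}\state_{\play\purealt_\play,\run}$, the monotonicity of $\rate$, and the score-gap lower bound yields \eqref{eq:rate}, with $\const_{2}=m-4\delta>0$ and $\const_{1},\Const$ determined by $R$, $\Const_{1}$, $\card(\pures)$, $\hker'(1)$ and $\init[\step]$. Specializing the kernel recovers the sharp statements in the abstract: $\hker(z)=z\log z$ gives $\rate(z)=e^{z-1}$, hence a geometric rate, while $\hker(z)=z^{2}/2$ gives $\rate(z)=\max\{z,0\}$, which vanishes once $\insum_{\runalt}\iter[\step]$ is large enough — i.e. after finitely many iterations when $\stepexp<1$.

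I expect the main obstacle to be the self-consistency of this bootstrap: the margin $m$ is only available while $\curr$ stays in $\nhd$, yet that containment is precisely what the growing score gaps are meant to certify, so the induction, the radius of $\nhd$, the choice of $R$, and the distance estimate have to be calibrated jointly rather than established in isolation. A secondary difficulty is the \emph{time-uniform} (as opposed to fixed-horizon) control of the noise martingale with a slack $\delta<m/4$: this is exactly where $\noisexp<1/2$ is needed, and it requires the martingale machinery developed for \cref{thm:resilience,thm:closed} rather than a single Doob-type inequality.
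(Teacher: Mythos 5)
Your proposal is correct and follows essentially the same route as the paper: you track the score gaps $\dstate_{\play\pure_\play,\run}-\dstate_{\play\purealt_\play,\run}$ (the paper's linearized energies $\braket{\curr[\dstate]}{\tvec}$ for pure deviations $\tvec$), extract a uniform drift margin from closedness under better replies on a neighborhood of $\set$, run the same bootstrap induction from an unbounded dual initialization domain with time-uniform martingale control of the noise, and convert the linearly growing gaps into the distance bound via the KKT conditions and $\rate=(\hker')^{-1}$, exactly as in the paper's \cref{lem:energy-sharp,lem:score2strat} and \cref{lem:strict-local}. The ``main obstacle'' you flag is precisely what the paper resolves by initializing in $\dbasin(\thres+\ediff)$ and closing the induction inside $\dbasin(\thres)$.
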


Specifically, if we instantiate \cref{thm:rate} to \crefrange{alg:FTRL}{alg:BFTRL}, we get the explicit estimates:

\begin{restatable}{corollary}{RATECOR}
\label{cor:rate}
Suppose that \crefrange{alg:FTRL}{alg:BFTRL} are run with
$\stepexp\in[0,1]$ and, for \cref{alg:BFTRL}, $\mixexp \in (0,1/2)$.
Then, with notation as in \cref{thm:rate}, $\curr$ converges to $\set$ at a rate of
\begin{equation}
\label{eq:rate-explicit}
\dist(\curr,\set)
	\leq \Const \cdot
	\left\{\hspace{-.5em}
	\begin{array}{lll}
	\pospart{1 - \const \sum_{\runalt=1}^{\run} \iter[\step]}
		&\quad
		\text{if $\hker(z) = z^{2}/2$}
		&\quad
		\text{\#\,Euclid. proj.}
	\\[\smallskipamount]
	\exp\parens[\big]{-\const \sum_{\runalt=1}^{\run} \iter[\step]}
		&\quad
		\text{if $\hker(z) = z\log z$}
		&\quad
		\text{\#\,logit choice}
	\\[\smallskipamount]
	1\big/\parens[\big]{\const + \sum_{\runalt=\start}^{\run} \iter[\step]}^{2}
		&\quad
		\text{if $\hker(z) = -4\sqrt{z}$}
		&\quad
		\text{\#\,Tsallis maps}
	\end{array}
	\right.
\end{equation}
for positive constants $\Const,\const > 0$.
In particular, the projection-based variants of \crefrange{alg:FTRL}{alg:BFTRL} converge to \ac{minimal} sets in a \textbf{finite} number of steps.
\end{restatable}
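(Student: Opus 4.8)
The result is a direct specialization of \cref{thm:rate}, so the plan is to instantiate its rate function $\rate$ for the three kernels of \cref{ex:mirror} and keep track of constants. First I would check that the stated parameter windows are precisely those under which the hypotheses $\biasexp>0$ and $\noisexp<1/2$ of \cref{thm:rate} hold for each of \crefrange{alg:FTRL}{alg:BFTRL}: \cref{alg:FTRL} has exact feedback ($\curr[\signal]=\payv(\curr)$, so $\curr[\bias]=\curr[\noise]=0$), \cref{alg:OFTRL} has a deterministic, built-in offset with $\norm{\curr[\bias]}=\bigoh(\curr[\step])$ and $\curr[\noise]=0$, and \cref{alg:BFTRL} run with exploration rate $\curr[\mix]\propto1/\run^{\mixexp}$ has $\biasexp=\noisexp=\mixexp\in(0,1/2)$; crucially, unlike in \cref{thm:resilience} no coupling of the type $\stepexp>2\noisexp$ is needed, so constant step-sizes remain admissible throughout. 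With \cref{thm:rate} in force, I would then evaluate $\rate(z)=(\hker')^{-1}(z)$ (and $\rate(z)=0$ below the threshold $\lim_{w\to0^{+}}\hker'(w)$) and fold the free constants $\const_{1},\const_{2}$ into $\Const,\const$. Write $S_{\run}=\sum_{\runalt=\start}^{\run}\iter[\step]$; since $\iter[\step]\propto1/\runalt^{\stepexp}$ with $\stepexp\in[0,1]$, $S_{\run}$ is strictly increasing and $S_{\run}\to\infty$ (with $S_{\run}=\Theta(\run^{1-\stepexp})$ if $\stepexp<1$ and $S_{\run}=\Theta(\log\run)$ if $\stepexp=1$).

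The kernel computations are routine. For the Euclidean projection ($\hker(z)=z^{2}/2$) one has $\hker'(z)=z$ with $\lim_{z\to0^{+}}\hker'(z)=0$, so $\rate(z)=\pospart{z}$ and \cref{thm:rate} gives $\dist(\curr,\set)\leq\Const\pospart{\const_{1}-\const_{2}S_{\run}}$, which is the displayed bound after rescaling by $\const_{1}>0$. For the logit map ($\hker(z)=z\log z$), $\hker'(z)=1+\log z$ is steep and $(\hker')^{-1}(z)=\exp(z-1)$, so $\rate(z)=\exp(z-1)$ on all of $\R$ and \cref{thm:rate} gives the geometric bound $\Const\exp(\const_{1}-1-\const_{2}S_{\run})=\Const\exp(-\const S_{\run})$. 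For the fractional-power (Tsallis) kernel ($\hker(z)=-4\sqrt{z}$), $\hker'(z)=-2z^{-1/2}$ is steep with $(\hker')^{-1}(z)=4/z^{2}$, so $\dist(\curr,\set)\leq4\Const/(\const_{1}-\const_{2}S_{\run})^{2}$; for $\run$ large enough that $\const_{2}S_{\run}>\const_{1}$, the denominator equals $\const_{2}^{2}(S_{\run}-\const_{1}/\const_{2})^{2}$, which after re-normalization is the stated $1/(\const+S_{\run})^{2}$ form (the transient range $\const_{2}S_{\run}\leq\const_{1}$ being vacuous and absorbed into the constants).

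For the finite-time claim I would argue from the Euclidean case directly: since $S_{\run}$ strictly increases to $\infty$, there is a finite (deterministic) index $\run_{0}$ with $\const S_{\run}\geq1$ for all $\run\geq\run_{0}$, and on the probability-$\geq1-\thres$ event of \cref{thm:rate} the bound $\Const\pospart{1-\const S_{\run}}$ then vanishes, so $\dist(\curr,\set)=0$ and hence $\curr\in\set$ (as $\set$ is closed in $\strats$) for all $\run\geq\run_{0}$ \textendash\ the iterates are absorbed in $\set$ after finitely many steps. This holds for any \ac{closed} set $\set\in\faces$, and in particular for \ac{minimal} ones. I would also flag why finite-time entry is special to projection-based methods: it requires $\rate$ to vanish on a half-line, i.e. $\hker'$ to have a finite left limit at $0$ \textendash\ a \emph{non-steep} regularizer \textendash\ whereas the entropic and fractional-power kernels are steep and keep $\curr$ in $\relint\strats$ for all time, so only the geometric, resp. $\bigoh(1/S_{\run}^{2})$, rates can be expected there.

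There is no real conceptual obstacle once \cref{thm:rate} is available. The two places calling for minor care are (i) the first step, namely confirming that the bias and noise exponents of \crefrange{alg:FTRL}{alg:BFTRL} meet $\biasexp>0$ and $\noisexp<1/2$ under the stated parameter ranges (for \cref{alg:BFTRL} this is just $\mixexp\in(0,1/2)$, with no constraint tying $\mixexp$ to $\stepexp$), and (ii) the bookkeeping of $\const_{1},\const_{2}$ as the argument $\const_{1}-\const_{2}S_{\run}$ of $\rate$ crosses zero and one renormalizes to the compact forms of \eqref{eq:rate-explicit}.
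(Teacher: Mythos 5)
Your proposal is correct and follows essentially the same route as the paper: instantiate the rate function $\rate = (\hker')^{-1}$ of \cref{thm:rate} for each of the three kernels and absorb $\const_{1},\const_{2}$ into $\Const,\const$ (your computation $\rate(z)=\exp(z-1)$ for the entropic kernel is in fact the correct inverse of $\hker'(z)=1+\log z$; the constant shift is immaterial). The additional bookkeeping you include \textendash\ verifying the bias/noise exponents of \crefrange{alg:FTRL}{alg:BFTRL} and spelling out the finite-time absorption for the non-steep Euclidean case \textendash\ is left implicit in the paper but is consistent with its \cref{app:algorithms}.
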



\begin{figure*}[t]
\centering
\footnotesize
\begin{subfigure}{.23\textwidth}
\includegraphics[width=\textwidth]{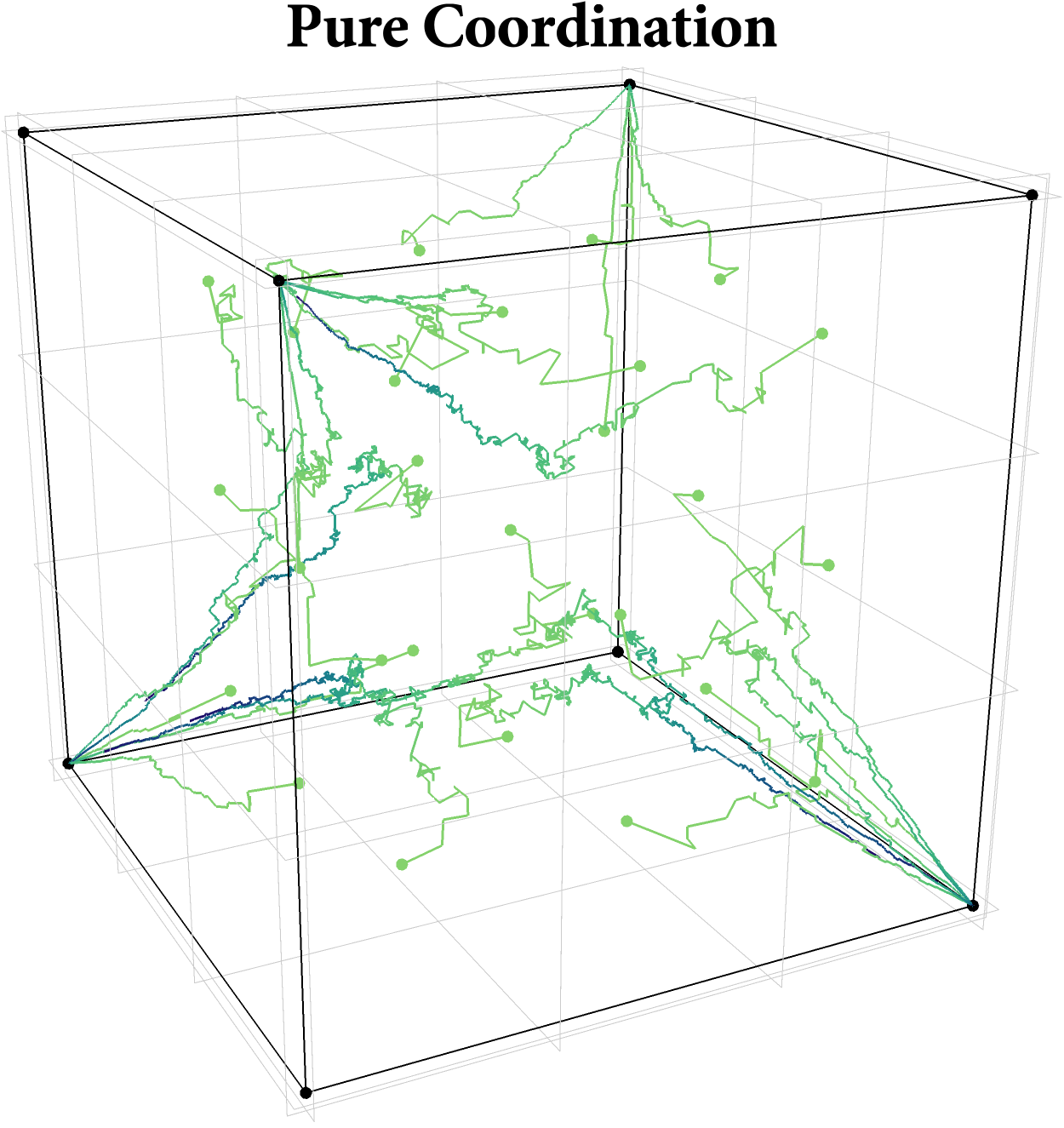}
\end{subfigure}
\hfill
\begin{subfigure}{.23\textwidth}
\includegraphics[width=\textwidth]{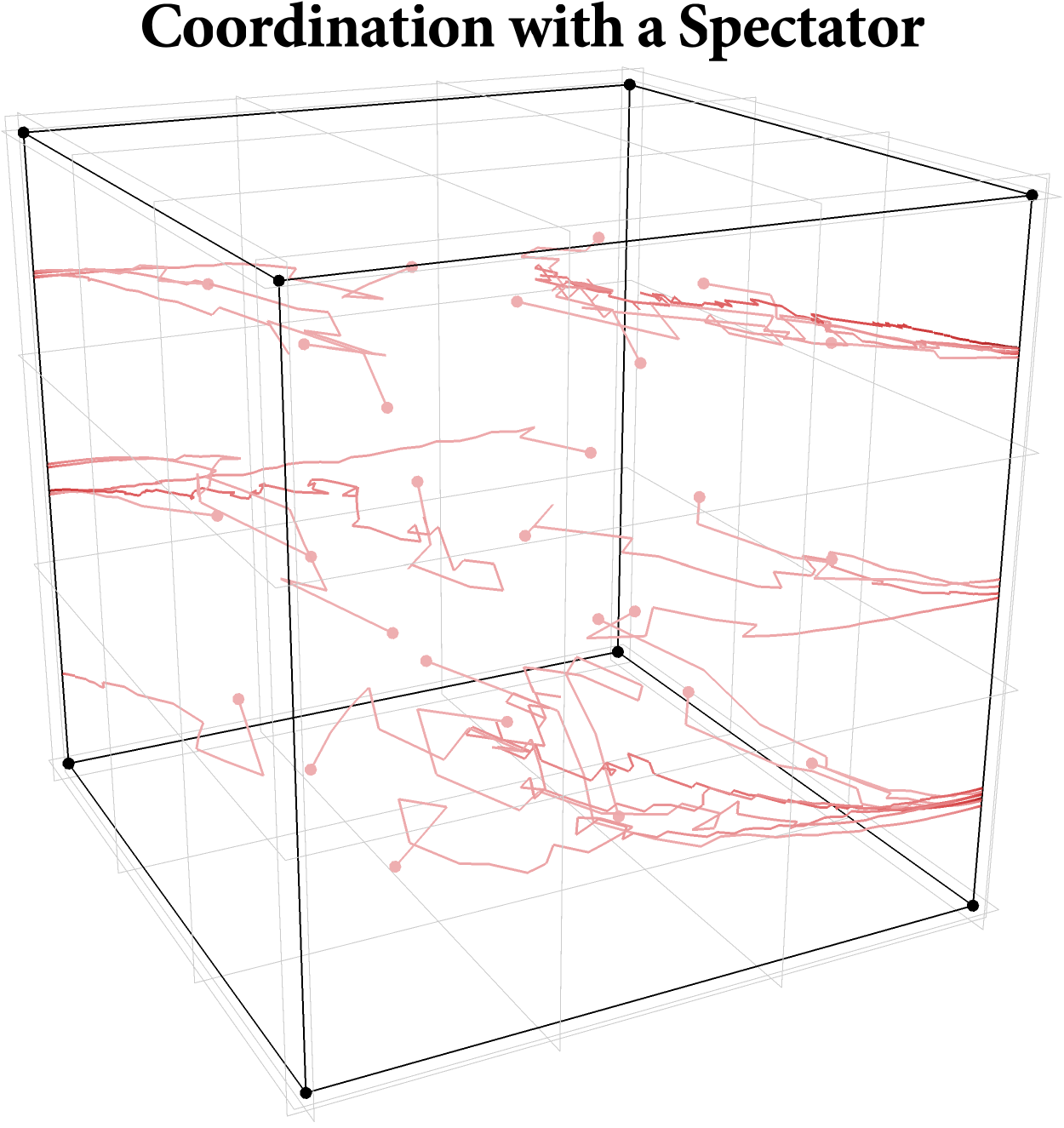}
\end{subfigure}
\hfill
\begin{subfigure}{.23\textwidth}
\includegraphics[width=\textwidth]{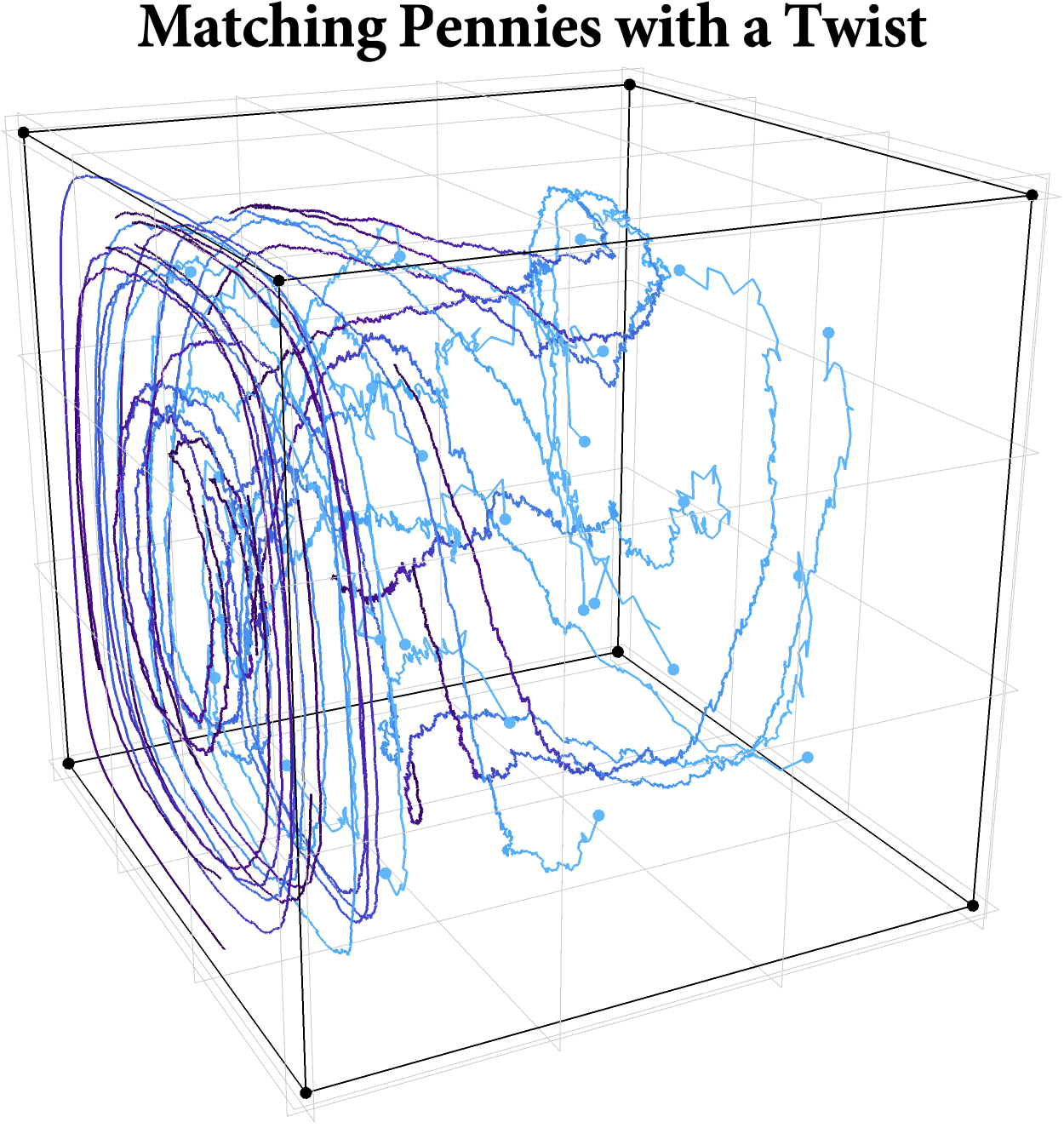}
\end{subfigure}
\hfill
\begin{subfigure}{.23\textwidth}
\includegraphics[width=\textwidth]{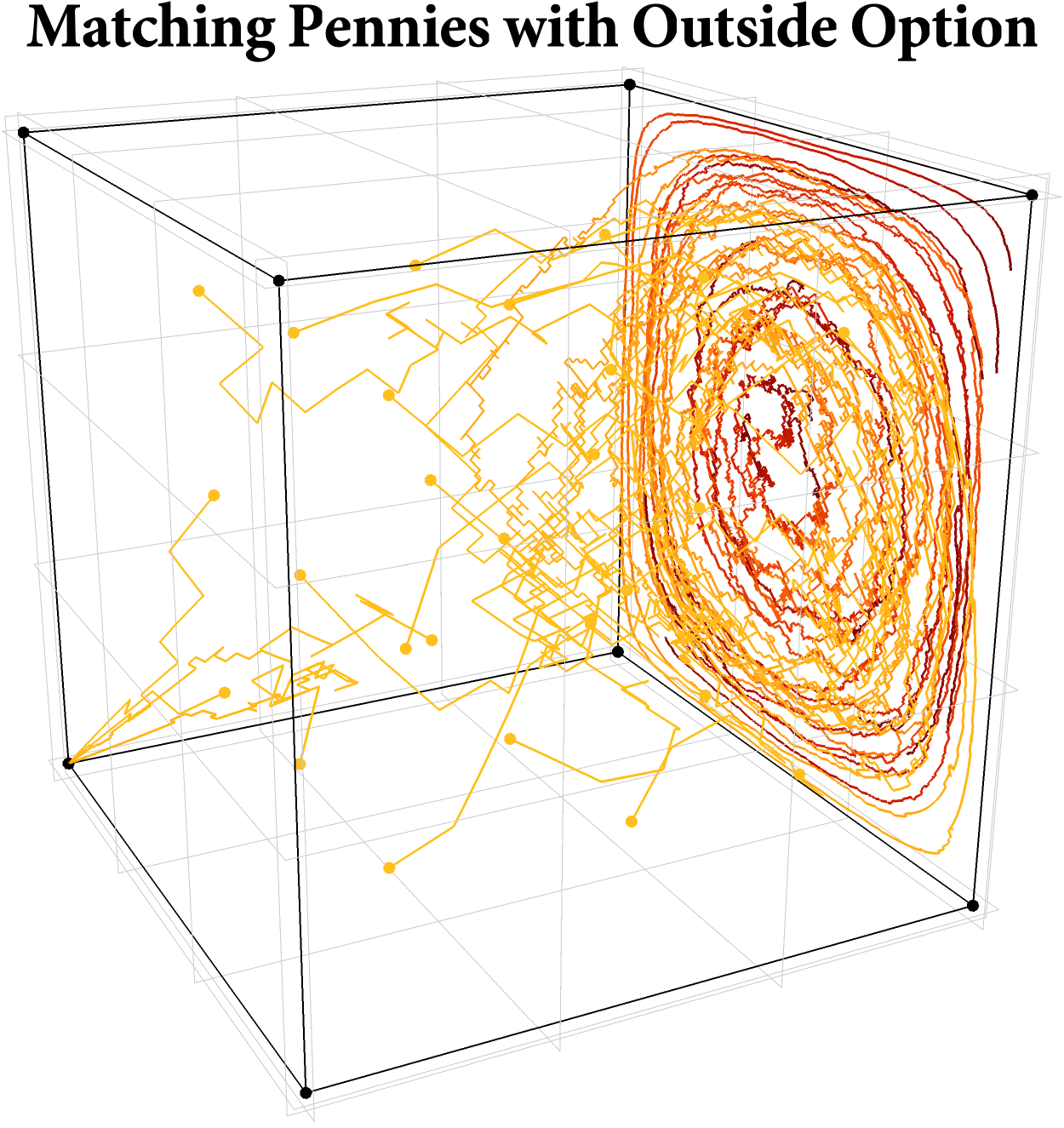}
\end{subfigure}%
\caption{The long-run behavior of \ac{EXP3} (\cref{alg:BFTRL}) in four representative $2\times2\times2$ games.
In all cases, the dynamics converge to \ac{minimal} sets, either \emph{strict equilibria} themselves, or spanning an \emph{essential component} of \aclp{NE}.
The details of the numerics and the games being played are provided in the appendix.
}
\label{fig:games}
\vspace{-1.5\baselineskip}
\end{figure*}


\begin{proof}[Proof sketch]
The proof of \cref{thm:closed,thm:rate} is quite involved so we defer it to \cref{app:closed}.
At a high level, it hinges on constructing a family of ``primal-dual'' energy functions, one per pure deviation from the set $\set$ under study.
If unilateral deviations from $\set$ incur a cost to the deviator (that is, if $\set$ is \ac{closed}), these energy functions can be ``bundled together'' to produce a suitable Lyapunov-like function for $\set$.
In more detail, the minimization of each individual energy function implies that the score variable $\curr[\dstate]$ of \eqref{eq:method} diverges along an ``astral direction'' in the payoff space $\scores$ \textendash\ \ie it escapes to infinity along the interior of a certain convex cone of $\scores$ \cite{DST22}.
Because this minimization occurs at infinity, the aggregation of offsets and random errors in \eqref{eq:method} affords some extra ``wiggle room'' in our martingale analysis, so we are able to show that $\curr = \mirror(\curr[\dstate])$ remains close to $\set$ under a much wider range of parameters compared to \cref{thm:resilience}.
Then, a series of convex analysis arguments in the spirit of \cite{MS16} coupled with the definition of $\mirror$ allows us to show that the escape of $\curr[\dstate]$ along the intersection of all these cones implies convergence to $\set$ at the specified rate.

On the converse side, if an asymptotically stable set is not \ac{closed}, we can find a non-costly (and possibly profitable) deviation $\tvec$ from $\set$ which is selected against by \eqref{eq:method}.
However, this extinction runs contrary to the reinforcement of better replies under \eqref{eq:method}, an argument which can be made precise by applying the martingale law of large numbers to $\braket{\curr[\dstate]}{\tvec}$ \cite{HH80}.
The irreducible stability of \ac{minimal} sets then follows by invoking this criterion reductively for any potentially stable subset $\alt\set$ of $\set$.
\end{proof}

\section{Discussion and concluding remarks}
\label{sec:discussion}

\Cref{thm:closed,thm:rate} are our main results linking dynamic and strategic stability, so we conclude with a series of remarks.

First, we should note that \cref{thm:closed} can be summed up as follows:
\emph{a product of pure strategies is \textpar{minimally} \acl{closed} if and only if its span is \textpar{irreducibly} stable under regularized learning.}
Importantly, this equivalence is based solely on the game's payoff data:
it does not depend on the specific choices underlying \eqref{eq:method}, including the choice map employed by each player,
whether some players are using an optimistic adjustment or not,
if they have access to their full payoff vectors, etc.
As such, this equivalence provides a crisp operational criterion for identifying which pure strategy combinations ultimately persist under regularized learning \textendash\ and, via \cref{thm:rate}, \emph{how fast} this identification takes place.

In this light, \cref{thm:closed} essentially states that the only robust prediction that can be made for the outcome of a regularized learning process is (minimal) closedness under better replies.
This interpretation has significant cutting power for the emergence of rational behavior.
To begin, in terms of equilibrium play, it readily implies that a pure strategy profile is stochastically asymptotically stable under \eqref{eq:method} if and only if it is a strict \acl{NE}.
A version of this equivalence was only recently proved in \cite{FVGL+20} and \cite{GVM21} (in continuous and discrete time respectively), so \cref{thm:closed} can be seen as a far-reaching generalization of these recent results.
More to the point, since every \ac{minimal} set $\set$ contains an essential equilibrium component that is fully supported in $\set$, \cref{thm:closed} also provides an important link between dynamic and structural stability:
if an equilibrium \textendash\ or a component of equilibria \textendash\ is not robust to perturbations of the underlying game,
\emph{it cannot be robustly identified by a regularized learning process} (and vice versa).
This remark is of particular importance for extensive-form games as such games often have non-generic equilibrium components that cannot be treated otherwise by the existing theory.

The above also places severe restrictions on which \emph{components} of \aclp{NE} can be stable and attracting under \eqref{eq:method}:
if an equilibrium component \emph{does not} span a \ac{closed} set, it cannot be asymptotically stable (a fact which explains the behavior seen in the Entry Deterrence game in \cref{fig:games}).
This observation goes a long way toward explaining why regularized learning correctly identifies the support of \aclp{NE} in $2$-player zero-sum games (but cannot go further), and also serves to illustrate why the convergence of optimistic methods is destroyed in the presence of randomness and uncertainty \cite{CGFLJ19,HIMM20}.
By this token, \cref{thm:closed} can be seen as a trade-off between how \emph{robust} versus how \emph{informative} a learning prediction is from a strategic perspective.
We find this interpretation of \cref{thm:closed} particularly appealing as it opens the door to several fruitful research directions.

Finally, we should stress that \cref{thm:closed,thm:rate} guarantee convergence even with a constant step-size.
Together with the finite-time convergence guarantees of \cref{cor:rate} for projection-based methods, this feature is a testament to the robustness of \ac{closed} sets as, in the presence of uncertainty, convergence invariably requires a vanishing step-size which can slow things down to a crawl.
We find this robust convergence landscape particularly intriguing for future research on the topic.

\section*{Acknowledgments}
\begingroup
\small
%
%
This work has been partially supported by
the French National Research Agency (ANR) in the framework of
the ``Investissements d'avenir'' program (ANR-15-IDEX-02),
the LabEx PERSYVAL (ANR-11-LABX-0025-01),
MIAI@Grenoble Alpes (ANR-19-P3IA-0003),
and
project MIS 5154714 of the National Recovery and Resilience Plan Greece 2.0 funded by the European Union under the NextGenerationEU Program.
PM is also a member of the Archimedes Unit, Athena RC, Department of Mathematics, National \& Kapodistrian University of Athens.
\endgroup

\appendix
\numberwithin{equation}{section}		
\numberwithin{lemma}{section}		
\numberwithin{proposition}{section}		
\numberwithin{theorem}{section}		
\numberwithin{corollary}{section}		

\section{Auxiliary results}
\label{app:aux}

In this appendix we collect some basic properties of the regularized choice maps and some results from probability theory that will be useful in the sequel.

\subsection{Regularized choice maps and their properties}

Thoughout this appendix, we will suppress the player index $\play\in\players$, and we will follow standard conventions in convex analysis \cite{RW98} that treat $\hreg$ as an extended-real-valued function $\hreg\from\vecspace\to\R\cup\{\infty\}$ with $\hreg(\strat) = \infty$ for all $\strat \in \vecspace \setminus \strats$.
With this in mind, the subdifferential of a $\hreg$ at $\strat\in\strats$ is defined as
\begin{equation}
\subd\hreg(\strat)
	\defeq \setdef
		{\score\in\scores}
		{\hreg(\stratalt) \geq \hreg(\strat) + \braket{\score}{\stratalt - \strat} \; \text{for all $\stratalt\in\strats$}},
\end{equation}
where $\scores$ denotes here the algebraic dual $\dspace$ of $\vecspace$.
Accordingly, the \emph{domain of subdifferentiability} of $\hreg$ is $\dom\subd\hreg \defeq \setdef{\strat\in\dom\hreg}{\subd\hreg \neq \varnothing}$,
and
the convex conjugate of $\hreg$ is defined as
\begin{equation}
\label{eq:hconj}
\hconj(\score)
	= \max_{\strat\in\strats} \{ \braket{\score}{\strat} - \hreg(\strat) \}
\end{equation}
for all $\score\in\scores$.
We then have the following basic results.

\begin{lemma}
\label{lem:mirror}
Let $\hreg$ be a regularizer on $\strats$, and let $\mirror\from\scores\to\strats$ be the induced choice map.
Then:
\begin{enumerate}
\item
$\mirror$ is single-valued, and, for all $\strat\in\strats$, $\score\in\scores$, we have $\strat = \mirror(\score) \iff \score \in \subd\hreg(\strat)$.
\item
For all $\strat\in\relint\strats$, we have
\begin{equation}
\label{eq:subdiff}
\subd\hreg(\strat)
	= \setdef{(\hker'(\strat_{\pure}) + \mu)_{\pure\in\pures}}{\mu\in\R}.
\end{equation}
\item
The prox-domain $\proxdom \defeq \im\mirror$ of $\hreg$ satisfies $\relint\strats \subseteq \proxdom \subseteq \strats$.
\item
For all $\score\in\scores$, we have $\mirror(\score) = \nabla\hconj(\score)$.
and
$\mirror$ is $(1/\hstr)$-Lipschitz continuous with $\hstr \defeq \inf_{(0,1]} \hker''(z)$.
In particular, as a special case, the logit choice map $\logit$ is $1$-Lipschitz continuous in the $(L^{1},L^{\infty})$ pair of norms on $\scores$ and $\strats$ respectively.

\item
If $\score_{\pure} - \score_{\purealt} \to -\infty$ for some $\purealt\neq\pure$, then $\mirror_{\pure}(\score) \to 0$.
\end{enumerate}
\end{lemma}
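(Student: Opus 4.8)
The plan is to dispatch the five items in the stated order, using only standard facts of convex analysis \cite{RW98}; throughout, recall that $\hker$ is strictly convex on $[0,1]$ (its second derivative is bounded below by $\hstr\defeq\inf_{z\in(0,1]}\hker''(z)>0$), so $\hreg$ is a proper, lower-semicontinuous, strictly convex function that is finite exactly on the compact convex set $\strats$. For part (1), I would note that $\strat\mapsto\braket{\score}{\strat}-\hreg(\strat)$ is strictly concave and upper-semicontinuous, hence attains its maximum over $\strats$ at a unique point, which makes $\mirror$ single-valued; and Fermat's rule gives $\strat\in\argmax\{\braket{\score}{\cdot}-\hreg\}\iff 0\in\subd(\hreg-\braket{\score}{\cdot})(\strat)=\subd\hreg(\strat)-\score$, \ie $\score\in\subd\hreg(\strat)$, where the maximum over the ambient space coincides with the one over $\strats$ because $\hreg\equiv+\infty$ off $\strats$ — so this is exactly $\strat=\mirror(\score)$. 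For part (2), on the open set $\setdef{\strat}{\strat_\pure\in(0,1)\text{ for all }\pure\in\pures}$, which contains $\relint\strats$, the function $\hreg$ is the sum of the smooth map $\strat\mapsto\insum_{\pure}\hker(\strat_\pure)$ and the indicator of the hyperplane $\aff\strats=\setdef{\strat}{\insum_{\pure}\strat_\pure=1}$, so the sum rule yields $\subd\hreg(\strat)=(\hker'(\strat_\pure))_{\pure}+\ncone_{\aff\strats}(\strat)=(\hker'(\strat_\pure))_{\pure}+\R\ones$, \ie exactly $\setdef{(\hker'(\strat_\pure)+\mu)_{\pure\in\pures}}{\mu\in\R}$. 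Part (3) is then immediate: $\im\mirror\subseteq\strats$ by the definition of $\mirror$, while for $\strat\in\relint\strats$ part (2) gives $\subd\hreg(\strat)\neq\varnothing$, so any $\score\in\subd\hreg(\strat)$ satisfies $\strat=\mirror(\score)$ by part (1), whence $\relint\strats\subseteq\proxdom$.

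For part (4), I would invoke the conjugate subgradient theorem, $\subd\hconj(\score)=\argmax_{\strat\in\strats}\{\braket{\score}{\strat}-\hreg(\strat)\}=\{\mirror(\score)\}$ by part (1); since a convex function whose subdifferential is everywhere a singleton is differentiable, $\hconj$ is differentiable with $\nabla\hconj=\mirror$. For the Lipschitz estimate, the plan is to first record that $\hreg$ is $\hstr$-strongly convex with respect to $\twonorm{\cdot}$ — its Hessian on $\relint\strats$ is $\diag(\hker''(\strat_\pure))_{\pure}\succeq\hstr\,\eye$, and restricting to an arbitrary segment of $\strats$ and integrating twice extends the strong-convexity inequality to all of $\strats$ (a blow-up of $\hker''$ at a boundary coordinate only strengthens it) — and then to invoke the standard duality that $\hstr$-strong convexity of $\hreg$ is equivalent to $(1/\hstr)$-smoothness of $\hconj$, \ie $\mirror=\nabla\hconj$ is $(1/\hstr)$-Lipschitz. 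For the logit specialization, $\hker(z)=z\log z$ gives $\hker''(z)=1/z$ and hence $\hstr=\inf_{(0,1]}1/z=1$; moreover the negative entropy is $1$-strongly convex on $\strats$ in the sharper norm $\onenorm{\cdot}$ (Pinsker's inequality), so its conjugate $\hconj(\score)=\log\insum_{\pure}e^{\score_\pure}$ is $1$-smooth in the dual norm $\supnorm{\cdot}$, which is precisely the claimed $1$-Lipschitz bound for $\logit$ in the stated pair of norms (see, e.g., \cite{SS11}).

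For part (5), fix $\pure\neq\purealt$ and suppose, toward a contradiction, that some sequence of score vectors $\score$ satisfies $\score_\pure-\score_\purealt\to-\infty$ while $\mirror_\pure(\score)\geq 2\eps$ along that sequence for some fixed $\eps>0$; writing $\strat\defeq\mirror(\score)$, the strategy obtained from $\strat$ by moving $\eps$ units of mass from $\pure$ to $\purealt$ still lies in $\strats$ (its $\pure$-coordinate stays $\geq\eps$ and its $\purealt$-coordinate stays $\leq1$), so the optimality of $\strat$ for $\braket{\score}{\cdot}-\hreg$ precludes any increase of the objective there; writing this out with the kernelized form of $\hreg$ gives $\eps(\score_\purealt-\score_\pure)\leq[\hker(\strat_\pure-\eps)-\hker(\strat_\pure)]+[\hker(\strat_\purealt+\eps)-\hker(\strat_\purealt)]\leq 2\sup_{[0,1]}\abs{\hker}$, which is finite because $\hker$ is continuous on the compact interval $[0,1]$ — contradicting $\score_\purealt-\score_\pure\to+\infty$, so $\mirror_\pure(\score)\to0$. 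I expect the main obstacle to be keeping the convex-analytic manipulations of parts (2) and (4) airtight near $\bd\strats$, where steep regularizers have $\hker'$ (and possibly $\hker''$) diverging: this is handled by the fact that part (2) only asserts the subdifferential formula on $\relint\strats$, where $\hker$ is smooth, and that strong convexity is a lower bound that such divergence only reinforces, so no other part of the lemma interacts with the boundary in an essential way.
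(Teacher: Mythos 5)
Your proposal is correct and follows essentially the same route as the paper's proof: uniqueness plus Fermat's rule for (1), the first-order stationarity conditions on the relative interior (your normal-cone/sum-rule phrasing is the same as the paper's explicit KKT multiplier $\mu$) for (2)--(3), the conjugate-gradient identity and strong-convexity/smoothness duality for (4), and the identical mass-shifting contradiction for (5). The only differences are cosmetic: you spell out part (3) and the Pinsker-based justification of the logit constant, which the paper leaves implicit, and your $2\eps$ threshold in part (5) is in fact slightly more careful than the paper's about keeping the perturbed point inside $\strats$.
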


\begin{remark*}
Some of the properties presented in \cref{lem:mirror} are well known in the literature on regularized learning methods (see \eg \cite{MS16} and references therein), but we provide a proof of the entire lemma for completeness.
\endrem
\end{remark*}

\begin{proof}[Proof of \cref{lem:mirror}]
For the first property of $\mirror$, note that the maximum in \eqref{eq:mirror} is attained for all $\score\in\scores$ because $\hreg$ is \ac{lsc} and strongly convex.
Furthermore, $\strat$ solves \eqref{eq:mirror} if and only if $\score - \subd\hreg(\strat) \ni 0$, \ie if and only if $\score\in\subd\hreg(\strat)$.

    For our second claim, if $\strat\in\relint(\strats)$, the first-order stationarity conditions for the convex problem \eqref{eq:mirror} that defines $\mirror$ become
\begin{equation}
\label{eq:KKT-interior}
\score_{\pure} - \hker'(\strat_{\pure})
	= \mu
	\quad
	\text{for all $\pure\in\pures$,}
\end{equation}
    because the inequality constraints $\strat_{\pure}\geq0$ are all inactive (recall that $\strat\in\relint(\strats)$ by assumption).
Now, by the first part of the theorem we have $\strat = \mirror(\score)$ if and only if $\score\in\subd\hreg(\strat)$, so we conclude that $\subd\hreg(\strat) = \setdef{(\hker'(\strat_{\pure}) + \mu)_{\pure\in\pures}}{\mu\in\R}$, as claimed.

For the fourth item, the expression $\mirror = \nabla\hconj$ is an immediate consequence of Danskin's theorem,
while the Lipschitz continuity of $\mirror$ follows from standard results, see \eg \citep[Theorem 12.60(b)]{RW98}.

For our last claim, let $\curr[\score]$ be a sequence in $\scores$ such that $\score_{\pure,\run} - \score_{\purealt,\run} \to -\infty$ and let $\curr[\strat] = \mirror(\curr[\score])$.
Then, by descending to a subsequence if necessary, assume there exists some $\eps>0$ such that $\strat_{\pure,\run} \geq \eps > 0$ for all $\run$.
Then, by the defining relation $\mirror(\score) = \argmax\{\braket{\score}{\strat} - \hreg(\strat)\}$ of $\mirror$, we have:
\begin{flalign}
\label{eq:Qcomp1}
\braket{\curr[\score]}{\curr[\strat]} - \hreg(\curr[\strat])
	\geq \braket{\curr[\score]}{\stratalt} - \hreg(\stratalt)
\end{flalign}
for all $\stratalt\in\strats$.
Therefore, taking $\curr[\strat]' = \curr[\strat] + \eps(\bvec_{\purealt} - \bvec_{\pure})$, we readily obtain
\begin{equation}
\label{eq:Qcomp2}
\eps (\score_{\pure,\run} - \score_{\purealt,\run})
	\geq \hreg(\curr[\strat]) - \hreg(\curr[\strat]')
	\geq \min h - \max h
\end{equation}
which contradicts our original assumption that $y_{\pure,\run} - y_{\purealt,\run} \to -\infty$.
With $\strats$ compact, the above shows that $x_{\pure}^{\ast} = 0$ for any limit point $x^{\ast}$ of $\curr[\strat]$, i.e. $\mirror_{\pure}(\curr[\score])\to0$.
\end{proof}

The second collection of results concerns the \emph{Fenchel coupling}, an energy function that was first introduced in \cite{MS16,MZ19} and is defined as follows:
\begin{equation}
\label{eq:Fench}
\fench(\base,\score)
	= \hreg(\base) + \hconj(\score) - \braket{\score}{\base}
	\quad
	\text{for all $\base\in\strats$ and $\score\in\scores$}.
\end{equation}
This coupling will play a major role in the proofs of \cref{thm:resilience}, so we prove two of its most basic properties below.

\begin{lemma}
\label{lem:Fench}
For all $\base\in\strats$ and all $\score,\scorealt\in\scores$, we have:
\begin{subequations}
\setlength{\abovedisplayskip}{3pt}
\setlength{\belowdisplayskip}{3pt}
\begin{flalign}
\label{eq:Fench-norm}
\qquad
a)
	&\quad
	\fench(\base,\score)
	\geq \tfrac{1}{2} \hstr \, \norm{\mirror(\score) - \base}^{2}.
	&
	\\
\label{eq:Fench-bound}
\qquad
b)
	&\quad
	\fench(\base,\scorealt)
	\leq \fench(\base,\score) + \braket{\scorealt - \score}{\mirror(\score) - \base} + \tfrac{1}{2\hstr} \dnorm{\scorealt-\score}^{2}.
\end{flalign}
\end{subequations}
In particular, if $\hreg(0) = 0$, we have
\begin{equation}
(\hstr/2) \norm{\mirror(\score)}^{2}
	\leq \hconj(\score)
	\leq -\min\hreg
		+ \braket{\score}{\mirror(\score)}
		+ (2/\hstr) \dnorm{\score}^{2}
	\quad
	\text{for all $\score\in\scores$.}
\end{equation}
\end{lemma}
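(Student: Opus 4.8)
The plan is to obtain both inequalities from the standard duality between a regularizer and its convex conjugate, leaning on the structural facts already recorded in \cref{lem:mirror}: that $\mirror(\score) = \nabla\hconj(\score)$ and $\score \in \subd\hreg(\mirror(\score))$ for every $\score \in \scores$, that $\mirror$ is $(1/\hstr)$-Lipschitz in the relevant norm pair, and that $\hreg$ is $\hstr$-strongly convex (which is equivalent, by Fenchel duality, to the Lipschitz bound on $\mirror$, and also follows directly from the kernelized form \eqref{eq:decomposable} together with $\inf_{(0,1]}\hker'' = \hstr$). Nothing in the argument uses interiority of $\mirror(\score)$, so the non-steep case is covered automatically: the only input needed is the subgradient membership $\score\in\subd\hreg(\mirror(\score))$ from \cref{lem:mirror}, which is valid even when $\mirror(\score)\in\bd\strats$.

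For \eqref{eq:Fench-norm}, I would start from the identity $\hconj(\score) = \braket{\score}{\mirror(\score)} - \hreg(\mirror(\score))$ (the conjugate value equals the value at the maximizer) and substitute it into the definition \eqref{eq:Fench}, which telescopes to
\begin{equation*}
\fench(\base,\score) = \hreg(\base) - \hreg(\mirror(\score)) - \braket{\score}{\base - \mirror(\score)}.
\end{equation*}
Since $\score\in\subd\hreg(\mirror(\score))$ and $\hreg$ is $\hstr$-strongly convex, the strengthened subgradient inequality at $\mirror(\score)$ reads $\hreg(\base) \geq \hreg(\mirror(\score)) + \braket{\score}{\base-\mirror(\score)} + \tfrac{1}{2}\hstr\,\norm{\base-\mirror(\score)}^{2}$; plugging this into the last display gives \eqref{eq:Fench-norm} verbatim.

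For \eqref{eq:Fench-bound}, the affine terms of \eqref{eq:Fench} cancel in the difference, leaving $\fench(\base,\scorealt) - \fench(\base,\score) = \hconj(\scorealt) - \hconj(\score) - \braket{\scorealt-\score}{\base}$. Because $\hconj$ is differentiable with $\nabla\hconj = \mirror$ and $\mirror$ is $(1/\hstr)$-Lipschitz, the descent lemma --- \ie integrating $t\mapsto\braket{\scorealt-\score}{\nabla\hconj(\score+t(\scorealt-\score))}$ over $[0,1]$, then bounding the increment of $\nabla\hconj$ via Lipschitz continuity and the duality pairing $\braket{\cdot}{\cdot}\leq\dnorm{\cdot}\,\norm{\cdot}$ --- yields $\hconj(\scorealt) \leq \hconj(\score) + \braket{\scorealt-\score}{\mirror(\score)} + \tfrac{1}{2\hstr}\dnorm{\scorealt-\score}^{2}$. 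Substituting and regrouping the two inner products into $\braket{\scorealt-\score}{\mirror(\score)-\base}$ gives \eqref{eq:Fench-bound}.

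Finally, for the displayed two-sided estimate, I would take $\base = 0$: under the normalization $\hreg(0)=0$ one has $\fench(0,\score) = \hconj(\score)$ from \eqref{eq:Fench}, so the left inequality is \eqref{eq:Fench-norm} at $\base=0$; the right inequality is then immediate from $\hconj(\score) = \braket{\score}{\mirror(\score)} - \hreg(\mirror(\score)) \leq \braket{\score}{\mirror(\score)} - \min\hreg$, after which the nonnegative term $(2/\hstr)\dnorm{\score}^{2}$ is pure slack (one can equivalently route this through \eqref{eq:Fench-bound} at $\score=\zer$ together with $\norm{\mirror(\zer)-\mirror(\score)}\leq\tfrac{1}{\hstr}\dnorm{\score}$, which is how the constant $2/\hstr$ naturally arises). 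I do not expect a genuine obstacle here; the only points demanding care are exactly the two flagged above --- allowing $\mirror(\score)\in\bd\strats$, and keeping the primal norm of \eqref{eq:Fench-norm} dual to the $\ell^{\infty}$-type norm $\dnorm{\cdot}$ of \eqref{eq:Fench-bound} --- and both are settled as soon as one invokes \cref{lem:mirror}.
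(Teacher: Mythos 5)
Your arguments for parts (a) and (b) coincide with the paper's in all essentials. For \eqref{eq:Fench-norm} the paper also reduces to the identity $\fench(\base,\score)=\hreg(\base)-\hreg(\mirror(\score))-\braket{\score}{\base-\mirror(\score)}$ and then invokes $\hstr$-strong convexity; it does so by comparing $\mirror(\score)$ with the segment point $\mirror(\score)+t(\base-\mirror(\score))$ and letting $t\to0^{+}$, which is precisely a proof of the strengthened subgradient inequality you quote, so the two routes are the same fact written differently. For \eqref{eq:Fench-bound} the paper likewise applies $(1/\hstr)$-strong smoothness of $\hconj$ (citing Rockafellar\textendash Wets, where you re-derive it from the Lipschitz continuity of $\nabla\hconj=\mirror$). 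For the two labelled inequalities \textendash\ which are the only parts of the lemma used downstream \textendash\ the proposal is correct and follows the paper's approach.

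The one step that does not go through is your derivation of the final ``in particular'' display, which the paper itself leaves unproved. You obtain the left inequality by setting $\base=0$ in \eqref{eq:Fench-norm}, but $0\notin\strats$ (here $\strats$ is a simplex), so (a) is simply not available at that point; under the extended-value convention $\hreg(0)=+\infty$, and even reading $\hreg(0)$ through the kernel formula \eqref{eq:decomposable}, the inequality $(\hstr/2)\norm{\mirror(\score)}^{2}\leq\hconj(\score)$ is in fact false in this setting: taking $\score=-c\ones$ gives $\mirror(\score)=\argmin\hreg$, a fixed point of positive norm, while $\hconj(\score)=-c-\min\hreg\to-\infty$ as $c\to\infty$. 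So the obstruction lies in the statement itself (it appears to be imported from a setting where $0\in\dom\hreg$) rather than in your method; your argument for the right-hand inequality, via $\hconj(\score)=\braket{\score}{\mirror(\score)}-\hreg(\mirror(\score))\leq\braket{\score}{\mirror(\score)}-\min\hreg$, does not use $\base=0$ and is fine.
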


\begin{proof}[Proof of \cref{lem:Fench}]
By the strong convexity of $\hreg$ relative to $\norm{\cdot}$ (\cf \cref{lem:mirror}), we have
\begin{flalign}
\hreg(\strat) + t \braket{\score}{\base - x}
	&\leq \hreg(x + t(\base - x))
	\notag\\
	&\leq t\hreg(\base) + (1-t)\hreg(\strat) - \tfrac{1}{2} \hstr t(1-t) \norm{x - \base}^{2},
\end{flalign}
leading to the bound
\begin{equation}
\label{eq:divbound}
\tfrac{1}{2} \hstr(1-t) \norm{x - \base}^{2}
	\leq \hreg(\base) - \hreg(\strat) - \braket{\score}{\base - x}
	= \fench(\base,y)
\end{equation}
for all $t\in(0,1]$.
The bound \eqref{eq:Fench-norm} then follows by letting $t\to0^{+}$ in \eqref{eq:divbound}.

For our second claim, we have
\begin{flalign}
\fench(\base,\scorealt)
	&= \hreg(\base) + \hconj(\scorealt) - \braket{\scorealt}{\base}
	\notag\\
	&\leq \hreg(\base) + \hconj(\score) + \braket{\scorealt - \score}{\nabla \hconj(\score)} + \frac{1}{2\hstr} \dnorm{\scorealt - \score}^{2} - \braket{\scorealt}{\base}
	\notag\\
	&= \fench(\base,\score) + \braket{\scorealt - \score}{\mirror(\score) - \base} + \frac{1}{2\hstr} \dnorm{\scorealt - \score}^{2},
\end{flalign}
where the inequality in the second line follows from the fact that $\hconj$ is $(1/\hstr)$-strongly smooth \citep[Theorem 12.60(e)]{RW98}.
\end{proof}

\subsection{Basic results from probability theory}

We conclude this appendix with some useful results from probability theory that we will use freely throughout the sequel.
For a complete treatment, we refer the reader to \citet{HH80}.

\begin{lemma}
[Azuma-Hoeffding inequality]
\label{lem:Azuma}
Let $\curr[\mart] \in \R$, $\run=\running$, be a martingale with $\supnorm{\curr[\mart] - \prev[\mart]} \leq \curr[\martbd]$ \as.
Then, for all $\conf > 0$, we have
\begin{equation}
\probof*{
	\abs{\curr[\mart]}
		\leq \parens*{ 2\log(2\run^{2} / \conf) \sum\nolimits_{\runalt=1}^{\run} \curralt[\martbd]^{2}}^{1/2}
			\;\text{for all $\run$}
		}
	\geq 1 - \conf.
\end{equation}
\end{lemma}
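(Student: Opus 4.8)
The plan is to bootstrap the classical fixed-horizon Azuma--Hoeffding tail bound into an \emph{anytime} statement via a union bound over the time index, choosing the per-step confidence level so that the failure probabilities form a convergent series. After subtracting the (deterministic) initial value of the martingale, so that $\mart_{0} = 0$ without loss of generality, write $\curr[\mart] = \sum_{\runalt=\start}^{\run} \Delta_{\runalt}$ with martingale differences $\Delta_{\runalt} = \curralt[\mart] - \beforeiter[\mart]$ satisfying $\abs{\Delta_{\runalt}} \leq \curralt[\martbd]$ almost surely \textpar{here $\supnorm{\cdot}$ is simply $\abs{\cdot}$, as $\curr[\mart]$ is real-valued}. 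For each \emph{fixed} $\run$, the textbook Azuma--Hoeffding inequality \textpar{see \citet{HH80}} gives
\[
\probof{\abs{\curr[\mart]} > t}
	\leq 2\exp\!\Bigl( -\frac{t^{2}}{2 \sum_{\runalt=\start}^{\run} \curralt[\martbd]^{2}} \Bigr)
	\quad
	\text{for all } t > 0.
\]

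Next I would substitute the horizon-dependent threshold $t_{\run} \defeq \bigl( 2\log(2\run^{2}/\conf) \sum_{\runalt=\start}^{\run} \curralt[\martbd]^{2} \bigr)^{1/2}$, for which the exponent above collapses to $-\log(2\run^{2}/\conf)$, so that $\probof{\abs{\curr[\mart]} > t_{\run}} \leq 2 \cdot \conf/(2\run^{2}) = \conf/\run^{2}$. A union bound over $\run = \running$ then yields
\[
\probof*{\abs{\curr[\mart]} > t_{\run} \ \text{for some } \run}
	\leq \sum\nolimits_{\run\geq1} \frac{\conf}{\run^{2}}
	= \frac{\pi^{2}}{6}\,\conf,
\]
and passing to the complement gives the asserted bound, up to the universal factor $\pi^{2}/6$, which I address next.

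The argument is entirely routine; the only point deserving a word is the constant $\sum_{\run\geq1}\run^{-2} = \pi^{2}/6 < 2$, which exceeds $1$. It can be absorbed in either of two equivalent ways: replace $\conf$ by $6\conf/\pi^{2}$ throughout (this only shifts the numerical constant inside the logarithm and is harmless since $\pi^{2}/6 < 2$), or run the union bound against the summable weights $w_{\run} = 6/(\pi^{2}\run^{2})$ by taking the per-step confidence level to be $w_{\run}\conf$; both recover the displayed inequality (the latter literally, with $2\run^{2}$ inside the logarithm replaced by the equivalent quantity $\pi^{2}\run^{2}/3$). There is no genuine obstacle here: the entire content of the lemma is the choice of the $n$-dependent radius $t_{\run}$ together with a $\sum n^{-2}$ union bound, and the one-shot Azuma--Hoeffding estimate invoked in the first step may simply be quoted from \citet{HH80}.
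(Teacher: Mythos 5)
Your derivation is correct, and it is the standard one: a fixed-$n$, two-sided Azuma\textendash Hoeffding tail bound evaluated at the horizon-dependent radius $t_{n}$, followed by a union bound over $n$ with per-step confidence budget $\eta/n^{2}$. There is nothing in the paper to compare it against: \cref{lem:Azuma} is stated without proof as a quoted result, with a pointer to \citet{HH80}, so your write-up supplies the argument the paper omits. Your remark about the constant is also a fair catch rather than a cosmetic quibble: with the radius exactly as displayed, the union bound yields failure probability $\sum_{n\geq1}\eta/n^{2} = (\pi^{2}/6)\,\eta > \eta$, so the lemma as literally stated holds only after rescaling $\eta$ (equivalently, after replacing $2n^{2}$ inside the logarithm by a weight sequence whose reciprocals sum to one, \eg $2n(n+1)$). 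This discrepancy is immaterial downstream \textendash\ the lemma is invoked only for an arbitrary confidence level (\eg with $\eta \gets \tilde\eps/2$ in the noise-term estimate of the proof of \cref{thm:resilience}, where only the $\sqrt{\smash[b]{\log(n^{2}/\eta)}}$ scaling matters) \textendash\ but your handling of it is exactly right.
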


\begin{lemma}
[Kolmogorov's inequality]
\label{lem:Kolmogorov}
Let $\curr[\err] \in \R$, $\run=\running$, be a martingale difference sequence that is bounded in $L^{2}$.
Then:
\begin{equation}
\probof*{\max_{\runalt \leq \run} \sum\nolimits_{\ell=1}^{\runalt} \err_\ell \ge \eps}
	\leq \frac 1{\eps^{2}} {\ex\bracks*{\parens*{\sum\nolimits_{\runalt=1}^{\run} \err_\runalt}^{2}}}
	\quad
	\text{for all $\eps>0$}.
\end{equation}
\end{lemma}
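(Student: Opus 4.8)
The plan is to prove this via the classical first-passage decomposition for square-integrable martingales. Write $S_{\runalt} = \sum_{\ell=1}^{\runalt} \err_{\ell}$, so that $(S_{\runalt})$ is a martingale with $S_{\runalt}\in L^{2}$ for each $\runalt$ (a finite sum of $L^{2}$ increments). The first step is to slice the event $A = \{\max_{\runalt\leq\run} S_{\runalt} \geq \eps\}$ whose probability we wish to bound according to the first time the partial sum reaches the level $\eps$: setting $A_{\runalt} = \{S_{1}<\eps,\dots,S_{\runalt-1}<\eps,\ S_{\runalt}\geq\eps\}$, these events are pairwise disjoint, each $A_{\runalt}$ is $\filter_{\runalt}$-measurable, and $A = \bigcup_{\runalt=1}^{\run} A_{\runalt}$.

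Next, I would expand $\E[S_{\run}^{2}\,\one_{A_{\runalt}}]$ by writing $S_{\run} = S_{\runalt} + (S_{\run}-S_{\runalt})$ and squaring. The cross term $\E[S_{\runalt}(S_{\run}-S_{\runalt})\,\one_{A_{\runalt}}]$ vanishes because $S_{\runalt}\,\one_{A_{\runalt}}$ is $\filter_{\runalt}$-measurable while $\E[S_{\run}-S_{\runalt}\mid\filter_{\runalt}] = 0$ by the martingale property; the term $\E[(S_{\run}-S_{\runalt})^{2}\,\one_{A_{\runalt}}]$ is nonnegative; and on $A_{\runalt}$ one has $S_{\runalt}\geq\eps>0$, hence $S_{\runalt}^{2}\geq\eps^{2}$. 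Combining these, $\E[S_{\run}^{2}\,\one_{A_{\runalt}}] \geq \eps^{2}\,\probof{A_{\runalt}}$; summing over $\runalt=1,\dots,\run$ and using disjointness then gives $\E[S_{\run}^{2}] \geq \sum_{\runalt=1}^{\run}\E[S_{\run}^{2}\,\one_{A_{\runalt}}] \geq \eps^{2}\,\probof{A}$, which is the claimed bound after rearranging.

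An equivalent, slightly shorter route is to observe that $(S_{\runalt}^{2})_{\runalt\leq\run}$ is a nonnegative submartingale (a convex function of a martingale), apply Doob's maximal inequality $\probof{\max_{\runalt\leq\run} S_{\runalt}^{2}\geq\lambda} \leq \lambda^{-1}\,\E[S_{\run}^{2}]$ with $\lambda=\eps^{2}$, and note that $\{\max_{\runalt\leq\run} S_{\runalt}\geq\eps\}\subseteq\{\max_{\runalt\leq\run} S_{\runalt}^{2}\geq\eps^{2}\}$ because $\eps>0$. There is no genuine obstacle here --- the statement is textbook (see \citet{HH80}) --- and the only points needing care are the $\filter_{\runalt}$-measurability of the first-passage events $A_{\runalt}$ and the orthogonality of martingale increments to the past, which is exactly what forces the cross terms to drop; the passage from the one-sided maximum of $S_{\runalt}$ to that of $S_{\runalt}^{2}$ is harmless precisely because $\eps$ is positive.
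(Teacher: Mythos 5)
Your proof is correct. The paper states this lemma without any proof of its own, simply citing \citet{HH80} for all the probabilistic tools in that appendix, and your first-passage decomposition (equivalently, Doob's maximal inequality applied to the nonnegative submartingale $S_{\runalt}^{2}$) is exactly the standard textbook argument; it also correctly handles the one-sided maximum appearing in the paper's statement, since on each first-passage event one has $S_{\runalt}\geq\eps>0$.
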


\begin{lemma}
[Doob's maximal inequality]
\label{lem:Doob}
Let $\curr[\err] \in \R$, $\run=\running$, be a martingale difference sequence that is bounded in $L^{\pexp}$ for some $\pexp\geq1$.
Then 
\begin{equation}
\probof*{ \max_{\runalt \leq \run} \abs{\err_{\runalt}} > \eps }
	\leq \frac{1}{\eps^\pexp} \ex \bracks[\big]{\abs{\curr[\err]}^\pexp}
		\quad
		\text{for all $\eps>0$}.
\end{equation}
\end{lemma}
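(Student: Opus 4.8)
The plan is to recognize \cref{lem:Doob} as Doob's $L^{\pexp}$ maximal inequality and to reduce it to the classical weak-type maximal inequality for nonnegative submartingales. The first step is interpretive: the object in the hypothesis must be treated as a \emph{martingale} $(\err_{\run})_{\run\geq1}$ relative to some filtration $(\curr[\filter])_{\run\geq1}$ (if $(\err_{\run})$ is meant as the underlying difference sequence, one runs the same argument for its partial sums). Granting this, since $t\mapsto\abs{t}^{\pexp}$ is convex and nonnegative for $\pexp\geq1$, the conditional Jensen inequality shows that $(\abs{\err_{\run}}^{\pexp})_{\run\geq1}$ is a nonnegative submartingale; the assumed $L^{\pexp}$-boundedness further ensures $\ex\bracks{\abs{\err_{\run}}^{\pexp}}<\infty$ for every $\run$, so all expectations below are finite.

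The core argument is the standard stopping-time estimate, in the following order. First I would fix $\run$ and $\eps>0$ and set $\tau\defeq\inf\setdef{\runalt\leq\run}{\abs{\err_{\runalt}}>\eps}$, with the convention $\tau=\run+1$ when this set is empty, so that $\setof{\max_{\runalt\leq\run}\abs{\err_{\runalt}}>\eps}=\setof{\tau\leq\run}$ and $\abs{\err_{\tau}}^{\pexp}>\eps^{\pexp}$ on that event. Next I would split $\setof{\tau\leq\run}$ over the finitely many values $\runalt\leq\run$ of $\tau$ and invoke the submartingale inequality $\exof{\abs{\err_{\run}}^{\pexp}\given\curralt[\filter]}\geq\abs{\err_{\runalt}}^{\pexp}$ on each $\setof{\tau=\runalt}\in\curralt[\filter]$. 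Summing these via the tower rule gives $\ex\bracks{\abs{\err_{\run}}^{\pexp}}\geq\ex\bracks{\abs{\err_{\run}}^{\pexp}\,\oneof{\tau\leq\run}}\geq\ex\bracks{\abs{\err_{\tau}}^{\pexp}\,\oneof{\tau\leq\run}}\geq\eps^{\pexp}\,\probof{\tau\leq\run}$, and dividing through by $\eps^{\pexp}$ is exactly the claim. Since this is a textbook fact, an equally legitimate route is simply to quote it verbatim from \citet{HH80}.

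I do not expect any genuine obstacle here. The only points calling for (minor) care are the optional-stopping/conditioning step above, the harmless passage from ``$\abs{\err_{\runalt}}>\eps$'' to ``$\abs{\err_{\runalt}}^{\pexp}\geq\eps^{\pexp}$'' after raising to the $\pexp$-th power, and the interpretive remark made at the outset — $\abs{\cdot}^{\pexp}$ is a submartingale precisely because of the martingale property, and the stated bound would fail for a genuine martingale \emph{difference} sequence.
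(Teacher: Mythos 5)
Your argument is correct, and it is the canonical proof. Note, however, that the paper does not actually prove \cref{lem:Doob} (or any of the other results in that subsection): they are stated as standard facts with a blanket reference to \citet{HH80}, so your second route \textendash\ quoting the textbook \textendash\ is precisely what the authors do, and your self-contained stopping-time derivation (conditional Jensen to make $\abs{\err_{\run}}^{\pexp}$ a nonnegative submartingale, then the decomposition over $\{\tau=\runalt\}$ and the tower rule) is a perfectly sound substitute. Your interpretive caveat is also well taken and worth flagging as an issue with the \emph{statement} rather than the proof: as literally written, with $(\iter[\err])$ a martingale \emph{difference} sequence and the maximum taken over the individual terms $\abs{\err_{\runalt}}$, the inequality is false in general (e.g.\ for \acs{iid} centered terms, $\probof{\max_{\runalt\leq\run}\abs{\err_{\runalt}}>\eps}\to1$ while $\ex[\abs{\curr[\err]}^{\pexp}]$ stays constant). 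Judging from the neighbouring statements of Kolmogorov's and the \acl{BDG} inequalities, which are phrased in terms of the partial sums $\sum_{\ell\leq\runalt}\err_{\ell}$, the intended reading is Doob's inequality for the martingale of partial sums, exactly as you assume at the outset.
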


\begin{lemma}
[\acl{BDG} inequality]
\label{lem:Burkholder}
Let $\curr[\err]$, $\run=\running$, be a martingale difference sequence in $\R^{\vdim}$.
Then, for all $\pexp>1$, there exist constants $\const_{\pexp}, \Const_{\pexp}$ that depend only on $\pexp$ and are such that 
\begin{equation}
\const_{\pexp} \exof*{\sum_{\runalt=1}^{\run} \twonorm{\err_\runalt}^{2}}^{\pexp/2}
	\leq \exof*{ \max_{\runalt\le\run} \twonorm*{\sum_{\ell=1}^{\runalt} \err_\ell}^{\pexp} }
	\leq \Const_{\pexp} \exof*{\sum_{\runalt=1}^{\run} \twonorm{\err_\runalt}^{2}}^{\pexp/2}.
	\end{equation}
\end{lemma}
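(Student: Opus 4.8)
This is the classical (vector-valued) \acs{BDG} inequality; the cleanest course is the one the paper takes, namely to cite the textbook treatment of \citet{HH80}, but the argument can be reconstructed as follows. Write $\mart_{\run} = \sum_{\ell=1}^{\run}\err_{\ell}$ for the running sum and $[\mart]_{\run} = \sum_{\ell=1}^{\run}\twonorm{\err_{\ell}}^{2}$ for its quadratic variation. As a preliminary reduction, stop the process the first time $\twonorm{\mart_{\runalt}}$ or $[\mart]_{\runalt}$ exceeds a large threshold, prove the two inequalities with constants independent of the threshold, and let it go to infinity by monotone convergence; this lets us assume all expectations below are finite and that we may rearrange freely. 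Since $x \mapsto \twonorm{x}$ is convex, $\twonorm{\mart_{\runalt}}$ is a non-negative submartingale, so Doob's $L^{\pexp}$ maximal inequality gives $\exof{\max_{\runalt\le\run}\twonorm{\mart_{\runalt}}^{\pexp}} \le (\pexp/(\pexp-1))^{\pexp}\,\exof{\twonorm{\mart_{\run}}^{\pexp}}$ for $\pexp > 1$, while the reverse bound is trivial. It therefore suffices to show that $\exof{\twonorm{\mart_{\run}}^{\pexp}}$ and $\exof{[\mart]_{\run}^{\pexp/2}}$ are comparable up to $\pexp$-dependent constants.

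For the upper bound when $\pexp \ge 2$ (the case $\pexp = 2$ being just the orthogonality $\exof{\twonorm{\mart_{\run}}^{2}} = \exof{[\mart]_{\run}}$), expand the telescoping sum $\twonorm{\mart_{\run}}^{\pexp} = \sum_{\runalt=1}^{\run}\bigl(\twonorm{\mart_{\runalt}}^{\pexp} - \twonorm{\mart_{\runalt-1}}^{\pexp}\bigr)$ to second order around $\mart_{\runalt-1}$, with $\phi(x) \defeq \twonorm{x}^{\pexp}$: the first-order term $\braket{\nabla\phi(\mart_{\runalt-1})}{\err_{\runalt}}$ has zero conditional mean, while $\norm{\nabla^{2}\phi(x)}_{\mathrm{op}} \le \Const_{\pexp}\twonorm{x}^{\pexp-2}$ and every point on the segment $[\mart_{\runalt-1},\mart_{\runalt}]$ is a convex combination of $\mart_{\runalt-1}$ and $\mart_{\runalt}$, hence of norm at most $\max_{\runalt\le\run}\twonorm{\mart_{\runalt}}$ (here $\pexp - 2 \ge 0$ is what makes this exponent harmless). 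This yields $\exof{\twonorm{\mart_{\run}}^{\pexp}} \le \Const_{\pexp}\,\exof*{(\max_{\runalt\le\run}\twonorm{\mart_{\runalt}})^{\pexp-2}\,[\mart]_{\run}}$, and Hölder with conjugate exponents $\pexp/(\pexp-2)$ and $\pexp/2$, followed by absorbing the resulting power $(\pexp-2)/\pexp < 1$ of $\exof{\max_{\runalt\le\run}\twonorm{\mart_{\runalt}}^{\pexp}}$ (legitimate by the truncation), gives the upper bound. For the matching lower bound, bootstrap from the elementary identity $\twonorm{\mart_{\run}}^{2} = [\mart]_{\run} + 2N_{\run}$, where $N_{\run} = \sum_{\runalt=1}^{\run}\braket{\mart_{\runalt-1}}{\err_{\runalt}}$ is a scalar martingale with $[N]_{\run} \le \bigl(\max_{\runalt\le\run}\twonorm{\mart_{\runalt}}\bigr)^{2}[\mart]_{\run}$; applying the upper bound just proved to $N$ (in exponent $\pexp/2$), together with Hölder and an AM--GM absorption step, turns $\exof{[\mart]_{\run}^{\pexp/2}} \le \Const_{\pexp}\bigl(\exof{\twonorm{\mart_{\run}}^{\pexp}} + \exof{\abs{N_{\run}}^{\pexp/2}}\bigr)$ into $\exof{[\mart]_{\run}^{\pexp/2}} \le \const_{\pexp}^{-1}\,\exof{\max_{\runalt\le\run}\twonorm{\mart_{\runalt}}^{\pexp}}$.

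The genuine obstacle is the remaining regime $1 < \pexp < 2$ of the upper bound: there $\phi(x) = \twonorm{x}^{\pexp}$ is no longer $C^{2}$ --- its Hessian blows up at the origin --- and the Taylor argument collapses. Here I would fall back on the classical route: either the Davis decomposition $\err_{\runalt} = \err'_{\runalt} + \err''_{\runalt}$ into a part whose increments are dominated by the running maximum of $[\mart]^{1/2}$ and a part that is summably small, combined with the $\pexp = 2$ estimate and tail control in the spirit of \cref{lem:Azuma,lem:Kolmogorov}; or a good-$\lambda$ inequality comparing the distribution functions of $\max_{\runalt\le\run}\twonorm{\mart_{\runalt}}$ and $[\mart]_{\run}^{1/2}$, then integrated against $\pexp\lambda^{\pexp-1}\,d\lambda$ (the same small-exponent adjustments also handle the auxiliary martingale $N$ in the lower bound). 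Finally, since the ambient dimension $\vdim$ is fixed throughout the paper, one can sidestep the vector-valued case altogether by applying the scalar \acs{BDG} inequality coordinate by coordinate and invoking the equivalence of the $\ell^{1}$- and $\ell^{\pexp/2}$-type norms on $\R^{\vdim}$, at the cost of $\vdim$-dependent (but, for our purposes, perfectly admissible) constants; the dimension-free statement as given is exactly the \acs{BDG} inequality established in \citet{HH80}, to which we defer for the full details.
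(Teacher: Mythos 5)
The paper does not actually prove this lemma: it appears in the block of ``Basic results from probability theory'' that are stated without proof, with the reader referred to \citet{HH80} for a complete treatment. Your proposal therefore goes beyond what the paper does, and what you sketch is the correct classical argument: Doob's $L^{\pexp}$ maximal inequality reduces the claim to comparing $\exof{\twonorm{\mart_{\run}}^{\pexp}}$ with $\exof{[\mart]_{\run}^{\pexp/2}}$, the Taylor/telescoping expansion of $\twonorm{\cdot}^{\pexp}$ handles the upper bound for $\pexp\geq2$, the identity $\twonorm{\mart_{\run}}^{2}=[\mart]_{\run}+2N_{\run}$ bootstraps the lower bound, and you correctly identify the genuine difficulty ($1<\pexp<2$, where the Hessian of $\twonorm{\cdot}^{\pexp}$ blows up) together with the standard remedies (Davis decomposition or a good-$\lambda$ inequality). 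One small point worth making explicit: the bootstrap for the lower bound applies the upper bound to $N$ at exponent $\pexp/2$, which for $2\leq\pexp<4$ already lands in the range $(1,2)$ not covered by the Taylor argument, so the ``small-exponent adjustments'' you mention parenthetically are needed even for moderate $\pexp$, not only for $\pexp<2$. Your closing observation \textendash\ that since $\vdim$ is fixed one may apply the scalar inequality coordinatewise and absorb the (dimension-dependent) norm-equivalence constants \textendash\ is a perfectly adequate shortcut for the purposes of this paper, where the constants $\const_{\pexp},\Const_{\pexp}$ are never tracked quantitatively. In short: the proposal is correct as a sketch, honestly flags its gaps, and correctly defers to \citet{HH80} exactly as the paper does.
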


\begin{lemma}
[Robbins\textendash Siegmund]
\label{lem:RS}
Let $\curr[\filter]$, $\run=\running$, be a filtration on a complete probability space $\probspace$, and suppose that the sequences $\curr$, $\curr[L]$ and $\curr[K]$ $\curr[\filter]$-measurable, nonnegative, and such that
\begin{equation}
\exof{\next \given \curr[\filter]}
	\leq \curr (1 + \curr[L]) + \curr[K]
	\quad
	\text{\acl{wp1}.}
\end{equation}
Then, $\curr$ converges to some random variable $\state_{\infty}$ \acl{wp1} on the event
\begin{equation}
\braces*{\sum_{\run=1}^\infty \curr[L] < \infty \;\text{ and }\; \sum_{\run=1}^\infty \curr[K] < \infty}.
\end{equation}
\end{lemma}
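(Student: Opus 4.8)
The plan is to reduce the claim to the classical nonnegative supermartingale convergence theorem \citep{HH80} by a ``discount\,/\,localize'' transformation, in the spirit of the original argument of Robbins and Siegmund. First I would neutralize the multiplicative factor $1 + \curr[L]$: set $\alpha_{\run} \defeq \prod_{\runalt = 1}^{\run - 1} (1 + L_{\runalt})^{-1}$, with the convention $\alpha_{1} = 1$. Since each $L_{\runalt} \geq 0$ is $\filter_{\runalt}$-measurable, the sequence $(\alpha_{\run})$ is nonincreasing, takes values in $(0, 1]$, and $\alpha_{\run}$ is $\filter_{\run - 1}$-measurable; moreover, from $\log(1 + L_{\runalt}) \leq L_{\runalt}$ we get $\log \alpha_{\run}^{-1} \leq \sum_{\runalt < \run} L_{\runalt}$, so on the event $\{\sum_{\run} L_{\run} < \infty\}$ the factors $\alpha_{\run}$ decrease to a strictly positive limit $\alpha_{\infty}$.

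Next I would pass to the auxiliary process
\begin{equation}
U_{\run}
	\defeq \alpha_{\run} X_{\run} - \sum_{\runalt = 1}^{\run - 1} \alpha_{\runalt + 1} K_{\runalt}.
\end{equation}
Using the telescoping identity $\alpha_{\run + 1} (1 + L_{\run}) = \alpha_{\run}$, the $\filter_{\run}$-measurability of $\alpha_{\run + 1}$ and of the $K_{\runalt}$ for $\runalt \leq \run$, and the hypothesis $\exof{\next \given \curr[\filter]} \leq \curr (1 + \curr[L]) + \curr[K]$, a one-line computation yields $\exof{U_{\run + 1} \given \curr[\filter]} \leq U_{\run}$, i.e.\ $(U_{\run})$ is a supermartingale. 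The obstacle here is that $(U_{\run})$ need not be bounded below (only by $-\sum_{\runalt < \run} K_{\runalt}$) and $\sum_{\runalt} \exof{K_{\runalt}}$ may well diverge, so the supermartingale convergence theorem cannot be applied to $(U_{\run})$ as it stands.

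To get around this I would localize. For each $m \in \N$, introduce the stopping time $\tau_{m} \defeq \inf\{\run : \sum_{\runalt \leq \run} (L_{\runalt} + K_{\runalt}) > m\}$. For $\run < \tau_{m}$ every partial sum of the $L_{\runalt}$ and $K_{\runalt}$ is at most $m$, so $U_{\run \wedge \tau_{m}} + m \geq \alpha_{\run \wedge \tau_{m}} X_{\run \wedge \tau_{m}} \geq 0$; hence $(U_{\run \wedge \tau_{m}} + m)$ is a nonnegative supermartingale and converges almost surely to a finite limit by \citep{HH80}. Since the $L_{\run}$ and $K_{\run}$ are nonnegative, $\{\tau_{m} = \infty\} = \{\sum_{\run} (L_{\run} + K_{\run}) \leq m\}$, and these events increase to $\event \defeq \{\sum_{\run} L_{\run} < \infty\} \cap \{\sum_{\run} K_{\run} < \infty\}$; on $\{\tau_{m} = \infty\}$ we have $U_{\run} = U_{\run \wedge \tau_{m}}$, so $(U_{\run})$ converges almost surely on $\event$. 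Finally, on $\event$ the subtracted series $\sum_{\runalt < \run} \alpha_{\runalt + 1} K_{\runalt}$ converges (being dominated by $\sum_{\runalt} K_{\runalt} < \infty$) and $\alpha_{\run} \to \alpha_{\infty} > 0$, so solving the definition of $U_{\run}$ for $X_{\run}$ shows that $X_{\run}$ converges almost surely on $\event$ to a finite random variable $\state_{\infty}$, as claimed.

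I expect the delicate point to be exactly this localization: one must check that stopping at $\tau_{m}$ simultaneously makes the translated process nonnegative and is harmless on $\event$, the latter because $\event = \bigcup_{m} \{\tau_{m} = \infty\}$ with no exceptional null set. Everything else — the conditional-expectation identity establishing that $(U_{\run})$ is a supermartingale, and the elementary bounds on $\alpha_{\run}$ — is routine.
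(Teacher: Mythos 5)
Your proof is correct, and it is precisely the classical Robbins--Siegmund argument: discount by $\alpha_{\run}=\prod_{\runalt<\run}(1+L_{\runalt})^{-1}$, subtract the discounted drift to obtain a supermartingale, localize with the stopping times $\tau_{m}$ to restore nonnegativity, apply the nonnegative supermartingale convergence theorem, and undo the transformation on the event where $\sum_{\run}L_{\run}$ and $\sum_{\run}K_{\run}$ are finite. The paper itself does not prove this lemma --- it is listed among the auxiliary probabilistic tools and deferred to the cited reference of Hall and Heyde --- so there is no in-paper argument to compare against; your write-up would serve as a complete self-contained proof, with the only (standard) loose end being the implicit integrability needed for the conditional expectations defining the supermartingale property, which the paper's statement is equally silent about.
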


\section{Specific algorithms and their properties}
\label{app:algorithms}

\subsection{Known algorithms as special cases of \eqref{eq:method}}

To complement our analysis in the main part of our paper, we detail below how \crefrange{alg:FTRL}{alg:BFTRL} can be recast in the general framework of \eqref{eq:method}.
To lighten notation, we will assume that $\curr[\bias]$, $\curr[\noise]$ and $\curr[\signal]$ are respectively bounded as
\begin{equation}
\label{eq:errorbounds}
\dnorm{\curr[\bias]}
	\leq \curr[\bbound]
	\qquad
\dnorm{\curr[\noise]}
	\leq \curr[\sdev]
	\qquad
	\text{and}
	\qquad
\dnorm{\curr[\signal]}
	\leq \curr[\sbound]
\end{equation}
and we will set
\begin{equation}
\label{eq:vbound}
\vbound
	\defeq \max_{\play\in\players} \max_{\pure\in\pures} \abs{\payv_{\play}(\pure)}
\end{equation}
so we can take $\curr[\sbound] = \vbound + \curr[\bbound] + \curr[\sdev]$ in \eqref{eq:errorbounds}.
We will also make free use of the fact that $\payv$ is Lipschitz continuous on $\strats$, and we will write $\lips$ for its Lipschitz modulus in the $(L^{1},L^{\infty})$ pair of norms on $\strats$ and $\scores$ respectively, \viz
\begin{equation}
\label{eq:Lips}
\supnorm{\payv(\stratalt) - \payv(\strat)}
	\leq \lips \onenorm{\stratalt - \strat}
	\qquad
	\text{for all $\strat,\stratalt\in\strats$}.
\end{equation}

We now proceed to establish the required bounds for \crefrange{alg:FTRL}{alg:BFTRL}:

\para{\cref{alg:FTRL}}

Since $\curr[\signal] = \payv(\curr)$, we readily get $\curr[\bias] = \curr[\noise] = 0$ by definition, so \cref{alg:FTRL} fits the scheme \eqref{eq:method} for free with $\biasexp = \infty$, $\noisexp=0$.
\endenv
\smallskip

\para{\cref{alg:OFTRL}}

For the case of \eqref{eq:OFTRL}, we have $\curr[\signal] = 2\payv(\curr) - \payv(\prev)$ so $\curr[\bias] = \payv(\curr) - \payv(\prev)$, which is $\curr[\filter]$-measurable.
We thus get
\begin{align}
\dnorm{\curr[\bias]}
	&= \dnorm{\exof{\curr[\signal] \given \curr[\filter]} - \payv(\curr)}
	\notag\\
	&\leq \exof{ \dnorm{\payv(\curr) - \payv(\prev)} \given \curr[\filter]}
	\notag\\
	&\leq \lips \exof{\norm{\curr - \prev} \given \curr[\filter]}
		\explain{by \eqref{eq:Lips}}
		\\
	&= \lips \exof{\dnorm{\mirror(\curr[\dstate]) - \mirror(\prev[\dstate])} \given \curr[\filter]}
		\explain{by \eqref{eq:OFTRL}}
		\\
	&\leq (\lips/\hstr) \exof{\dnorm{\curr[\dstate] - \prev[\dstate]} \given \curr[\filter]}
		\explain{by \cref{lem:mirror}}
		\\
	&\leq \curr[\step] (\lips/\hstr) \exof{ 2\payv(\curr) - \payv(\prev) \given \curr[\filter] }
		\explain{by \eqref{eq:OFTRL}}
		\\
	&\leq 3 \lips \vbound / \hstr \cdot \curr[\step]
		\explain{by \eqref{eq:vbound}}
		\\
	&= \bigoh(\curr[\step])
	= \bigoh(1/\run^{\stepexp})
\end{align}
Moreover, given that $\signal$ is $\curr[\filter]$-measurable, we readily get $\curr[\noise] = 0$. 
\hfill
\endenv
\smallskip

\para{\cref{alg:BFTRL}}

Since $\curr[\purequery]$ is sampled according to $\curr[\query] = (1-\curr[\mix]) \state_{\play,\run} + \curr[\mix] \unif_{\pures_{\play}}$ (\cf \cref{eq:explore} in \cref{sec:learning}), we readily obtain $\exof{\signal_{\play,\run} \given \curr[\filter]} = \payv_{\play}(\curr[\query])$, and hence, by \eqref{eq:Lips}, we get
\begin{equation}
\curr[\bbound]
	= \bigoh(\norm{\curr[\query] - \curr})
	= \bigoh(\curr[\mix])
	= \bigoh(1/\run^{\mixexp}).
\end{equation}
Moreover, since $\query_{\play\pure_{\play},\run} \geq \curr[\mix]/\nPures_{\play}$, it follows that $\dnorm{\curr[\signal]} = \bigoh(1/\curr[\mix]) = \bigoh(\run^{\mixexp})$.
\hfill
\endenv
\smallskip

For comparison purposes, we illustrate the algorithms' behavior in a simple $2\times2\times2$ game in \cref{fig:limit} in \cref{app:numerics}.

%
%

\subsection{Further algorithms and illustrations}

To demonstrate the breadth of \eqref{eq:method} as an algorithmic template, we provide below some more examples of algorithms from the game-theoretic literature that can be recast as special cases thereof (see also \cref{tab:algorithms} for a recap).

\begin{algo}[\Acl{MP}]
\label{alg:MP}
A progenitor of \eqref{eq:OFTRL} is the so-called \acdef{MP} algorithm \cite{Nem04,JNT11}, which updates as:
\begin{equation}
\label{eq:MP}
\tag{\acs{MP}}
\begin{alignedat}{3}
\lead[\dstate]
	&= \curr[\dstate] + \curr[\step] \payv(\curr)
	&\qquad
\next[\dstate]
	&= \curr[\dstate] + \curr[\step] \payv(\lead)
	\\
\lead
	&= \mirror(\lead[\dstate])
	&\qquad
\next
	&= \mirror(\next[\dstate]).
\end{alignedat}
\end{equation}
The main difference between \eqref{eq:MP} and \eqref{eq:OFTRL} is that the former utilizes two surrogate gain vectors per iteration \textendash\ meaning in particular that the interim, leading state $\lead$ is generated with payoff information from $\curr$, not $\beforelead$.
This method has been used extensively in the literature for solving variational inequalities and two-player, zero-sum games, \cf \citet{JNT11} and references therein.

A calculation similar to that for \eqref{eq:OFTRL} shows that \cref{alg:MP} has $\curr[\bbound] = \bigoh(1/\run^{\stepexp})$ and $\curr[\sdev] = 0$ because the algorithm has no further randomization. 
\endenv
\end{algo}

\smallskip

\begin{algo}[\Acl{CMW}]
\label{alg:CMW}
A recent variant of \eqref{eq:EW} is the so-called \acdef{CMW} algorithm \citep{PSS21}
\begin{equation}
\label{eq:CMW}
\tag{\acs{CMW}}
\dstate_{\play,\run+1}
	= \dstate_{\play,\run} + \curr[\step] \payv_{\play}(\next)
	\qquad
\state_{\play,\run+1}
	= \logit_{\play}(\dstate_{\play,\run+1}).
\end{equation}
The main difference between \eqref{eq:CMW} and \eqref{eq:EW} is that the proxy payoff vector $\curr[\signal]$ in \eqref{eq:CMW} is based on the \emph{future} state $\next$ and \emph{not} the current state $\curr$.
To perform this ``clairvoyant'' update, the players of the game must coordinate to solve an implicit fixed point problem, so \eqref{eq:CMW} is only meaningful when one has access to the payoff function $\payv(\cdot)$.
In this regard, \eqref{eq:CMW} can be seen as a Bregman proximal point method in the general spirit of \citet{BBC03}.

To cast \eqref{eq:CMW} as an instance of the generalized template \eqref{eq:method}, simply note that the sequence of input signals is given by $\curr[\signal] = \payv(\next)$, so $\curr[\noise] = 0$ and $\curr[\bias] = \payv(\next) - \payv(\curr) = \bigoh(\curr[\step]) = \bigoh(1/\run^{\stepexp})$.
\endenv
\end{algo}
\smallskip


\begin{table}[t]
\centering
\renewcommand{\arraystretch}{1.2}
\setlength{\tabcolsep}{.5em}
\small

\begin{tabular}{rcccccc}
\toprule
	&\textbf{Representative}
	&\textbf{Regularizer ($\hker$)}
	&\textbf{Feedback}
	&\textbf{Bias ($\curr[\bbound]$)}
	&\textbf{Variance ($\curr[\sdev]$)}
	\\
\midrule
\cref{alg:FTRL}
	&\ac{EW}
	&$z\log z$
	&full info
	&$0$
    & $0$
	\\
\cref{alg:OFTRL}
	&\ac{OMW}
	&$z\log z$
	&full info
	&$\bigoh(1/\run^{\stepexp})$
    & $0$
	\\
\cref{alg:BFTRL}
	&\ac{EXP3}
	&$z\log z$
	&payoff
	&$\bigoh(1/\run^{\mixexp})$
	&$\bigoh(\run^{\mixexp})$
	\\
\cref{alg:BFTRL}
	&\ac{Tsallis}
	&$-4\sqrt{z}$
	&payoff
	&$\bigoh(1/\run^{\mixexp})$
	&$\bigoh(\run^{\mixexp})$
	\\
\cref{alg:MP}
	&\ac{MP}
	&general
	&full info
    &$\bigoh(1/\run^{\stepexp})$
    & $0$
	\\
\cref{alg:CMW}
	&\acs{CMW}
	&$z\log z$
	&full info
	&$\bigoh(1/\run^{\mixexp})$
	&$0$
	\\
\bottomrule
\end{tabular}

\medskip
\caption{A range of algorithms adhering to the general template \eqref{eq:method} and their bias and variance characteristics when run with
a step-size sequence of the form $\curr[\step] = \step/\run^{\stepexp}$, $\stepexp\in(0,1]$, and, where applicable, a sampling parameter $\curr[\mix] = \mix/\run^{\mixexp}$.}
\label{tab:algorithms}
\end{table}



\section{Proof of \cref{thm:resilience}}
\label{app:resilience}

\newmacro{\MyAmazingLipschitzMirrorConstantButItsReversed}{(1/\hstr)}

Our main goal in this appendix will be to prove \cref{thm:resilience} on the resilience properties of \eqref{eq:method}.
For convenience, we restate below the relevant result for ease of reference:

\RESILIENCE*

\begin{proof}
Our proof that $\limset(\state)$ is resilient hinges on an energy-based technique that we will employ repeatedly in other parts of our analysis.
To begin, introduce a player-strategy deviation pair $(\play, z_\play)$, and say that a set is resilient \emph{to} $(\play, z_\play)$ if there exists an element of the set, say $\eq$, which counters said deviation, \ie such that $\pay_\play(\eq) \geq \pay_\play(z_\play;\strat_{-\play}^{\ast})$. 
In this specific case, our proof proceeds by contradiction, namely by assuming that, with positive probability, $\limset(\state)$ is \emph{not} resilient to $(\play, z_\play)$.
The main steps of our proof unfold as follows:

\begin{proofstep}
\label{prf:better}
Assume that $\limset(\state)$ is not resilient to $(\play, z_\play)$ with positive probability.
Then there exists $c, \epsilon, \run_0 > 0$ such that 
\begin{equation}
\label{eq:thm1:step1}
\probof*{\pay_\play(z_\play;\state_{\run, -\play}) \geq \pay_\play(\curr) + c \;\; \text{for all $\run \geq \run_{0}$}}
	\geq \epsilon.
\end{equation}
\end{proofstep}

\begin{proof}[Proof of \cref{prf:better}]
The function $f : \strat \in \strats \mapsto \pay_\play(z_\play;\strat_{-\play})  - \pay_\play(\strat)$ is continuous and $\strats$ is compact, so there is a definite function $\eta \equiv \eta(\delta)$ such that if $\norm{\strat - \stratalt} \leq \eta(\delta)$, then $\abs{f(\strat) - f(\stratalt)} \leq \delta$. 
	Now, by assumption, $\braces{\forall \eq \in \limset(\state), \pay_\play(z_\play; \eq_{-\play}) > \pay_\play(\eq)}$ is of positive probability. 
	We thus get
	\begin{subequations}
	\begin{align}
		0 
		& < 
		\P\braces*{
			\forall \eq \in \limset(\state),
			\pay_\play(z_\play; \eq_{-\play}) > \pay_\play(\eq)
		}
		\notag\\
		& = \label{equation:resilience:proof:1}
		\P \braces*{
			\inf_{\eq \in \limset(\state)} \parens*{
				\pay_\play(z_\play;\eq_{-\play}) - \pay_\play(\eq)
			}
			> 0
		}
		\\
		& = \label{equation:resilience:proof:2}
		\P \parens*{
			\bigcup_{m > 0} \braces*{
				\inf_{\eq \in \limset(\state)} \parens*{
				\pay_\play(z_\play;\eq_{-\play}) - \pay_\play(\eq)
				}
				> 2^{-m}
			}
		} 
		\\
		& \leq \label{equation:resilience:proof:3}
		\frac12 \P \braces*{
			\forall {\eq \in \limset(\state)},
			\pay_\play(z_\play;\eq_{-\play}) - \pay_\play(\eq)
			> 2c
		}
	\end{align}
	for some $c > 0$ in \eqref{equation:resilience:proof:3}, and where \eqref{equation:resilience:proof:1} is because $\limset(\state)$ is closed -- hence compact -- almost surely. 
	Therefore, by definition of $\eta(\cdot)$,
	\begin{align}
		0 
		& < \label{equation:resilience:proof:4}
		\P \braces*{
			\forall \eq \in \strats,
			\dist(x^{\ast}, \limset(\state)) \leq \eta\parens*{c} \Rightarrow
			\pay_\play(z_\play;\eq_{-\play}) - \pay_\play(\eq)
			>  c
		} = 2\epsilon
	\end{align}
	\end{subequations}
	Now, let $\run_0$ such that $\prob\braces{\forall \run \geq \run_0, \dist(\state_\run, \limset(\state)) \leq \eta(c)} > 1 - \frac\epsilon2$. 
	Then by construction, we get
	\begin{equation}
		\prob\braces*{
			\forall \run \geq \run_0, 
			\pay_\play(z_\play;X_{\run, - \play}) > \pay_\play(\state_\run) + c
		} > \epsilon.
	\end{equation}
	and our proof is complete.
\end{proof}

Intuitively, the existence of an action that consistently outperforms $\curr$ runs contrary to the behavior that one would expect from any regularized learning algorithm.
We will proceed to make this intuition precise below by means of an energy argument.
To that end, consider the Fenchel coupling 
\begin{equation}
\fench_{\play,\run}
	= \hreg_{\play}(z_{\play})
		+ \hconj_{\play}(\dstate_{\play,\run})
		- \braket{\dstate_{\play,\run}}{z_{\play}}
\end{equation}
Then, by \cref{lem:Fench} in \cref{app:aux}, we readily get that
\begin{align}
\label{eq:thm1:step2:fenchel rec}
\fench_{\play,\run+1}
	&\leq \fench_{\play,\run}
		- \curr[\step]\braket{\signal_{\play,\run}}{z_{\play} - \state_{\play,\run}} + \frac{\curr[\step]^{2}}{2 \kappa_\hreg} \supnorm{\signal_{\play,\run}}^{2}.
\end{align}
where, in obvious notation, we are identifying $z_{\play} \in \pures_{\play}$ with the corresponding vertex $\bvec_{z_{\play}}$ of $\strats_{\play} = \simplex(\pures_{\play})$.
To proceed, the main idea will be to relate $\curr[\step] \braket{\signal_{\play,\run}}{z_{\play} - \state_{\play,\run}}$ to its ``perfect'' counterpart $\curr[\step]\braket{\payv_{\play}(\curr)}{z_{\play} - \state_{\play,\run}}$.
We formalize this below.
\medskip

\begin{proofstep}
\label{prf:drift}
	If $\limset(\state)$ is not resilient to $(\play, z_\play)$,  there exists $\run_1 \geq \run_0$ such that, with probability $\alt\eps / 2 > 0$, and for all $\run \geq \run_1$, we have
\begin{equation}
	\label{eq:thm1:step2}
	\fench_{\play,\run} \leq \fench_{\play,\run_{0}} - \frac {\const}2 \sum_{\runalt=\run_0}^{\run} \curralt[\step].
\end{equation}
\end{proofstep}

\begin{proof}[Proof of \cref{prf:drift}]
With probability $\alt\eps$ and for all $\run \geq \run_0$, we have
\begin{align}
	\curr[\step] \braket{\signal_{\play,\run}}{z_{\play} - \state_{\play,\run}}
	&=
	\curr[\step] \braket{\payv_{\play}(\curr)}{z_{\play} - \state_{\play,\run}} + \curr[\step] \braket{\noise_{\play,\run}}{z_{\play} - \state_{\play,\run}} + \curr[\step] \braket{\bias_{\play,\run}}{z_{\play} - \state_{\play,\run}}
	\notag\\
	& \geq \label{eq:thm1:step2:estimate with c}
	\bracks[\big]{\const + \braket{\noise_{\play,\run}}{z_{\play} - \state_{\play,\run}} + \braket{\bias_{\play,\run}}{z_{\play} - \state_{\play,\run}}} \curr[\step].
\end{align}
The combination of \cref{eq:thm1:step2:fenchel rec,eq:thm1:step2:estimate with c} then provides the following upper bound of $\fench_{\play,\run+1}$:
\begin{alignat}{2}
	\fench_{\play,\run+1}
	&\leq \fench_{\play,\run}
		&&- \const \curr[\step]
		+ \curr[\step] \braket{\noise_{\play,\run}}{z_{\play} - \state_{\play,\run}}
		+ \curr[\step] \braket{\bias_{\play,\run}}{z_{\play} - \state_{\play,\run}}
		+ \frac{\curr[\step]^{2}}{2\kappa_\hreg}\supnorm{\signal_{\play,\run}}^{2}
	\notag\\
	&\label{eq:thm1:step2:main inequality}
	\leq
	\fench_{\play,\run_{0}} 
		&&- \const \sum_{\runalt=\run_{0}}^{\run} \curralt[\step] 
		+ \sum_{\runalt=\run_0}^{\run} \frac{\supnorm{\signal_{\runalt,\play}}^{2}}{2\kappa_\hreg} \curralt[\step]^{2}
	\notag\\
	&\quad
		&&+ \underbrace{\sum_{\runalt=\run_0}^{\run} \curralt[\step] \braket{\noise_{\runalt,\play}}{z_{\play} - \state_{\runalt,\play}}}_{E_{\noise, \run}}
		+ \underbrace{\sum_{\runalt=\run_0}^{\run} \curralt[\step] \braket{\bias_{\runalt, \play}}{z_{\play} - \state_{\runalt,\play}}}_{E_{\bias, \run}}.
\end{alignat}

We are thus left to show is that $\const \sum_{\runalt=\run_0}^{\run} \curralt[\step]$ is the dominant term above.
To do so, we proceed to examine each term individually:
\begin{itemize}
\addtolength{\itemsep}{1ex}
\item
\emph{Second-order term:}
We first deal with the second-order term $\sum_{\runalt=\run_0}^{\run} \frac{\supnorm{\signal_{\runalt,\play}}^{2}}{2\kappa_\hreg} \curralt[\step]^{2}$.
By expanding the $\supnorm{\signal_{\runalt,\play}}^{2}$, we readily get
\begin{equation}
\frac
	{\sum_{\runalt=\run_{0}}^{\run} \supnorm{\signal_{\runalt,\play}}^{2} \curralt[\step]^{2}}
	{\curr[\efftime]}
	= \bigoh\parens*{
		\frac
			{\sum_{\runalt=\start}^{\run} \iter[\step]^{2}(1 + \iter[\bbound]^{2} + \iter[\sdev]^{2})}
			{\sum_{\runalt=\start}^{\run} \iter[\step]}}.
\end{equation}
However, by our assumptions on the parameters of \eqref{eq:method}, we readily get
\begin{equation}
\lim_{\run\to\infty}
	\frac{\curr[\step]^{2}(1 + \curr[\bbound]^{2} + \curr[\sdev]^{2})}{\curr[\step]}
	= 0
\end{equation}
so we conclude that
\begin{equation}
\label{eq:thm1:step2:second order bound}
\lim_{\run\to\infty}
	\frac
		{\sum_{\runalt=\start}^{\run} \iter[\step]^{2}(1 + \iter[\bbound]^{2} + \iter[\sdev]^{2})}
		{\sum_{\runalt=\start}^{\run} \iter[\step]}
	= 0
\end{equation}
by the Stolz-Cesàro theorem.

\item
\emph{Bias term:}
By far the most immediate, the bias term $E_{\bias, \run}$ is bounded as
\begin{equation}
	\label{eq:thm1:step2:bias bound}
	E_{\bias, \run} \leq 2 \sum_{\runalt=\run_0}^{\run} \supnorm{\bias_{\play,\run}} \curralt[\step] \leq 2 \sum_{\runalt=\run_0}^{\run} \bbound_\runalt \curralt[\step] = \littleoh\parens*{\sum_{\runalt=\run_0}^{\run} \curralt[\step]}
	\quad
	\text{as $\run \to \infty$}.
\end{equation}

\item 
\emph{Noise term:}
Finally, the noise term $E_{\noise, \run}$ is bounded by means of the Azuma-Hoeffding inequality, \cf \cref{lem:Azuma} in \cref{app:aux}.
Specifically, with probability at least $1 - \alt\eps/2$, we have 
\begin{align}
E_{\noise, \run}
	&\defeq \sum_{\runalt=\run_0}^{\run} \curralt[\step] \braket{\noise_{\runalt, \play}}{z_{\play} - \state_{\runalt,\play}}
	\notag\\
	&\leq 2~\parens*{\sum_{\runalt=\run_0}^{\run} \supnorm{\noise_{\runalt,\play}}^{2} \curralt[\step]^{2} }^{1/2} \sqrt{2\log\parens*{\tfrac{4\run^{2}}{\alt\eps}}}
	\notag\\
	&\leq 2~\parens*{\sum_{\runalt=\run_0}^{\run} \curralt[\sdev]^{2} \curralt[\step]^{2} }^{1/2} \sqrt{2\log\parens*{\tfrac{4\run^{2}}{\alt\eps}}}.
\end{align}
for all $\run \geq \run_0$.
To proceed, note that a second application of the Stolz-Cesàro theorem yields $\sum_{\runalt=\run_0}^{\run} \curralt[\sdev]^{2} \curralt[\step]^{2} = \littleoh\parens{\sum_{\runalt=\run_0}^{\run} \curralt[\step]}$
and, moreover, note that
$\log(4\run^{2}/\alt\eps) = \bigoh\parens{\sum_{\runalt=\run_0}^{\run} \curralt[\step]}$.
Taking square roots and multiplying then yields that
\begin{equation}
\label{eq:thm1:step2:noise bound}
E_{\noise, \run}
	=
	\littleoh \parens*{\sum_{\runalt=\run_0}^{\run} \curralt[\step]}
\end{equation}
with probability at least $1 - \alt\eps/2$.
\end{itemize}

We are now in a position to establish the bound \cref{eq:thm1:step2}.
Indeed, putting \cref{eq:thm1:step2:second order bound,eq:thm1:step2:bias bound,eq:thm1:step2:noise bound} together, we readily infer that there exists $\run_1 \geq \run_0$ such that, with probability at least $1 - \alt\eps/2$, we have
\begin{equation}
\sum_{\runalt=\run_0}^{\run}
	\curralt[\step] \braket{\noise_{\runalt,\play}}{z_{\play} - \state_{\runalt,\play}}
		+ \sum_{\runalt=\run_0}^{\run} \curralt[\step] \braket{\bias_{\runalt, \play}}{z_{\play} - \state_{\runalt,\play}}
		+ \sum_{\runalt=\run_0}^{\run} \frac{\supnorm{\signal_{\runalt,\play}}^{2}}{2\kappa_\hreg} \curralt[\step]^{2} \leq \frac{\const}2 \sum_{\runalt=\run_0}^{\run} \curralt[\step]
\end{equation}
for all $\run \geq \run_1$.
This proves \cref{eq:thm1:step2} and concludes our proof.
\end{proof}

Summarizing the above, we have shown that, with probability at least $1 -  \alt\eps/2$, we have
\begin{equation}
\fench_{\play,\run+1} \leq F_{\run_0} - \frac {\const}2 \sum_{\runalt=\run_0}^{\run} \step_\runalt
	\to -\infty
	\quad
	\text{as $\run\to\infty$}.
\end{equation}
Since $\fench$ is nonnegative (by \cref{lem:Fench}), we have established that the event where $\limset(\state)$ is not resilient to $(\play, z_\play)$ is an event of probability zero.
However, since there are uncountably many strategic deviations, the proof is not yet complete;
the last step involves an approximation by deviations with \emph{rational} entries.

\begin{proofstep}
	\label{proof:resilience:step3}
	$\limset(\state)$ is almost-surely resilient.
\end{proofstep}
\begin{proof}[Proof of \cref{proof:resilience:step3}]
The key point of the proof is the observation that a closed set is resilient if and only if it is \emph{rationally} resilient, \ie it nullifies all \emph{rational} deviations $z_\play \in \strats_\play \cap \Q^{\pures_\play}$ (which are countably many).
Indeed, if $\limset(\state)$ is not resilient with positive probability, then, likewise, $\limset(\state)$ will not be rationally resilient with positive probability either.
Because there are countably many rational deviations, there must be a rational strategic deviation $(\play, z_\play)$ (with $z_\play \in \strats_\play \cap \Q^{\pures_\play}$) to which $\limset(\state)$ is not resilient. 
This comes in contradiction with the conclusions of \cref{prf:drift}.
\end{proof}

This concludes the last required step, so the proof of \cref{thm:resilience} is now complete.
\end{proof}

\section{Proof of \cref{thm:closed,thm:rate}}
\label{app:closed}

In this last appendix, our goal is to prove our characterization of \ac{closed} sets, namely:

\CLOSED*

\RATE*

Our proof strategy will be to construct a sheaf of ``linearized'' energy functions which, when bundled together, yield a suitable Lyapunov-like function for $\set$.
To do so, let $\pureset = \prod_{\play} \pureset_{\play}$ denote the support of $\set$ (\cf the definition of \ac{closed} sets), and let
\begin{equation}
\tvecs_{\play}
	= \setdef
		{\bvec_{\play\purealt_{\play}} - \bvec_{\play\pure_{\play}}}
		{\pure_{\play}\in\pureset_{\play}, \purealt_{\play}\in\pures_{\play}\setminus\pureset_{\play}}
\end{equation}
and
\begin{equation}
\tvecs
	= \union\nolimits_{\play\in\players} \tvecs_{\play}
\end{equation}
denote the set of all pure strategic deviations from $\set$.
Then, our ensemble of candidate energy functions will be given by
\begin{equation}
\label{eq:energy-sharp}
\energy_{\tvec}(\score)
	= \braket{\score}{\tvec}
	\qquad
	\text{for $\tvec\in\tvecs$, $\score\in\dspace$}.
\end{equation}
The motivation for this definition is given by the following lemma.

\begin{lemma}
\label{lem:energy-bundle}
Suppose that the sequence $\curr[\score]\in\dspace$, $\run=\running$, has $\energy_{\tvec}(\curr[\score]) \to -\infty$ for all $\tvec\in\tvecs$ as $\run\to\infty$.
Then the sequence $\curr[\strat] = \mirror(\curr[\score])$ converges to $\set$ as $\run\to\infty$.
\end{lemma}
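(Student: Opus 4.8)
The plan is to reduce the assertion about convergence to the face $\set$ to a coordinatewise statement and then invoke the coercivity property \cref{lem:mirror}(5) of the regularized choice map. Since $\mirror = \prod_{\play}\mirror_{\play}$ acts blockwise, it suffices to prove that, for each player $\play\in\players$, the sequence $\strat_{\play,\run} = \mirror_{\play}(\score_{\play,\run})$ converges to $\simplex(\pureset_{\play})$; and because $\pures_{\play}$ is finite, this is in turn equivalent to showing $\strat_{\play\purealt_{\play},\run}\to0$ for every fixed off-support action $\purealt_{\play}\in\pures_{\play}\setminus\pureset_{\play}$.

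First I would fix such a $\purealt_{\play}$. As $\set=\simplex(\pureset)$ is a nonempty element of $\faces$, its support set $\pureset_{\play}$ is nonempty, so I pick any $\pure_{\play}\in\pureset_{\play}$ and set $\tvec=\bvec_{\play\purealt_{\play}}-\bvec_{\play\pure_{\play}}\in\tvecs_{\play}\subseteq\tvecs$. The hypothesis of the lemma then gives $\energy_{\tvec}(\curr[\score]) = \braket{\curr[\score]}{\tvec} = \score_{\play\purealt_{\play},\run}-\score_{\play\pure_{\play},\run}\to-\infty$, i.e.\ the $\purealt_{\play}$-coordinate of $\score_{\play,\run}$ falls infinitely far behind the $\pure_{\play}$-coordinate. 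Applying \cref{lem:mirror}(5) inside player $\play$'s block (with the roles of ``$\pure$'' and ``$\purealt$'' there played by $\purealt_{\play}$ and $\pure_{\play}$) yields $\strat_{\play\purealt_{\play},\run} = \mirror_{\play\purealt_{\play}}(\score_{\play,\run})\to0$, as desired.

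Finally I would assemble: summing over the finitely many off-support actions gives $\sum_{\purealt_{\play}\in\pures_{\play}\setminus\pureset_{\play}}\strat_{\play\purealt_{\play},\run}\to0$, hence $\dist(\strat_{\play,\run},\simplex(\pureset_{\play}))\to0$ for each $\play$ (trivially so when $\pureset_{\play}=\pures_{\play}$), and taking the product over $\play\in\players$ gives $\dist(\curr[\strat],\set)\to0$. I do not expect a genuine obstacle here: the lemma is essentially a repackaging of the coercivity of $\mirror$ along the ``astral directions'' $\tvec\in\tvecs$, and the only points requiring care are invoking \cref{lem:mirror}(5) uniformly over all off-support coordinates of all players and passing from this coordinatewise vanishing to convergence to the face $\set$ — both of which go through because the action sets $\pures_{\play}$ are finite.
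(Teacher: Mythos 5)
Your proposal is correct and follows essentially the same route as the paper's own proof: fix an off-support action, pair it with a supported one to form a deviation $\tvec\in\tvecs$, deduce from $\energy_{\tvec}(\curr[\score])\to-\infty$ that the corresponding score difference diverges to $-\infty$, and invoke \cref{lem:mirror} to conclude that the off-support coordinate of $\mirror$ vanishes. The only difference is that you spell out the final aggregation over the finitely many off-support actions and players, which the paper leaves implicit.
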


\begin{proof}
Let $\tvec = \bvec_{\play\purealt_{\play}} - \bvec_{\play\pure_{\play}}$ for some $\play\in\players$, $\pure_{\play}\in\pureset_{\play}$, and $\purealt_{\play} \in \pures_{\play}\setminus\pureset_{\play}$.
Since $\energy_{\tvec}(\curr[\score]) \to -\infty$ by assumption, we get $\score_{\play\purealt_{\play},\run} - \score_{\play\pure_{\play},\run} \to -\infty$ and hence, by \cref{lem:mirror}, we conclude that $\mirror_{\play\purealt_{\play}}(\curr[\strat]) \to 0$ as $\run\to\infty$.
In turn, given that this holds for all $\play\in\players$ and all $\purealt_{\play} \in \pures_{\play}\setminus\pureset_{\play}$, we conclude that $\curr[\strat] = \mirror(\curr[\score])$ converges to $\set$.
\end{proof}

In view of the above, we will focus on showing that $\energy_{\tvec}(\curr[\dstate]) \to -\infty$ for all $\tvec\in\tvecs$.
As a first step, we establish a basic template inequality for the evolution of $\energy_{\tvec}$ under \eqref{eq:method}.
\smallskip

\begin{lemma}
\label{lem:energy-sharp}
Fix some $\tvec\in\tvecs$ and let $\curr[\energy] \defeq \energy_{\tvec}(\curr[\dstate])$.
Then, for all $\run=\running$, we have
\begin{equation}
\label{eq:template-sharp}
\next[\energy]
	\leq \curr[\energy]
		+ \curr[\step] \braket{\vecfield(\curr)}{\tvec}
		+ \curr[\step] \curr[\snoise]
		+ \curr[\step] \curr[\sbias]
\end{equation}
where the error terms $\curr[\snoise]$ and $\curr[\sbias]$ are given by
\begin{equation}
\label{eq:errors}
\curr[\snoise]
	= \braket{\curr[\noise]}{\tvec}
	\qquad
	\text{and}
	\qquad
\curr[\sbias]
	= \ebound \curr[\bbound].
\end{equation}
\end{lemma}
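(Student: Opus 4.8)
The inequality \eqref{eq:template-sharp} is obtained simply by unrolling the \eqref{eq:method} recursion along the linear functional $\energy_{\tvec}$, so the argument is short and essentially bookkeeping. First I would record that, since $\tvec = \bvec_{\play\purealt_{\play}} - \bvec_{\play\pure_{\play}}$ belongs to $\tvecs_{\play}$, it is supported on the coordinate block of player $\play$ alone; hence $\curr[\energy] = \braket{\curr[\dstate]}{\tvec} = \braket{\dstate_{\play,\run}}{\tvec}$ depends only on that player's score vector, and feeding the update $\dstate_{\play,\run+1} = \dstate_{\play,\run} + \curr[\step]\signal_{\play,\run}$ into the linear map $\score \mapsto \braket{\score}{\tvec}$ yields the exact identity
\[
\next[\energy]
	= \curr[\energy] + \curr[\step] \braket{\signal_{\play,\run}}{\tvec}.
\]

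Next, I would substitute the decomposition $\curr[\signal] = \vecfield(\curr) + \curr[\noise] + \curr[\bias]$ from \eqref{eq:signal} and split the increment into the mean-field term $\curr[\step]\braket{\vecfield(\curr)}{\tvec}$, the random term $\curr[\step]\braket{\curr[\noise]}{\tvec}$, and the bias term $\curr[\step]\braket{\curr[\bias]}{\tvec}$. Setting $\curr[\snoise] \defeq \braket{\curr[\noise]}{\tvec}$ records the random term verbatim — this is precisely the first identity in \eqref{eq:errors} — and keeping $\curr[\snoise]$ unbounded is exactly what we want, since $\exof{\curr[\noise] \given \curr[\filter]} = 0$ by the definition of $\curr[\noise]$ in \eqref{eq:bias-noise}, so that $\curr[\snoise]$ is a martingale difference which will be handled later by the maximal inequalities of \cref{app:aux}.

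It then only remains to bound the bias term, and here the single point requiring attention is the pairing of norms: deviations are measured in the $L^{1}$ norm while payoff-like vectors are measured in the $L^{\infty}$ norm. Because $\tvec$ is the difference of two distinct vertices of $\strats_{\play}$, one has $\onenorm{\tvec} = 2$, so Hölder's inequality together with the standing bias bound $\dnorm{\curr[\bias]} \leq \curr[\bbound]$ (\cf \eqref{eq:errorbounds}) gives
\[
\braket{\curr[\bias]}{\tvec}
	\leq \dnorm{\curr[\bias]} \, \onenorm{\tvec}
	\leq \ebound \curr[\bbound]
	= \curr[\sbias].
\]
Substituting this estimate into the exact identity above turns it into the claimed inequality \eqref{eq:template-sharp}; in particular, the constant $\ebound = 2$ in \eqref{eq:errors} is nothing but $\onenorm{\tvec}$. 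There is no genuine obstacle in this lemma: it is a purely mechanical step whose role is to isolate the ``drift'' $\braket{\vecfield(\curr)}{\tvec}$ — which is where the \ac{closed} property of $\set$ will subsequently enter — from the noise and bias terms, which are controlled separately through the martingale machinery of \cref{app:aux} together with the rate assumptions on $\curr[\step]$, $\curr[\bias]$, and $\curr[\noise]$.
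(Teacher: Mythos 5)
Your proposal is correct and follows exactly the paper's (one-line) argument: plug the update of \eqref{eq:method} into the linear functional $\energy_{\tvec}$, split $\curr[\signal]$ via \eqref{eq:signal}, keep the noise pairing verbatim, and bound the bias term by Hölder using $\onenorm{\tvec} = 2 = \ebound$. Nothing to add.
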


\begin{proof}
Simply set $\score \gets \next[\dstate]$ in $\energy_{\tvec}(\score)$,
invoke the definition of the update $\curr[\dstate] \gets \next[\dstate]$ in \eqref{eq:method},
and
note that $\abs{\braket{\curr[\bias]}{\tvec}} \leq \norm{\tvec} \dnorm{\curr[\bias]} \leq 2 \curr[\bbound]$ by the definition of $\tvecs$.
\end{proof}

The key take-away from \eqref{eq:template-sharp} is that, if $\curr$ is close to $\set$ and $\pure_{\play}\in\pureset_{\play}$, $\purealt_{\play} \in \pures_{\play} \setminus \pureset_{\play}$, we will have
\begin{equation}
\braket{\payv(\curr)}{\tvec}
	= \payv_{\play\purealt_{\play}}(\curr)
		- \payv_{\play\pure_{\play}}(\curr)
	= \pay_{\play}(\purealt_{\play};\state_{-\play,\run})
		- \pay_{\play}(\pure_{\play};\state_{-\play,\run})
	< 0
\end{equation}
by the continuity of $\pay_{\play}$ and the assumption that $\set$ is a \ac{closed} set.
More concretely, by the definition of the better-reply correspondence, we have
\begin{equation}
\braket{\payv(\eq)}{\tvec}
	< 0
	\quad
	\text{for all $\eq\in\set$ and all $\tvec\in\tvecs$}
\end{equation}
and hence, by continuity, there exists a neighborhood $\basin$ of $\set$ such that
\begin{equation}
\braket{\payv(\strat)}{\tvec}
	< 0
	\quad
	\text{for all $\strat\in\basin$ and all $\tvec\in\tvecs$}.
\end{equation}
In other words, as long as $\curr$ is sufficiently close to $\set$, \eqref{eq:template-sharp} exhibits a consistent negative drift pushing $\curr[\energy]$ towards $-\infty$.

To exploit this ``dynamic consistency'' property of $\set$, it will be convenient to introduce the family of sets
\begin{equation}
\label{eq:dset}
\dbasin(\thres)
	= \setdef
		{\score\in\dspace}
		{\braket{\score}{\tvec} < -\thres \text{ for all } \tvec\in\tvecs}
\end{equation}
As we show below, these sets are mapped under $\mirror$ to neighborhoods of $\set$, so they are particularly well-suited to serve as initialization domains for \eqref{eq:method}.
This is encoded in the following properties:

\begin{lemma}
\label{lem:score2strat}
Let $\strat = \mirror(\score)$ for some $\score\in\dspace$.
Then, for all $\pure_{\play},\purealt_{\play}$, $\play\in\players$, we have
\begin{equation}
\label{eq:score2strat}
\strat_{\play\pure_{\play}}
	\leq \rate\parens*{\hker(1^{-}) + \score_{\play\purealt_{\play}} - \score_{\play\pure_{\play}}}
\end{equation}
with $\rate$ defined as per \cref{thm:rate}, \ie
\begin{equation}
\label{eq:hrate}
\rate(z)
	= \begin{dcases*}
		0
			&if $z \leq \hker'(0^{+})$,
			\\
		(\hker')^{-1}(z)
			&if $\hker'(0^{+}) < z < \hker'(1^{-})$,
			\\
		1
			&if $z \geq \hker'(1^{-})$.
	\end{dcases*}
\end{equation}
\end{lemma}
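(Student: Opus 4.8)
The plan is to extract the bound from the first-order optimality conditions of the strictly concave program $\max_{\strat\in\strats_{\play}}\{\braket{\score}{\strat}-\hreg_{\play}(\strat)\}$ that defines $\strat=\mirror_{\play}(\score)$, and then to invert the scalar increasing map $\hker'$, whose generalized inverse is precisely the rate function $\rate$ of \eqref{eq:hrate}. Fixing $\play\in\players$ and suppressing it, we write $\strat=\mirror(\score)$ with $\strat=(\strat_{\pure})_{\pure\in\pures_{\play}}$; and we may assume $\strat_{\pure_{\play}}>0$, since if $\strat_{\pure_{\play}}=0$ the left-hand side of \eqref{eq:score2strat} vanishes while $\rate\geq0$, so the inequality is immediate.

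First I would write the Lagrangian stationarity conditions for the maximization \eqref{eq:mirror}, with a multiplier $\mu\in\R$ for the affine constraint $\sum_{\pure}\strat_{\pure}=1$ and multipliers $\nu_{\pure}\geq0$ for the sign constraints $\strat_{\pure}\geq0$, so that $\score_{\pure}-\hker'(\strat_{\pure})-\mu+\nu_{\pure}=0$ together with $\nu_{\pure}\strat_{\pure}=0$ for every $\pure$; for $\strat\in\relint\strats_{\play}$ this is exactly the subdifferential formula in the second part of \cref{lem:mirror}. Applied to the coordinate $\pure_{\play}$, which has $\strat_{\pure_{\play}}>0$ and hence $\nu_{\pure_{\play}}=0$, this gives $\hker'(\strat_{\pure_{\play}})=\score_{\pure_{\play}}-\mu$. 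Applied to $\purealt_{\play}$, two cases arise: if $\strat_{\purealt_{\play}}>0$ then $\mu=\score_{\purealt_{\play}}-\hker'(\strat_{\purealt_{\play}})\geq\score_{\purealt_{\play}}-\hker'(1^{-})$ since $\hker'$ is increasing ($\hker''>0$ on $(0,1]$) and $\strat_{\purealt_{\play}}\leq1$; while if $\strat_{\purealt_{\play}}=0$ then $\mu=\score_{\purealt_{\play}}-\hker'(0^{+})+\nu_{\purealt_{\play}}\geq\score_{\purealt_{\play}}-\hker'(0^{+})\geq\score_{\purealt_{\play}}-\hker'(1^{-})$, using $\hker'(0^{+})\leq\hker'(1^{-})$. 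In either case $\mu\geq\score_{\purealt_{\play}}-\hker'(1^{-})$, and combining with the relation for $\pure_{\play}$ yields the scalar inequality $\hker'(\strat_{\pure_{\play}})=\score_{\pure_{\play}}-\mu\leq\hker'(1^{-})+\score_{\pure_{\play}}-\score_{\purealt_{\play}}\eqdef z_{0}$.

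It then remains to invert $\hker'$, an operation that produces exactly the piecewise function $\rate$. I would distinguish the three regimes for $z_{0}$ in \eqref{eq:hrate}: if $z_{0}\geq\hker'(1^{-})$ then $\rate(z_{0})=1\geq\strat_{\pure_{\play}}$; if $z_{0}\leq\hker'(0^{+})$ then $\hker'(\strat_{\pure_{\play}})\leq\hker'(0^{+})$ is incompatible with the strict monotonicity of $\hker'$ on $(0,1]$ and $\strat_{\pure_{\play}}>0$, so this regime does not occur; and if $\hker'(0^{+})<z_{0}<\hker'(1^{-})$ then $\strat_{\pure_{\play}}\in(0,1)$ necessarily, so applying the increasing inverse $(\hker')^{-1}\colon(\hker'(0^{+}),\hker'(1^{-}))\to(0,1)$ to $\hker'(\strat_{\pure_{\play}})\leq z_{0}$ gives $\strat_{\pure_{\play}}\leq(\hker')^{-1}(z_{0})=\rate(z_{0})$, which is the bound \eqref{eq:score2strat}. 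In this scheme the Lagrangian bookkeeping is routine; the step that requires genuine care is the inversion, namely keeping track of the endpoints $\hker'(0^{+})$ and $\hker'(1^{-})$ (the former being $-\infty$ precisely when $\hreg$ is steep, which is also exactly when $\strat$ is forced to lie in $\relint\strats_{\play}$) and checking that the monotone inverse reproduces the boundary conventions encoded in the definition of $\rate$. I expect that endpoint analysis, rather than any individual computation, to be the only real obstacle.
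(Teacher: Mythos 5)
Your proof is correct and follows essentially the same route as the paper's: both extract the KKT stationarity conditions of the convex program defining $\mirror$, combine the relations at the two coordinates using $\nu_{\pure}\geq0$ and the monotonicity of $\hker'$ on $(0,1]$, and then invert $\hker'$ into the rate function $\rate$ with the appropriate endpoint cases (your upfront disposal of $\strat_{\play\pure_{\play}}=0$ and the explicit three-regime inversion are, if anything, slightly more careful than the paper's). One remark: what you actually derive \textendash\ and what the paper's own proof derives as well \textendash\ is the bound $\strat_{\play\pure_{\play}}\leq\rate\parens{\hker'(1^{-})+\score_{\play\pure_{\play}}-\score_{\play\purealt_{\play}}}$; the displayed statement transposes the score difference (and writes $\hker(1^{-})$ for $\hker'(1^{-})$), but since both indices are universally quantified this is a typo in the statement rather than a gap in your argument, and your orientation is the one used later in the proof of \cref{thm:rate}.
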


\begin{corollary}
\label{cor:score2strat}
For all $\delta>0$ there exists some $\thres_{\delta}\in\R$ such that, for all $\thres > \thres_{\delta}$ and all $\score \in \dbasin_{\thres}$, we have
\begin{equation}
\mirror_{\play\purealt_{\play}}(\score_{\play})
	< \delta
	\quad
	\text{for all $\purealt_{\play}\in\pures_{\play} \setminus \pureset_{\play}$ and all $\play\in\players$}.
\end{equation}
\end{corollary}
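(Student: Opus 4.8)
The plan is to read off \cref{cor:score2strat} from the pointwise estimate of \cref{lem:score2strat} together with the ``uniform gap'' built into the definition \eqref{eq:dset} of $\dbasin_{\thres}$. Fix $\delta > 0$. Since $\players$ is finite and each $\pures_{\play}$ is finite, it is enough to produce, for every player $\play\in\players$ and every off-support action $\purealt_{\play}\in\pures_{\play}\setminus\pureset_{\play}$, a finite threshold $\thres_{\delta,\play,\purealt_{\play}}$ such that $\mirror_{\play\purealt_{\play}}(\score_{\play}) < \delta$ for all $\thres > \thres_{\delta,\play,\purealt_{\play}}$ and all $\score\in\dbasin_{\thres}$; one then takes $\thres_{\delta}$ to be the largest of these finitely many numbers.

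So fix such a pair $(\play,\purealt_{\play})$. Since $\set = \simplex(\pureset)$ is nonempty, its support $\pureset_{\play}$ is nonempty, so we may fix a reference action $\pure_{\play}\in\pureset_{\play}$. Then $\tvec \defeq \bvec_{\play\purealt_{\play}} - \bvec_{\play\pure_{\play}}$ lies in $\tvecs_{\play}\subseteq\tvecs$, so the defining condition of \eqref{eq:dset} yields, for every $\score\in\dbasin_{\thres}$,
\[
\score_{\play\purealt_{\play}} - \score_{\play\pure_{\play}}
	= \braket{\score}{\tvec}
	< -\thres,
\]
\ie the score of the off-support action $\purealt_{\play}$ trails that of the support action $\pure_{\play}$ by more than $\thres$. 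Feeding this gap into \cref{lem:score2strat} (with $\purealt_{\play}$ in the role of the bounded coordinate and $\pure_{\play}$ as the competing one) and using that the rate function $\rate$ of \eqref{eq:hrate} is non-decreasing --- which follows from the strict convexity of $\hker$ on $(0,1]$, making $\hker'$ and hence $(\hker')^{-1}$ strictly increasing --- we obtain
\[
\mirror_{\play\purealt_{\play}}(\score_{\play})
	\leq \rate\bigl(\hker'(1^{-}) + \score_{\play\purealt_{\play}} - \score_{\play\pure_{\play}}\bigr)
	\leq \rate\bigl(\hker'(1^{-}) - \thres\bigr).
\]

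It remains to push the right-hand side below $\delta$. By the explicit form \eqref{eq:hrate}, $\rate(w) = 0$ for all $w \leq \hker'(0^{+})$ when the regularizer is non-steep, while for a steep regularizer $\hker'(0^{+}) = -\infty$ and $\rate(w) = (\hker')^{-1}(w) \to 0$ as $w \to -\infty$; in either case $\rate\bigl(\hker'(1^{-}) - \thres\bigr) \to 0$ as $\thres\to\infty$, so some finite $\thres_{\delta,\play,\purealt_{\play}}$ makes it strictly less than $\delta$, and monotonicity of $\rate$ keeps it $< \delta$ for every larger $\thres$. Taking the maximum over the finitely many pairs $(\play,\purealt_{\play})$ then gives the uniform threshold $\thres_{\delta}$ and completes the proof. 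The only point requiring any real care is this last one --- distinguishing the steep and non-steep regularizers so that $\rate\bigl(\hker'(1^{-}) - \thres\bigr)$ genuinely vanishes as $\thres\to\infty$ --- but this is immediate from \eqref{eq:hrate}; the quantitative substance of the corollary is already contained in \cref{lem:score2strat}.
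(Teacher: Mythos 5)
Your proof is correct and follows exactly the route the paper intends: the corollary is stated without proof as an immediate consequence of \cref{lem:score2strat} combined with the uniform score gap in the definition \eqref{eq:dset} of $\dbasin(\thres)$, and your argument supplies precisely the missing details (including the correct reading of the lemma's bound as $\mirror_{\play\purealt_{\play}}(\score_{\play}) \leq \rate\bigl(\hker'(1^{-}) + \score_{\play\purealt_{\play}} - \score_{\play\pure_{\play}}\bigr)$, consistent with the lemma's proof, and the steep/non-steep case distinction ensuring $\rate\bigl(\hker'(1^{-})-\thres\bigr)\to 0$). Nothing further is needed.
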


\begin{proof}[Proof of \cref{lem:score2strat}]
Suppressing the player index for simplicity, the first-order stationarity conditions for the convex problem \eqref{eq:mirror} readily give
\begin{equation}
\label{eq:KKT}
\score_{\pure} - \hker'(\strat_{\pure})
	= \mu - \nu_{\pure},
\end{equation}
where $\mu$ is the Lagrange multiplier for the equality constraint $\sum_{\pure}\strat_{\pure}=1$, and $\nu_{\pure}$ is the complementary slackness multiplier of the inquality constraint $\strat_{\pure} \geq 0$ (so $\nu_{\pure} = 0$ whenever $\strat_{\pure} > 0$).
Thus, rewriting \eqref{eq:KKT} for some $\pure\in\pures$, we get
\begin{equation}
\label{eq:KKT-diff1}
\score_{\purealt} - \score_{\pure}
	= \hker'(\strat_{\purealt}) - \hker'(\strat_{\pure})
		+ \nu_{\pure} - \nu_{\purealt}
\end{equation}
and hence
\begin{equation}
\label{eq:KKT-diff2}
\hker'(\strat_{\purealt})
	= \hker'(\strat_{\pure})
		+ \nu_{\purealt} - \nu_{\pure}
		+ \score_{\purealt} - \score_{\pure}
	\leq \hker'(1^{-})
		+ \nu_{\purealt}
		+ \score_{\purealt} - \score_{\pure},
\end{equation}
where we used the fact that $\nu_{\pure} \geq 0$.
Now, if $\hker'(1^{-}) + \score_{\purealt} - \score_{\pure} < \hker'(0^{+})$ and $\strat_{\purealt} > 0$ (so $\nu_{\purealt} = 0$), we will have $\hker'(\strat_{\purealt}) < \hker'(0^{+})$, a contradiction.
This shows that $\strat_{\purealt} = 0$ if $\hker'(1^{-}) + \score_{\purealt} - \score_{\pure} < \hker'(0^{+})$, so \eqref{eq:score2strat} is satisfied in this case.
Otherwise, if $\strat_{\purealt} > 0$, we must have $\nu_{\purealt} = 0$ by complementary slackness, so \eqref{eq:score2strat} follows by applying the second branch of \eqref{eq:hrate} to \eqref{eq:KKT-diff2}.
\end{proof}

The above provides us with a fairly good handle on the local geometric and dynamic properties of $\set$.
On the flip side however, the various error terms in \eqref{eq:errors} may be positive, so $\curr[\energy]$ may fail to be decreasing and $\curr$ may drift away from $\set$.
On that account, it will be convenient to introduce the aggregate error processes
\begin{alignat}{4}
\label{eq:errors-agg}
\curr[\aggnoise]
	&= \sum_{\runalt=\start}^{\run} \iter[\step]\iter[\snoise]
	&\qquad
	\text{and}
	&\qquad
\curr[\aggbias]
	&= \sum_{\runalt=\start}^{\run} \iter[\step]\iter[\sbias].
\end{alignat}
Intuitively, the aggregates \eqref{eq:errors-agg} measure the total effect of each error term in \eqref{eq:template-sharp},
so we will establish a first series of results under the following general requirements:
\begin{enumerate}
\addtolength{\itemsep}{1ex}
\item
\emph{Subleading error growth:}
\begin{subequations}
\makeatletter
\def\@currentlabel{Sub}
\makeatother
\label{eq:sub}
\renewcommand{\theequation}{Sub.\Roman{equation}}
\begin{align}
\label{eq:sub-noise}
\lim_{\run\to\infty} \curr[\aggnoise]/\curr[\efftime]
	&= 0
	\\[\smallskipamount]
\label{eq:sub-bias}
\lim_{\run\to\infty} \curr[\aggbias]/\curr[\efftime]
	&= 0
\end{align}
\end{subequations}
where $\curr[\efftime] = \sum_{\runalt=\start}^{\run} \iter[\step]$ and both limits are to be interpreted in the almost sure sense.

\item
\emph{Drift dominance:}
\begin{subequations}
\makeatletter
\def\@currentlabel{Dom}
\makeatother
\label{eq:dom}
\renewcommand{\theequation}{Dom.\Roman{equation}}
\label{eq:dom}
\begin{align}
\label{eq:dom-noise}
\probof{\curr[\aggnoise] \leq \Const\curr[\efftime]^{\texp}/2\;\;\text{for all $\run$}}
	&\geq 1-\conf
	\\[\smallskipamount]
\label{eq:dom-bias}
\probof{\curr[\aggbias] \leq \Const\curr[\efftime]^{\texp}/2\;\;\text{for all $\run$}}
	&\geq 1-\conf
\end{align}
\end{subequations}
for some $\Const>0$ and $\texp\in[0,1)$.
\end{enumerate}

In a nutshell, \eqref{eq:sub} posits that the aggregate error processes $\curr[\aggnoise]$ and $\curr[\aggbias]$ of \eqref{eq:errors-agg} are subleading relative to the long-run drift of \eqref{eq:template-sharp}, while \eqref{eq:dom} goes a step further and asks that said errors are asymptotically dominated by the drift in \eqref{eq:template-sharp}.
Accordingly, under these implicit error control conditions, we obtain the interim convergence result below:

\begin{proposition}
\label{lem:strict-local}
Let $\set$ be a \ac{closed} set,
fix some confidence threshold $\conf>0$,
and
let $\curr = \mirror(\curr[\dstate])$ be the sequence of play generated by \eqref{eq:method}.
If \eqref{eq:sub} and \eqref{eq:dom} hold, there exists an unbounded initialization domain $\dbasin\subseteq\dspace$ such that
\begin{equation}
\probof{\text{$\curr$ converges to $\set$} \given \init[\dstate]\in\dbasin}
	\geq 1-2\conf.
\end{equation}
\end{proposition}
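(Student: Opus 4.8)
The plan is to reduce everything to the bundling criterion of \cref{lem:energy-bundle}: it suffices to exhibit an initialization domain $\dbasin\subseteq\dspace$ and an event of probability at least $1-2\conf$ on which $\energy_{\tvec}(\curr[\dstate])\to-\infty$ for every $\tvec\in\tvecs$. First I would pin down the local geometry of $\set$. Since $\set$ is \ac{closed} and $\payv$ is continuous, the argument preceding \eqref{eq:dset} yields a closed neighborhood $\basin$ of $\set$ in $\strats$ and a constant $\const_{2}>0$ with $\braket{\payv(\strat)}{\tvec}\leq-\const_{2}$ for all $\strat\in\basin$ and all $\tvec\in\tvecs$; and \cref{cor:score2strat}, applied with a small enough $\delta>0$, produces a threshold $\thres_{\delta}\in\R$ such that $\mirror(\score)\in\basin$ whenever $\score\in\dbasin_{\thres}$ with $\thres>\thres_{\delta}$ (the point being that when every off-support coordinate of $\mirror(\score)$ is below $\delta$, then $\mirror(\score)$ lies within $\bigoh(\delta)$ of $\set$, hence in $\basin$). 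Because $\texp\in[0,1)$, the worst-case positive excursion $\const_{3}\defeq\sup_{z\geq0}(\Const z^{\texp}-\const_{2}z)$ is finite, and I would then fix $\thres_{0}$ a little larger than $\thres_{\delta}+\const_{3}$ and take $\dbasin\defeq\dbasin_{\thres_{0}}$. This set is open (it is cut out by strict linear inequalities) and unbounded: it contains the ray $t\mapsto t\bar\score$ for any $\bar\score\in\scores$ whose on-support coordinates strictly exceed its off-support ones, since such $\bar\score$ satisfies $\braket{\bar\score}{\tvec}<0$ for all $\tvec\in\tvecs$.

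Next I would restrict attention to the event $\good$ on which both drift-dominance bounds \eqref{eq:dom-noise} and \eqref{eq:dom-bias} hold; adding them and taking a union bound gives $\probof{\good}\geq1-2\conf$, and on $\good$ one has $\curr[\aggnoise]+\curr[\aggbias]\leq\Const\curr[\efftime]^{\texp}$ for all $\run=\running$, where $\curr[\efftime]=\sum_{\runalt=\start}^{\run}\iter[\step]$. The heart of the proof is then an induction on $\run$ showing that, on $\good$ and from any $\init[\dstate]\in\dbasin$, the score iterates never leave $\dbasin_{\thres}$ for a fixed $\thres\in(\thres_{\delta},\thres_{0})$. The base case holds since $\dbasin=\dbasin_{\thres_{0}}\subseteq\dbasin_{\thres}$. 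For the inductive step, if $\iter[\dstate]\in\dbasin_{\thres}$ for all $\runalt\leq\run$, then $\iter=\mirror(\iter[\dstate])\in\basin$ for all such $\runalt$ (by the choice of $\thres_{\delta}$), so $\braket{\payv(\iter)}{\tvec}\leq-\const_{2}$ for every $\tvec\in\tvecs$; telescoping the template inequality of \cref{lem:energy-sharp} from $\runalt=\start$ to $\run$ and invoking the bound on $\good$ gives, for each $\tvec$, $\energy_{\tvec}(\next[\dstate])\leq\energy_{\tvec}(\init[\dstate])-\const_{2}\curr[\efftime]+\curr[\aggnoise]+\curr[\aggbias]<-\thres_{0}-\const_{2}\curr[\efftime]+\Const\curr[\efftime]^{\texp}\leq-\thres_{0}+\const_{3}<-\thres$, where I used $\energy_{\tvec}(\init[\dstate])=\braket{\init[\dstate]}{\tvec}<-\thres_{0}$ and the definitions of $\const_{3}$ and $\thres_{0}$. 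Hence $\next[\dstate]\in\dbasin_{\thres}$, which closes the induction.

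With this invariant in hand, on $\good$ we have $\iter\in\basin$ for all $\runalt$, so the same telescoped estimate (run from $\runalt=\start$ to $\run-1$) gives $\energy_{\tvec}(\curr[\dstate])\leq-\thres_{0}-\const_{2}\sum_{\runalt=\start}^{\run-1}\iter[\step]+\Const(\sum_{\runalt=\start}^{\run-1}\iter[\step])^{\texp}$. Since $\curr[\step]\propto1/\run^{\stepexp}$ with $\stepexp\in[0,1]$ the partial sums diverge, and since $\texp<1$ the right-hand side tends to $-\infty$; thus $\energy_{\tvec}(\curr[\dstate])\to-\infty$ for every $\tvec\in\tvecs$, and \cref{lem:energy-bundle} yields $\curr\to\set$ on $\good$. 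As $\probof{\good}\geq1-2\conf$, the proof is complete. I expect the genuinely delicate point to be this bootstrapping step: the negative drift of $\energy_{\tvec}$ is only available while $\curr$ stays near $\set$, which in turn is only guaranteed by that very drift. What breaks the circularity is the slack built into $\thres_{0}$ — the constant $\const_{3}$ caps how far the accumulated noise and bias can ever push $\energy_{\tvec}$ back up, so, started deep enough in $\dbasin_{\thres_{0}}$, the energy never climbs back into the regime $\energy_{\tvec}>-\thres_{\delta}$ where the neighborhood guarantee of \cref{cor:score2strat} could fail. A minor secondary nuisance is that $\curr[\aggnoise]$ depends on $\tvec$ through $\braket{\curr[\noise]}{\tvec}$; this is harmless since $\tvecs$ is finite and $\dnorm{\tvec}$ is uniformly bounded, so \eqref{eq:dom-noise} can be arranged to hold uniformly over $\tvec\in\tvecs$.
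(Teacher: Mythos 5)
Your proof is correct and follows essentially the same route as the paper's: a bootstrap induction showing the scores never leave $\dbasin(\thres)$ when initialized in $\dbasin(\thres+\const_{3})$, where the slack $\const_{3}=\sup_{z\geq0}(\Const z^{\texp}-\const_{2}z)$ plays exactly the role of the paper's $\ediff$, followed by the telescoped energy bound and \cref{lem:energy-bundle}. The only (harmless) variations are that you conclude $\energy_{\tvec}(\curr[\dstate])\to-\infty$ directly from the \eqref{eq:dom} bound with $\texp<1$ rather than invoking \eqref{eq:sub}, and that you explicitly address the uniformity over the finitely many $\tvec\in\tvecs$ and the unboundedness of $\dbasin$, which the paper leaves implicit.
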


\begin{proof}[Proof of \cref{lem:strict-local}]
Fix some $\tvec\in\tvecs$, let $\curr[\energy] = \energy_{\tvec}(\curr[\dstate])$, and pick $\texp\in[0,1)$ so that \eqref{eq:dom} holds for some $\Const > 0$.
In addition, set
$\const = -\sup_{\strat\in\basin} \braket{\payv(\strat)}{\tvec} > 0$,
let $\run_{0} = \inf\setdef{\run}{\const\curr[\efftime] > \Const\curr[\efftime]^{\texp}}$,
and
write $\ediff = \max_{\run}\braces{\Const\curr[\efftime]^{\texp} - \const\curr[\efftime]}$.
Then, if $\init[\dstate]$ is initialized in $\dbasin \gets \dbasin(\thres + \ediff)$ where $\thres$ is such that $\dbasin(\thres) \subseteq \basin$, we will have $\curr[\dstate]\in\dbasin(\thres)$ for all $\run$.
Indeed, this being trivially the case for $\run=1$, assume it to be the case for all $\runalt = \running,\run$.
Then, by \eqref{eq:template-sharp} and our inductive hypothesis, we get
\begin{align}
\label{eq:energy-agg2}
\next[\energy]
	&\leq \init[\energy]
		- \sum_{\runalt=\start}^{\run} \iter[\step] \braket{\payv(\iter)}{\tvec}
		+ \curr[\aggnoise]
		+ \curr[\aggbias]
	\notag\\
	&\leq - \thres
		- \ediff
		- \const \curr[\efftime]
		+ \Const \curr[\efftime]^{\texp}/2
		+ \Const \curr[\efftime]^{\texp}/2
	\notag\\
	&\leq -\thres
		- \ediff
		+ \ediff
	= -\thres
\end{align}
\ie $\next[\energy] \in \dbasin(\thres)$, as claimed.

Now, since $\curr[\energy] \in \dbasin(\thres)$ for all $\run$, we conclude that
\begin{equation}
\label{eq:energy-agg}
\next[\energy]
	\leq \init[\energy]
		- \const \curr[\efftime]
		+ \curr[\aggnoise]
		+ \curr[\aggbias]
		\quad
		\text{for all $\run = \running$}
\end{equation}
Thus, if \eqref{eq:sub} holds, we readily get $\curr[\energy] \to -\infty$ \acl{wp1} on the event that \eqref{eq:dom-noise} and \eqref{eq:dom-bias} both hold.
This implies that $\curr[\energy] \to -\infty$, and since $\tvec\in\tvecs$ above is arbitrary, we conclude that $\curr\to\set$ with probability at least $1-2\conf$, as claimed.
\end{proof}

We are now in a position to prove \cref{thm:closed}.

\begin{proof}[Proof of \cref{thm:closed}]
Our proof will hinge on showing that \eqref{eq:sub} and \eqref{eq:dom} hold under the stated step-size and sampling parameter schedules.
Our claim will then follow by a direct application of \cref{lem:strict-local} and a reduction to a suitable subface of $\strats$.

First, regarding \eqref{eq:sub}, the law of large numbers for martingale difference sequences \citep[Theorem 2.18]{HH80} shows that $\curr[\aggnoise] / \curr[\efftime] \to 0$ \acl{wp1} on the event
\begin{equation}
\braces*{\sum_{\run} \curr[\step]^{2} \exof{\curr[\snoise]^{2} \given \curr[\filter]} / \curr[\efftime]^{2} < \infty}.
\end{equation}
However
\begin{equation}
\exof{\curr[\snoise]^{2} \given \curr[\filter]}
	\leq \ebound^{2} \exof{\dnorm{\curr[\noise]}^{2} \given \curr[\filter]}
	\leq \ebound^{2} \curr[\sdev]^{2}
	= \bigoh(\run^{2\noisexp})
\end{equation}
so, in turn, we get
\begin{equation}
\sum_{\run} \frac{\curr[\step]^{2} \exof{\curr[\snoise]^{2} \given \curr[\filter]}}{\curr[\efftime]^{2}}
	= \bigoh\parens*{ \sum_{\run} \frac{\curr[\step]^{2} \curr[\sdev]^{2}}{\curr[\efftime]^{2}} }
	= \bigoh\parens*{ \sum_{\run} \frac{\run^{-2\stepexp} \run^{2\noisexp}}{\run^{2(1-\stepexp)}} }
	= \bigoh\parens*{ \sum_{\run} \frac{1}{\run^{2 - 2\noisexp}} }
	< \infty
\end{equation}
given that $\noisexp < 1/2$.
This establishes \eqref{eq:sub-noise};
the remaining requirement \eqref{eq:sub-bias} follows trivially by noting that $\sum_{\runalt=\start}^{\run} \iter[\step] \iter[\bbound] \big/ \sum_{\runalt=\start}^{\run} \iter[\step] \to 0$ if and only if $\curr[\bbound] \to 0$, which is immediate from the theorem's assumptions.

Second, regarding \eqref{eq:dom}, since $\curr[\bbound]$ is deterministic and $\curr[\bbound] = \bigoh(1/\run^{\biasexp})$ for some $\biasexp>0$, it is always possible to find $\Const>0$ and $\texp\in(0,1)$ so that \eqref{eq:dom-bias} holds.
We are thus left to establish \eqref{eq:dom-noise}.
To that end, let $\curr[\maxnoise] = \sup_{1\leq\runalt\leq\run} \abs{\curr[\aggnoise]}$ and set $\curr[P] \defeq \probof*{\curr[\maxnoise] > \Const\curr[\efftime]^{\texp}/2}$ so
\begin{equation}
\label{eq:dom1}
\curr[P]
	\leq \frac{\exof{\abs{\curr[\aggnoise]}^{\qexp}}}{(\Const/2)^{\qexp}\curr[\efftime]^{\texp\qexp}}
	\leq \const_{\qexp}
		\frac{\exof{\parens*{\sum_{\runalt=\start}^{\run} \iter[\step]^{2} \dnorm{\iter[\noise]}^{2}}^{\qexp/2}}}{\curr[\efftime]^{\texp\qexp}}
\end{equation}
where $c_{\qexp}$ is a positive constant depending only on $\Const$ and $\qexp$, and we used Kolmogorov's inequality (\cref{lem:Kolmogorov}) in the first step and the \acl{BDG} inequality (\cref{lem:Burkholder}) in the second.

To proceed, we will require the following variant of Hölder's inequality \citep[p.~15]{Ben99}:
\begin{equation}
\label{eq:Holder}
\parens*{\sum_{\runalt=\start}^{\run} \iter[a]\iter[b]}^{\rho}
	\leq \parens*{\sum_{\runalt=\start}^{\run} \iter[a]^{\frac{\lambda\rho}{\rho-1}}}^{\rho-1}
		\sum_{\runalt=\start}^{\run} \iter[a]^{(1-\lambda)\rho} \iter[b]^{\rho}
\end{equation}
valid for all $\iter[a],\iter[b] \geq 0$ and all $\rho>1$, $\lambda\in[0,1)$.
Then, substituting $\iter[a] \gets \iter[\step]^{2}$, $\iter[b] \gets \dnorm{\iter[\noise]}^{2}$, $\rho \gets \qexp/2$ and $\lambda \gets 1/2 - 1/\qexp$, \eqref{eq:dom1} gives
\begin{align}
\label{eq:dom2}
\curr[P]
	\leq \const_{\qexp}
		\frac{
		\parens*{\sum_{\runalt=\start}^{\run} \iter[\step]}^{\qexp/2 - 1}
			\sum_{\runalt=\start}^{\run} \iter[\step]^{1+\qexp/2} \exof{\dnorm{\iter[\noise]}^{\qexp}}}
			{\curr[\efftime]^{\texp\qexp}}
	\leq \const_{\qexp}
		\frac{\sum_{\runalt=\start}^{\run} \iter[\step]^{1+\qexp/2} \iter[\sdev]^{\qexp}}
			{\curr[\efftime]^{1+(\texp-1/2)\qexp}}
\end{align}

We now consider two cases, depending on whether the numerator of \eqref{eq:dom2} is summable or not.
\begin{enumerate}
[left=0pt,label={\bfseries Case \arabic*:}]
\item
$\stepexp(1+\qexp/2) \geq 1 + \qexp\noisexp$.
In this case, the numerator of \eqref{eq:dom2} is summable under the theorem's assumptions, so the fraction in \eqref{eq:dom2} behaves as $\bigoh(1/\run^{(1-\stepexp) (1 + (\texp-1/2)\qexp)})$.
\item
$\stepexp(1+\qexp/2) < 1 + \qexp\noisexp$.
In this case, the numerator of \eqref{eq:dom2} is not summable under the theorem's assumptions, so the fraction in \eqref{eq:dom2} behaves as $\bigoh\parens*{\run^{1-\stepexp(1+\qexp/2) + \qexp\noisexp} \big/ \run^{(1-\stepexp) (1 + (\texp-1/2)\qexp)}}$.
\end{enumerate}
Thus, working out the various exponents, a tedious \textendash\ but otherwise straightforward \textendash\ calculation shows that there exists some $\texp\in(0,1)$ such that $\curr[P]$ is summable as long as $\noisexp < 1/2 - 1/\qexp$ and $0 \leq \stepexp < \qexp/(2+\qexp)$.
Hence, if $\step$ is sufficiently small relative to $\conf$, we conclude that
\begin{equation}
\txs
\probof{\curr[\aggnoise] \leq \Const\curr[\efftime]^{\texp}/2 \; \text{for all $\run$}}
	\geq 1 - \sum_{\run} \curr[P]
	\geq 1 - \conf/2.
\end{equation}
Finally, if $\stepexp>1/2+\noisexp$, \eqref{eq:dom-noise} is a straightforward consequence of \eqref{eq:dom1} for $\qexp=2$.

With all this in hand, the final steps of our proof proceed as follows:

\para{Closedness $\implies$ Stability}

Our assertion follows by invoking \cref{lem:strict-local}.

\para{Stability $\implies$ Closedness}

Suppose that $\set$ is not \ac{closed}.
Then there exists some pure strategy $\pure\in\pureset$ and some deviation $\purealt\not\in\pureset$ such that the deviation from $\pure$ to $\purealt$ is not costly to the deviating player.
Thus, if we consider the restriction of the game to the face spanned by $\pure$ and $\purealt$ (a single-player game with two strategies), the corresponding score difference will be
\begin{equation}
\score_{\purealt,\run} - \score_{\pure,\run}
	\geq \sum_{\runalt=\start} \iter[\step] \iter[\bias]
		+ \sum_{\runalt=\start} \iter[\step] \iter[\noise]
\end{equation}
By our standing assumptions for $\curr[\bias]$ and $\curr[\noise]$ (and Doob's martingale convergence theorem for the latter), both $\sum_{\runalt=\start} \iter[\step] \iter[\bias]$ and $\sum_{\runalt=\start} \iter[\step] \iter[\noise]$ will be bounded from below by some \as finite random variable $A_{0}$.
Since $\hker$ is steep, it follows that, \acl{wp1}, $\liminf_{\run\to\infty}(\score_{\pure,\run}) > 0$, so $\pureset$ cannot be stable.

\para{Minimality $\implies$ Irreducible Stability}

Suppose that $\set$ is \ac{minimal}.
Then, by our previous claim, $\set$ is stochastically asymptotically stable.
If $\set$ contains a proper subface $\alt\set \subsetneq \set$ that is also stochastically asymptotically stable, $\alt\set$ must be \ac{closed} by the converse implication of the first part of the theorem.
However, in that case, $\set$ would not be \ac{minimal}, a contradiction which proves our claim.

\para{Irreducible Stability $\implies$ Minimality}

For our last claim, assume that $\set$ is irreducibly stable.
By the first part of our theorem, this implies that $\set$ is \ac{closed}.
Then, if it so happens that $\set$ is not \ac{minimal}, it would contain a proper \ac{closed} subface $\alt\set \subsetneq \set$;
by the first part of our theorem, this set would be itself stochastically asymptotically stable, in contradiction to the irreducibility assumption.
This shows that $\set$ is \ac{minimal} and concludes our proof.
\end{proof}

We are only left to establish the convergence rate estimate of \cref{thm:rate}.

\begin{proof}[Proof of \cref{thm:rate}]
Going back to \eqref{eq:energy-agg} and invoking \cref{lem:score2strat} shows that there exist constants $\const_{1} > 0$ and $\const_{2} \in \R$ such that, for all $\pure_{\play} \in \pures_{\play} \setminus \pureset_{\play}$, $\play\in\players$, we have
\begin{equation}
\state_{\play\pure_{\play},\run}
	\leq \rate\parens*{\hker(1^{-}) + \curr[\energy]}
	\leq \rate(\const_{2} - \const_{1} \curr[\tau])
\end{equation}
\acl{wp1} on the events of \eqref{eq:dom}.
We thus get
\begin{equation}
\dist_{1}(\curr,\set)
	\leq \sum_{\play\in\players} \sum_{\pure_{\play} \in \pures_{\play} \setminus \pureset_{\play}} \rate(\const_{2} - \const_{1} \curr[\efftime]),
\end{equation}
and our proof is complete.
\end{proof}

As for the rate estimates of \cref{cor:rate}, the proof boils down to a simple derivation of the corresponding rate functions:

\begin{proof}[Proof of \cref{cor:rate}]
By a straightforward calculation, we have:
\begin{enumerate}
\item
If $\hker(z) = z\log z$ then $\rate(z) = \exp(1+z)$.
\item
If $\hker(z) = -4\sqrt{z}$ then $\rate(z) = 4/z^{2}$.
\item
If $\hker(z) = z^{2}/2$ then $\rate(z) = \clip{z}_{0}^{1}$.
\end{enumerate}
Our claims then follow immediatly from the rate estimate \eqref{eq:rate} of \cref{thm:closed}.
\end{proof}

\section{Numerical experiments}
\label{app:numerics}

In all our experiments, we ran the \ac{EXP3} variant of \ac{BFTRL} (\cf \cref{alg:BFTRL}) with step-size and sampling radius parameters $\curr[\step] = 0.2 \times \run^{-1/2}$ and $\curr[\mix] = 0.1 \times \run^{-0.15}$ respectively.
The algorithm was run for $\nRuns = 10^{4}$ iterations and, to reduce graphical clutter, we plotted only every third point of each trajectory.
Trajectories have been colored throughout with darker hues indicating later times (\eg light blue indicates that the trajectory is closer in time to its starting point, darker shades of blue indicate proximity to the termination time).
The algorithm's initial conditions were taken from a uniform initialization grid of the form $\init[\score] \in \{-1,0,1\}^{3}$ and perturbed by a uniform random number in $[-0.1,-0.1]$ to avoid non-generic initializations.


\begin{figure}[t]
\centering
\footnotesize
\begin{subfigure}{.48\textwidth}
\includegraphics[height=.48\textwidth]{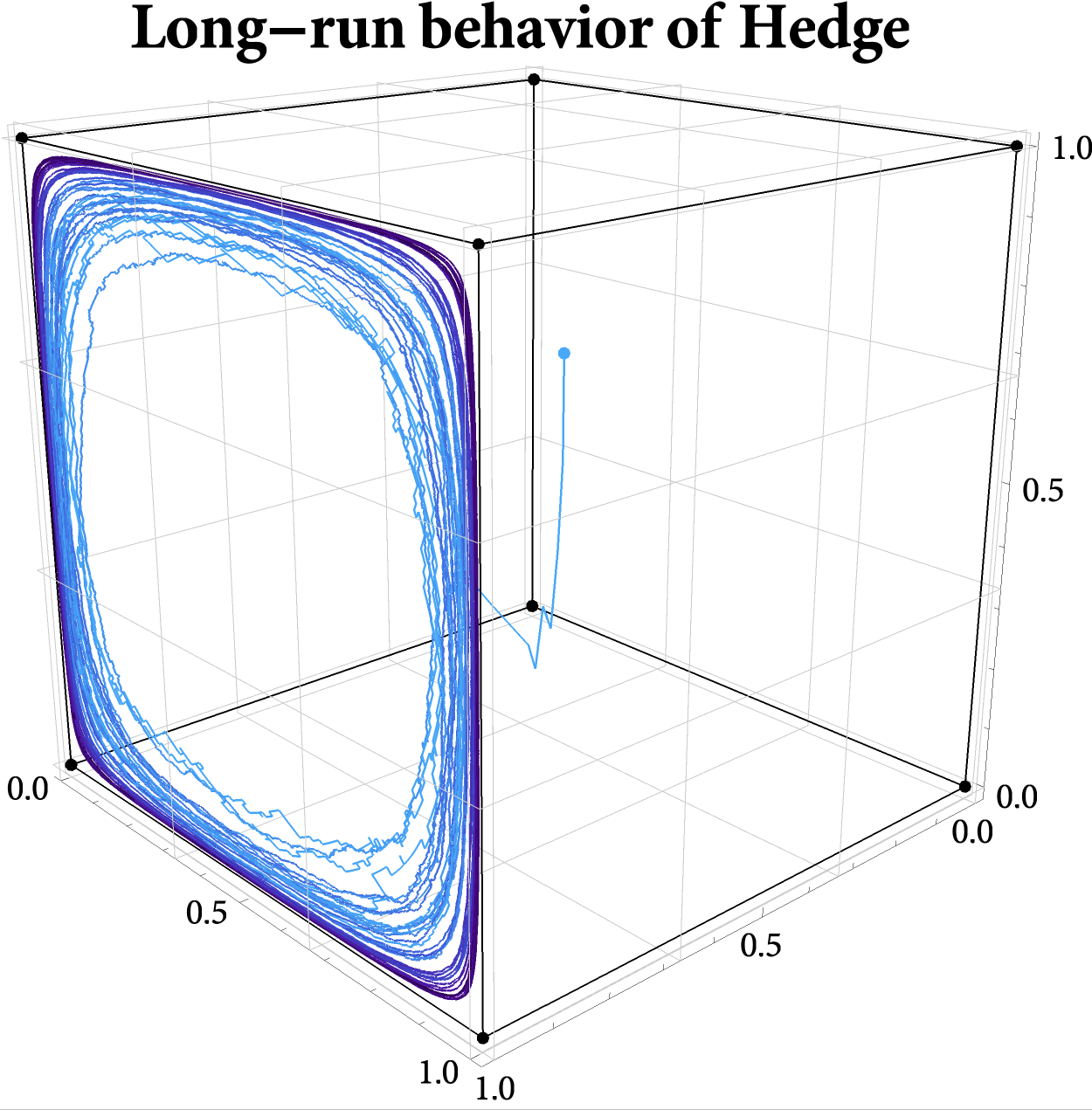}
\includegraphics[height=.48\textwidth]{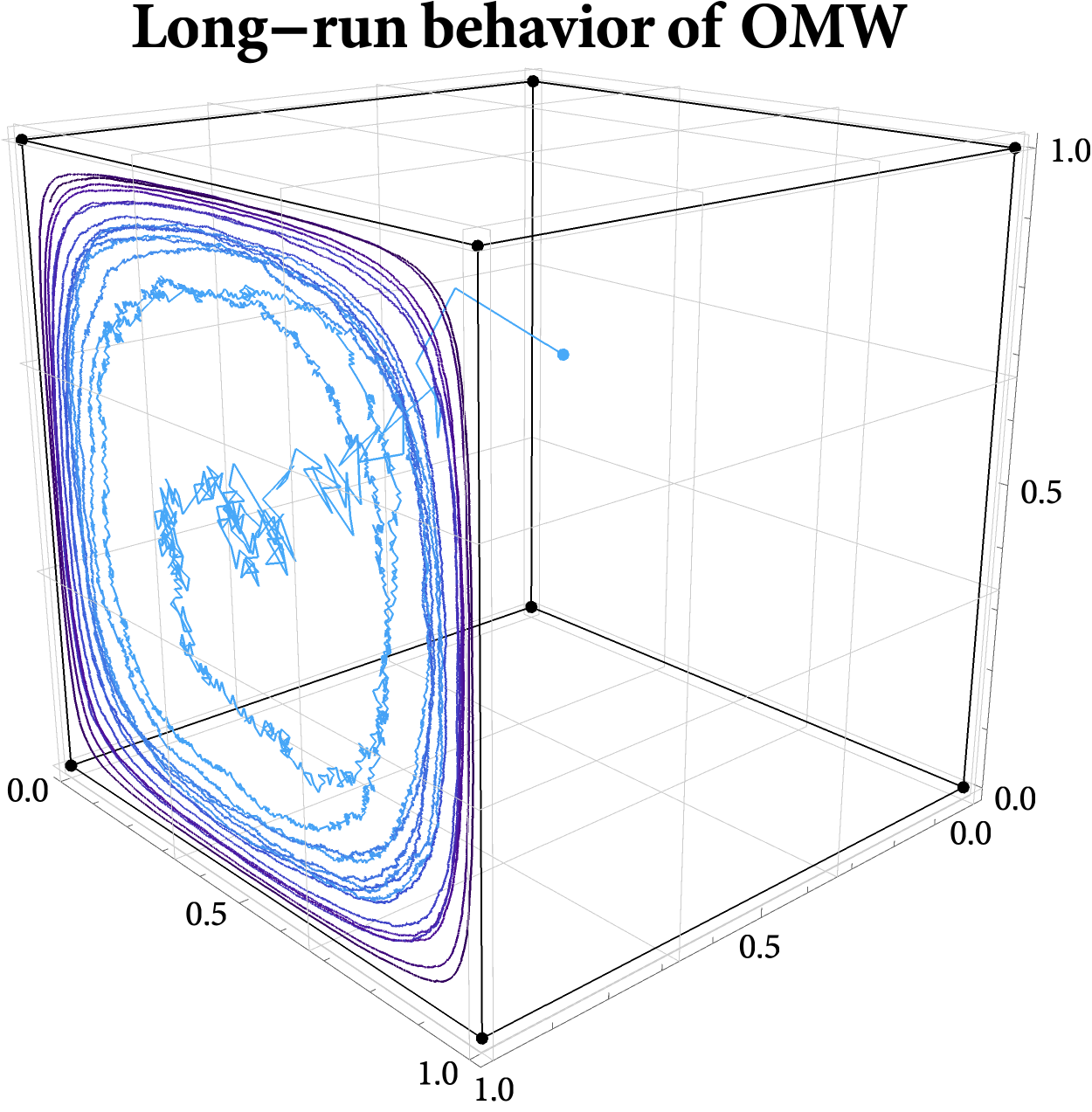}
\end{subfigure}
\hfill
\begin{subfigure}{.48\textwidth}
\includegraphics[height=.48\textwidth]{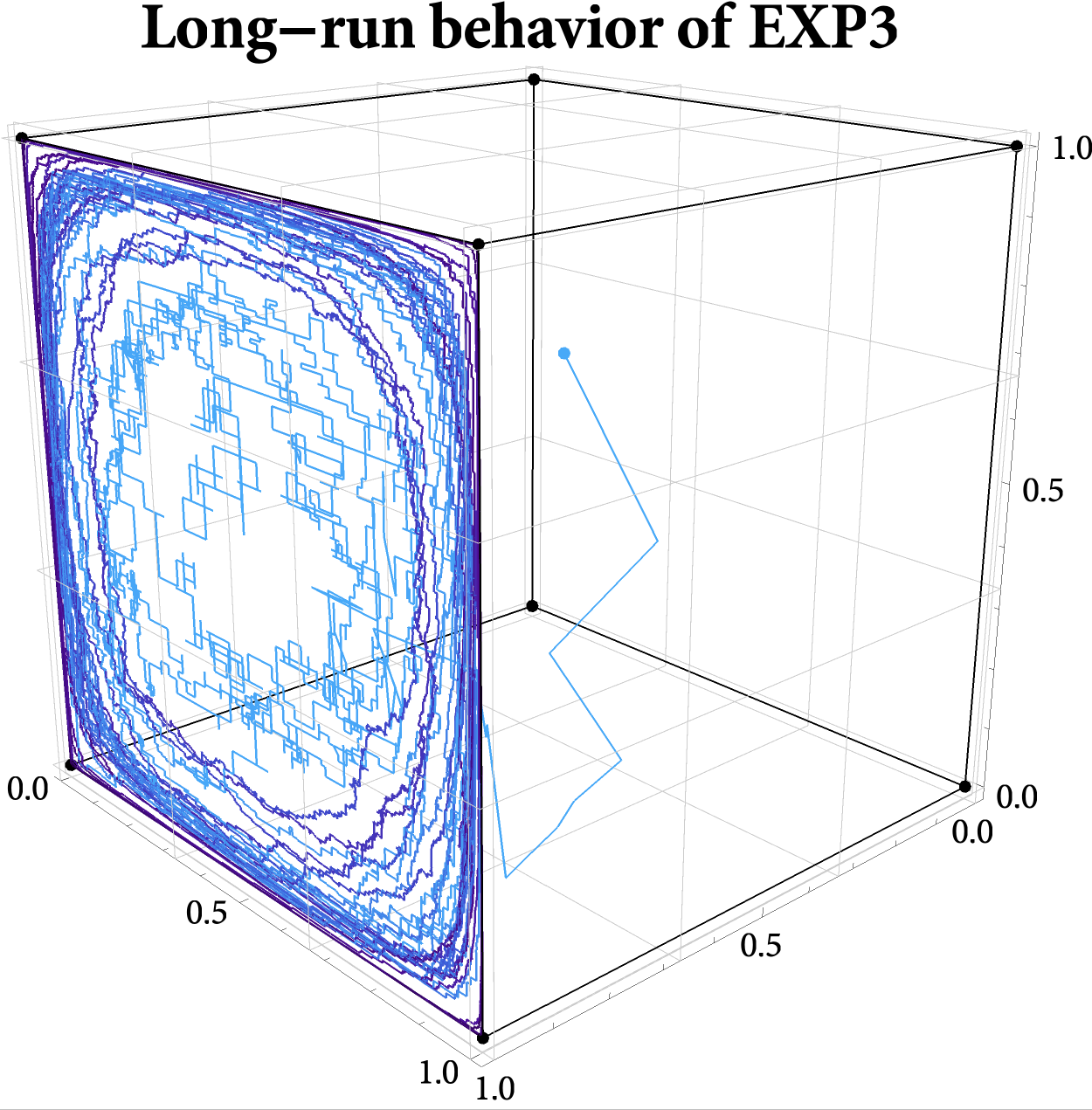}
\includegraphics[height=.48\textwidth]{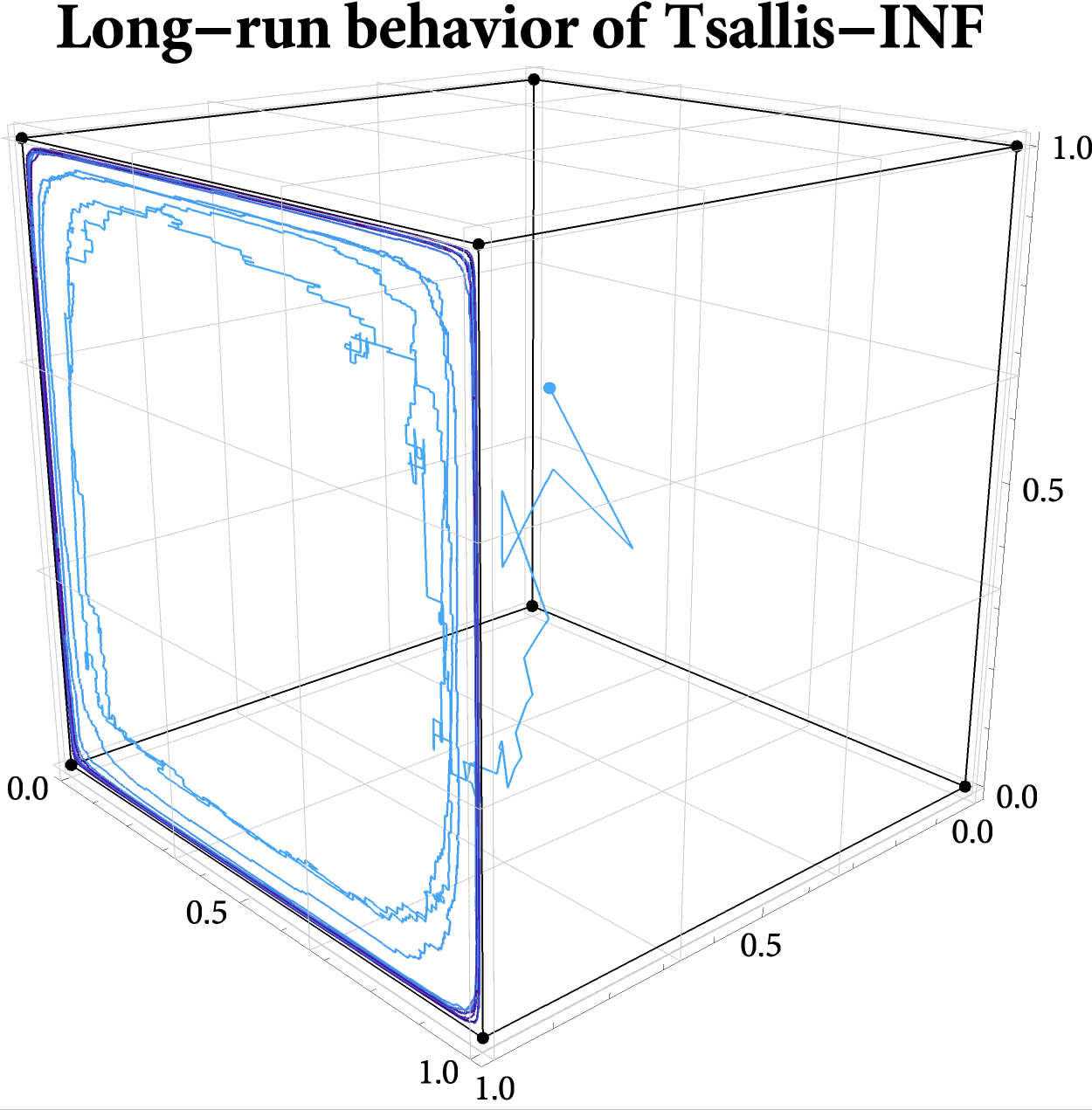}
\end{subfigure}
\caption{The long-run behavior of \crefrange{alg:FTRL}{alg:BFTRL} in a $2\times2\times2$ game.
\cref{alg:FTRL,alg:OFTRL} were run with a logit choice map as per \eqref{eq:EW};
\cref{alg:BFTRL} was run with both variants, \ac{EXP3} and \ac{Tsallis}.
All algorithms were run for $5\times10^{5}$ iterations with $\curr[\step] = 1/\run^{0.4}$ and $\curr[\mix] = 0.1/\run^{0.15}$;
color indicates time, with darker hues indicating later iterations.
The face to the left is \acl{closed}, so $\curr$ converges quickly to said face (as per \cref{thm:closed,thm:rate}).
}
\label{fig:limit}
\end{figure}


In general, the two defining elements of \eqref{eq:method} are
\begin{enumerate*}
[\itshape a\upshape)]
\item
the regularizer of the method;
and
\item
the feedback available to the players.
\end{enumerate*}
From our experiments, we conclude that methods with Euclidean regularization tend to have faster identification rates (\ie converge to the support of an equilibrium / \ac{closed} set faster), but they are more ``extreme'' than methods with an ``entropy-like'' regularizer (in the sense that players tend to play pure strategies more often).
As for the feedback available to the players, payoff-based methods tend to have higher variance (and hence a slower rate of convergence) relative to methods with full information;
otherwise, from a qualitative viewpoint, there are no perceptible differences in their limiting behavior.

Finally, optimistic / extra-oracle methods with full information exhibit better convergence properties in two-player zero-sum games (relative to standard \acs{FTRL} policies);
however, this is a fragile advantage that evaporates in the presence of noise and/or uncertainty (in which case "vanilla" and "optimistic" methods are essentially indistinguishable).
We illustrate these findings in \cref{fig:limit}.

Regarding \cref{fig:games}, the payoffs of the chosen games were normalized to $[-1,1]$ and players are assumed to choose between two actions labeled ``$O$'' and ``$1$''.
The specific tableaus are shown in the table below, next to the respective portrait (all taken from \cref{fig:games}.

\noindent
\begin{tikzpicture}
	\footnotesize
	\draw (-5, -0.5) to (-5, 2.0);
	\node[rotate=90, anchor=south] at (-5, 0.75) {\sc Dynamics};
	\node[anchor=north east] at (-2.2, 2.2)
	{\includegraphics[width=2.5cm]{Figures/StrictNash.pdf}};

	\draw (-1.75, -0.5) to (-1.75, 2.0);
	\node[rotate=90, anchor=south] at (-1.75, 0.75) {\sc Utilities};
	\node[anchor=east] at (-0.5,1.0) {P.~I};
	\node[anchor=east] at (-0.5,0.5) {P.~II};
	\node[anchor=east] at (-0.5,0.0) {P.~III};
	\node[rotate=30, anchor=west] at (-1.5, 1.45) {\small I, II, III};
	\node[rotate=30, anchor=west] at (0,1.5) {$(0,0,0)$};
	\node[rotate=30, anchor=west] at (1,1.5) {$(0,0,1)$};
	\node[rotate=30, anchor=west] at (2,1.5) {$(0,1,0)$};
	\node[rotate=30, anchor=west] at (3,1.5) {$(0,1,1)$};
	\node[rotate=30, anchor=west] at (4,1.5) {$(1,0,0)$};
	\node[rotate=30, anchor=west] at (5,1.5) {$(1,0,1)$};
	\node[rotate=30, anchor=west] at (6,1.5) {$(1,1,0)$};
	\node[rotate=30, anchor=west] at (7,1.5) {$(1,1,1)$};
	
	\draw (-0.25,-0.33) to (-0.25,1.33) to (7.75,1.33) to (7.75,-0.33) to (-0.25,-0.33);

	\node at (0.25,1.0) {1}; 
	\node at (1.25,1.0) {0}; 
	\node at (2.25,1.0) {0}; 
	\node at (3.25,1.0) {1}; 
	\node at (4.25,1.0) {0}; 
	\node at (5.25,1.0) {1}; 
	\node at (6.25,1.0) {1}; 
	\node at (7.25,1.0) {0}; 

	\node at (0.25, 0.5) {1}; 
	\node at (1.25, 0.5) {0}; 
	\node at (2.25, 0.5) {0}; 
	\node at (3.25, 0.5) {1}; 
	\node at (4.25, 0.5) {0}; 
	\node at (5.25, 0.5) {1}; 
	\node at (6.25, 0.5) {1}; 
	\node at (7.25, 0.5) {0}; 

	\node at (0.25, 0.0) {1}; 
	\node at (1.25, 0.0) {0}; 
	\node at (2.25, 0.0) {0}; 
	\node at (3.25, 0.0) {1}; 
	\node at (4.25, 0.0) {0}; 
	\node at (5.25, 0.0) {1}; 
	\node at (6.25, 0.0) {1}; 
	\node at (7.25, 0.0) {0}; 

\end{tikzpicture}

\noindent
\begin{tikzpicture}
	\footnotesize
	\draw (-5, -0.5) to (-5, 2.0);
	\node[rotate=90, anchor=south] at (-5, 0.75) {\sc Dynamics};
	\node[anchor=north east] at (-2.2, 2.2)
	{\includegraphics[width=2.5cm]{Figures/Spectator.pdf}};

	\draw (-1.75, -0.5) to (-1.75, 2.0);
	\node[rotate=90, anchor=south] at (-1.75, 0.75) {\sc Utilities};
	\node[anchor=east] at (-0.5, 1) {P.~I};
	\node[anchor=east] at (-0.5, 0.5) {P.~II};
	\node[anchor=east] at (-0.5, 0.0) {P.~III};
	\node[rotate=30, anchor=west] at (-1.5, 1.45) {\small I, II, III};
	\node[rotate=30, anchor=west] at (0,1.5) {$(0,0,0)$};
	\node[rotate=30, anchor=west] at (1,1.5) {$(0,0,1)$};
	\node[rotate=30, anchor=west] at (2,1.5) {$(0,1,0)$};
	\node[rotate=30, anchor=west] at (3,1.5) {$(0,1,1)$};
	\node[rotate=30, anchor=west] at (4,1.5) {$(1,0,0)$};
	\node[rotate=30, anchor=west] at (5,1.5) {$(1,0,1)$};
	\node[rotate=30, anchor=west] at (6,1.5) {$(1,1,0)$};
	\node[rotate=30, anchor=west] at (7,1.5) {$(1,1,1)$};
	
	\draw (-0.25,-0.33) to (-0.25,1.33) to (7.75,1.33) to (7.75,-0.33) to (-0.25,-0.33);

	\node at (0.25,1.0) {1}; 
	\node at (1.25,1.0) {1}; 
	\node at (2.25,1.0) {0}; 
	\node at (3.25,1.0) {0}; 
	\node at (4.25,1.0) {0}; 
	\node at (5.25,1.0) {0}; 
	\node at (6.25,1.0) {1}; 
	\node at (7.25,1.0) {1}; 

	\node at (0.25,0.5) {1}; 
	\node at (1.25,0.5) {1}; 
	\node at (2.25,0.5) {0}; 
	\node at (3.25,0.5) {0}; 
	\node at (4.25,0.5) {0}; 
	\node at (5.25,0.5) {0}; 
	\node at (6.25,0.5) {1}; 
	\node at (7.25,0.5) {1}; 

	\node at (0.25,0.0) {0}; 
	\node at (1.25,0.0) {0}; 
	\node at (2.25,0.0) {0}; 
	\node at (3.25,0.0) {0}; 
	\node at (4.25,0.0) {0}; 
	\node at (5.25,0.0) {0}; 
	\node at (6.25,0.0) {0}; 
	\node at (7.25,0.0) {0}; 

\end{tikzpicture}

\noindent
\begin{tikzpicture}
	\footnotesize
	\draw (-5, -0.5) to (-5, 2.0);
	\node[rotate=90, anchor=south] at (-5, 0.75) {\sc Dynamics};
	\node[anchor=north east] at (-2.2, 2.2)
	{\includegraphics[width=2.5cm]{Figures/TwistedMP.pdf}};

	\draw (-1.75, -0.5) to (-1.75, 2.0);
	\node[rotate=90, anchor=south] at (-1.75, 0.75) {\sc Utilities};
	\node[anchor=east] at (-0.5, 1) {P.~I};
	\node[anchor=east] at (-0.5, 0.5) {P.~II};
	\node[anchor=east] at (-0.5, 0.0) {P.~III};
	\node[rotate=30, anchor=west] at (-1.5, 1.45) {\small I, II, III};
	\node[rotate=30, anchor=west] at (0,1.5) {$(0,0,0)$};
	\node[rotate=30, anchor=west] at (1,1.5) {$(0,0,1)$};
	\node[rotate=30, anchor=west] at (2,1.5) {$(0,1,0)$};
	\node[rotate=30, anchor=west] at (3,1.5) {$(0,1,1)$};
	\node[rotate=30, anchor=west] at (4,1.5) {$(1,0,0)$};
	\node[rotate=30, anchor=west] at (5,1.5) {$(1,0,1)$};
	\node[rotate=30, anchor=west] at (6,1.5) {$(1,1,0)$};
	\node[rotate=30, anchor=west] at (7,1.5) {$(1,1,1)$};
	
	\draw (-0.25,-0.33) to (-0.25,1.33) to (7.75,1.33) to (7.75,-0.33) to (-0.25,-0.33);

	\node at (0.25,1.0) {0}; 
	\node at (1.25,1.0) {0}; 
	\node at (2.25,1.0) {0}; 
	\node at (3.25,1.0) {0}; 
	\node at (4.25,1.0) {0.1}; 
	\node at (5.25,1.0) {0.1}; 
	\node at (6.25,1.0) {0.1}; 
	\node at (7.25,1.0) {0.1}; 

	\node at (0.25,0.5) {1}; 
	\node at (1.25,0.5) {0}; 
	\node at (2.25,0.5) {0}; 
	\node at (3.25,0.5) {1}; 
	\node at (4.25,0.5) {0}; 
	\node at (5.25,0.5) {1}; 
	\node at (6.25,0.5) {1}; 
	\node at (7.25,0.5) {0}; 

	\node at (0.25,0.0) {0}; 
	\node at (1.25,0.0) {1}; 
	\node at (2.25,0.0) {1}; 
	\node at (3.25,0.0) {0}; 
	\node at (4.25,0.0) {1}; 
	\node at (5.25,0.0) {0}; 
	\node at (6.25,0.0) {0}; 
	\node at (7.25,0.0) {1}; 

\end{tikzpicture}

\noindent
\begin{tikzpicture}
	\footnotesize
	\draw (-5, -0.5) to (-5, 2.0);
	\node[rotate=90, anchor=south] at (-5, 0.75) {\sc Dynamics};
	\node[anchor=north east] at (-2.2, 2.2)
	{\includegraphics[width=2.5cm]{Figures/OutsideMP.pdf}};

	\draw (-1.75, -0.5) to (-1.75, 2.0);
	\node[rotate=90, anchor=south] at (-1.75, 0.75) {\sc Utilities};
	\node[anchor=east] at (-0.5, 1) {P.~I};
	\node[anchor=east] at (-0.5, 0.5) {P.~II};
	\node[anchor=east] at (-0.5, 0.0) {P.~III};
	\node[rotate=30, anchor=west] at (-1.5, 1.45) {\small I, II, III};
	\node[rotate=30, anchor=west] at (0,1.5) {$(0,0,0)$};
	\node[rotate=30, anchor=west] at (1,1.5) {$(0,0,1)$};
	\node[rotate=30, anchor=west] at (2,1.5) {$(0,1,0)$};
	\node[rotate=30, anchor=west] at (3,1.5) {$(0,1,1)$};
	\node[rotate=30, anchor=west] at (4,1.5) {$(1,0,0)$};
	\node[rotate=30, anchor=west] at (5,1.5) {$(1,0,1)$};
	\node[rotate=30, anchor=west] at (6,1.5) {$(1,1,0)$};
	\node[rotate=30, anchor=west] at (7,1.5) {$(1,1,1)$};
	
	\draw (-0.25,-0.33) to (-0.25,1.33) to (7.75,1.33) to (7.75,-0.33) to (-0.25,-0.33);

	\node at (0.25,1.0) {-1}; 
	\node at (1.25,1.0) {1}; 
	\node at (2.25,1.0) {-1}; 
	\node at (3.25,1.0) {1}; 
	\node at (4.25,1.0) {1}; 
	\node at (5.25,1.0) {-1}; 
	\node at (6.25,1.0) {1}; 
	\node at (7.25,1.0) {-1}; 

	\node at (0.25,0.5) {1}; 
	\node at (1.25,0.5) {1}; 
	\node at (2.25,0.5) {-1}; 
	\node at (3.25,0.5) {-1}; 
	\node at (4.25,0.5) {-1}; 
	\node at (5.25,0.5) {1}; 
	\node at (6.25,0.5) {1}; 
	\node at (7.25,0.5) {-1}; 

	\node at (0.25,0.0) {1}; 
	\node at (1.25,0.0) {-1}; 
	\node at (2.25,0.0) {-1}; 
	\node at (3.25,0.0) {1}; 
	\node at (4.25,0.0) {-1}; 
	\node at (5.25,0.0) {1}; 
	\node at (6.25,0.0) {1}; 
	\node at (7.25,0.0) {-1}; 

\end{tikzpicture}

\bibliographystyle{icml}
\bibliography{bibtex/IEEEabrv,bibtex/Bibliography-PM}

\end{document}